\newtheorem{theorem}{Theorem}
\newtheorem{proposition}{Proposition}
\newtheorem{lemma}[proposition]{Lemma}
\newtheorem{corollary}[proposition]{Corollary}
\theoremstyle{definition}
\theoremstyle{remark}
\newtheorem{remark}[proposition]{Remark}
\numberwithin{equation}{section}
\newcommand\R{{\ensuremath {\mathbb R} }}
\newcommand\C{{\ensuremath {\mathbb C} }}
\newcommand\N{{\ensuremath {\mathbb N} }}
\renewcommand\phi{\varphi}
\renewcommand\iint{\int \hspace{-0.3cm} \int}
\renewcommand\le{\leqslant}
\renewcommand\ge{\geqslant}
\renewcommand\epsilon{\varepsilon}
\renewcommand\hat{\widehat}
\renewcommand\tilde{\widetilde}
\renewcommand\bar{\overline}
\newcommand{\Dref}{D_{\textnormal{ref}}}
\newcommand{\gH}{\mathfrak{H}}
\newcommand{\gS}{\mathfrak{S}}
\newcommand{\cP}{\mathcal{P}}
\newcommand{\cD}{\mathcal{D}}
\newcommand{\cH}{\mathcal{H}}
\newcommand\ii{{\ensuremath {\infty}}}
\newcommand\projneg[1]{{\ensuremath{\chi_{(-\ii,\mu]}\left(#1\right)}}}
\newcommand\projnego[1]{{\ensuremath{\chi_{(-\ii,0]}\left(#1\right)}}}
\newcommand\projnegplus[1]{{\ensuremath{\chi_{(-\ii,\mu_+]}\left(#1\right)}}}
\newcommand\projnegminus[1]{{\ensuremath{\chi_{(-\ii,\mu_-]}\left(#1\right)}}}
\newcommand\projnegpm[1]{{\ensuremath{\chi_{(-\ii,\mu_\pm]}\left(#1\right)}}}
\newcommand{\Ebdf}[2]{{\ensuremath{\cE^{\rm{BDF}}_{#1}\left(#2\right)}}}
\renewcommand\d[1]{{\ensuremath{\,\text{d}#1}}}
\newcommand{\cE}{\mathcal{E}}
\newcommand{\F}{\mathcal{F}}
\newcommand{\cK}{\mathcal{K}}
\newcommand{\cM}{\mathcal{M}}
\newcommand{\cO}{\mathcal{O}}
\newcommand{\cC}{\mathcal{C}}
\newcommand{\cU}{\mathcal{U}}
\newcommand{\cF}{\mathcal{F}}
\newcommand{\cQ}{\mathcal{Q}}
\newcommand{\cX}{\mathcal{X}}
\newcommand{\cV}{\mathcal{V}}
\newcommand{\CJ}{\mathscr{C}}
\newcommand{\alp}{\boldsymbol{\alpha}}
\newcommand{\ind}{\boldsymbol{1}}
\newcommand{\Pm}{P^0_{\Lambda,-}}
\newcommand{\Pp}{P^0_{\Lambda,+}}
\newcommand{\Ppm}{P^0_{\Lambda,\pm}}
\newcommand{\Ql}{Q_{\textnormal{lin}}}
\newcommand{\nur}{{\nu_{\textnormal{ren}}}}
\newcommand{\rhoref}{{\rho_{\textnormal{ref}}}}
\newcommand{\rhorefpm}{{\rho_{\textnormal{ref},\pm}}}
\newcommand{\rhorefp}{{\rho_{\textnormal{ref},+}}}
\newcommand{\rhorefm}{{\rho_{\textnormal{ref},-}}}
\newcommand{\Qref}{{Q_{\textnormal{ref}}}}
\newcommand{\Qrefpm}{{Q_{\textnormal{ref},\pm}}}
\newcommand{\Qrefp}{{Q_{\textnormal{ref},+}}}
\newcommand{\Qrefm}{{Q_{\textnormal{ref},-}}}
\newcommand{\Pref}{{P_{\textnormal{ref}}}}
\newcommand{\Qlinpm}{{Q_{\textnormal{lin},\pm}}}
\newcommand{\Qlinp}{{Q_{\textnormal{lin},+}}}
\newcommand{\Qlinpr}{{Q_{\textnormal{lin},+,r}}}
\newcommand{\Qlinm}{{Q_{\textnormal{lin},-}}}
\DeclareMathOperator{\tr}{Tr}
\begin{document}
 
\title[Charge renormalization and pair production]{Charge renormalization and static electron/positron pair production for a nonlinear Dirac model with weak interactions}

\author{Julien Sabin}

\address{D\'epartement de Math\'ematiques, CNRS UMR 8088, Universit\'e de Cergy-Pontoise, 95000 Cergy-Pontoise, France}

\email{julien.sabin@u-cergy.fr}

\date{\today}

\begin{abstract}
The Hartree-Fock approximation of Quantum Electrodynamics provides a rigorous
framework for the description of relativistic electrons in external fields.
This nonlinear model takes into account the infinitely many virtual electrons
of Dirac's vacuum as well as the Coulomb interactions between all the
particles. The state of the system is an infinite-rank projection satisfying a
nonlinear equation. In this paper, we construct solutions to this equation, in
the regime of weak interactions (that is, small coupling constant $\alpha$),
and strong external fields (that is, large atomic charge $Z$ such that $\alpha
Z:=\kappa$ stays fixed). In this regime, we are able to remove the ultraviolet
cut-off $\Lambda$ as soon as $\alpha\log\Lambda$ stays fixed. As an
application of this result, we compare the critical strength
$\kappa_c(\alpha)$ of the external potential needed to produce an additional
particle in the vacuum, when $\alpha=0$ or $\alpha>0$. We prove that
$\lim_{\alpha\to0}\kappa_c(\alpha)/\kappa_c(0)> 1$, and we identify the limit
exactly. Because of the dielectric behavior of Dirac's vacuum, static
electron/positron pair production occurs in the interacting case for a stronger field that in the non-interacting case, which is a mere consequence of charge renormalization.
\end{abstract}

\maketitle 

\section*{Introduction}
 
The relativistic kinetic energy of a quantum electron is described by the Dirac operator $D^0=-i\alp\cdot\nabla+\beta$ acting on $L^2(\R^3,\C^4)$, where $\alp=(\alpha_j)_{j=1,2,3}$ and 
$$\alpha_j=\left(\begin{array}{cc}
         0 & \sigma_j \\
         \sigma_j & 0
        \end{array}\right), \quad\forall j=1,2,3,\qquad\beta=\left(\begin{array}{cc}
                                                                     \text{Id}_{\C^2} & 0\\
                                                                     0 & -\text{Id}_{\C^2}
                                                                    \end{array}\right),
$$
$$\sigma_1=\left(\begin{array}{cc}
                  0 & 1\\
                  1 & 0
                 \end{array}\right),\qquad
     \sigma_2=\left(\begin{array}{cc}
                     0 & -i\\
                     i & 0
                    \end{array}\right),\qquad
     \sigma_3=\left(\begin{array}{cc}
		      1 & 0\\
		      0 & -1
                    \end{array}\right).
$$
Here, the units are taken so that the reduced Planck's constant $\hbar$, the speed of light $c$, and the mass of an electron $m_e$, are all set to 1. The spectrum of the Dirac operator is $\sigma(D^0)=(-\ii,-1]\cup[1,+\ii)$. To explain why real electrons have a positive kinetic energy, Dirac suggested to postulate that, in nature, all the negative kinetic energy states of the Dirac operator are already occupied by virtual electrons.  Thus, the relativistic vacuum is composed of infinitely many charged, virtual particles, usually called the \emph{Dirac sea}. 

In an external electromagnetic field, the distribution of charge of the Dirac sea changes. This phenomenon is called \emph{vacuum polarization}.  In the same spirit, a negative kinetic energy virtual electron can turn into a positive kinetic energy real electron when given enough energy. This transformation results in a ``hole'' in the Dirac sea, which itself has a positive energy with respect to the fully filled Dirac sea. This hole can thus be observed and is identified with the anti-particle of the electron, the positron. Hence, such a process is labeled \emph{electron/positron pair production}. In this paper, we are interested in the case where this energy is brought to the vacuum by an external, classical electric field. This situation has been originally considered by Sauter\cite{Sauter-31}, Heisenberg-Euler \cite{HeiEul-36}, and Schwinger \cite{Schwinger-51a}. Since it is non-negligible compared to the other 
sources of pair production only for huge electric fields, it has not yet been observed experimentally. However, the creation of such a strong field could be achieved in the near future, due to recent progress in laser physics \cite{Dunne-09}. 

There exists two approaches to the problem of electron/positron pair production under an external electric potential $V$. The first one consists in studying the vacuum in its ground state, in the presence of $V$. One tries to determine if the vacuum then contains real particles or not, if $V$ is strong enough. In other words, for large $V$ one expects that it is energetically more favorable for the vacuum to contain real electrons or positrons, than to contain no real particle at all.  If the vacuum contains real particles, we say that the external potential $V$ produces pairs, or more precisely produces particles. Indeed, for this formulation of electron/positron pair production, called \emph{static} pair production, it is not necessary that the number of electrons created is the same as the number of positrons created. We still refer to it as pair production though, because one can think of it as a localized particle together with an anti-particle at infinity. The second approach consists in studying the time evolution of the vacuum with no electron and no positron (the \emph{free} vacuum), when the external field is compactly supported in time. One wants to determine if, for large times, some pairs still exist. In this case, notice that it is necessarily pairs that are produced since the total charge of the system is conserved. The second approach is thus called \emph{dynamical} pair production. While dynamical pair production may be more relevant from a physical point of view, it is also much harder to study mathematically. 

Only a few mathematical results are known about electron/positron pair production. Indeed, one has to deal with infinitely many relativistic particles, interacting with an external field and possibly with each other. The correct theory to describe such systems is Quantum Electrodynamics, which has only been formulated in a perturbative fashion so far. Even the apparently simpler many-body theory of $N$ relativistic electrons is not well-defined mathematically. For the time being, one has to make approximations to obtain results about such systems. 

The case where the interactions between the particles are neglected is one of these approximations. In this setting, everything reduces to the spectral theory of the one-body operator $D^0+V$, for which the issue of dealing with an infinite number of particles does not exist anymore. The problem of static pair production has been studied by Klaus and Scharf \cite{KlaSch-77a}. The question of dynamical pair production has been tackled in the non-interacting case by Pickl and D\"urr \cite{PicDur-08}, who proved that pairs were created for a certain class of potentials adiabatically evolving in time. This relies on a subtle analysis of the dispersive properties of the essential spectrum of the Dirac operator. 

When the interactions between the particles are taken into account, one can perform a mean-field approximation to again reduce the problem to the one-body space. The additional difficulty compared to the case without interactions is that the state of the system is not described by a wavefunction anymore, but rather by a \emph{one-body density matrix}, an infinite-rank operator on the one-body space (usually an orthogonal projection). The idea of performing a mean-field approximation of Quantum Electrodynamics goes back to Chaix and Iracane \cite{ChaIra-89}. The first rigorous models formulating these ideas were developed by Hainzl and Siedentop \cite{HaiSie-03}, and by Bach, Barbaroux, Helffer, and Siedentop \cite{BacBarHelSie-99}. Later on, another related model, which took into account vacuum polarization, was introduced by Hainzl, Lewin, and S\'er\'e in 2005 \cite{HaiLewSer-05a}, and extended in a series of articles by Gravejat, Hainzl, Lewin, S\'er\'e and Solovej \cite{HaiLewSer-05b,HaiLewSol-07,HaiLewSer-08,GraLewSer-09,GraLewSer-11,GraHaiLewSer-12}. In the present article, we will use the latter, called the \emph{Bogoliubov-Dirac-Fock} model. In particular, the polarized vacuum is constructed in a rigorous fashion as a ground state of a certain nonlinear energy functional. Static electron/positron pair production has already been studied by the author in this framework in a previous work \cite{Sabin-11}.

In the case without interactions, the one-body density matrix of the vacuum polarized by a potential $V$ is given by
$$P=\projnego{D^0+V}.$$
If $V$ is smooth enough, the eigenvalues of $D^0+\kappa V$ are continuous with respect to the parameter $\kappa$. Klaus and Scharf \cite{KlaSch-77a} proved (in the case $V\le0$) that a particle is created in the vacuum if one of these eigenvalues \emph{crosses 0} when increasing $\kappa$. If it crosses from the right to the left, it corresponds to the creation of an electron, while if it crosses from the left to the right, it corresponds to the creation of a positron. By the min-max principle of \cite{DolEstSer-00}, in the case $V\le0$ eigenvalues can only decrease so only electrons are created. This picture was confirmed by Hainzl \cite{Hainzl-04}, from the point of view of the charge density of the system. When interactions are turned on, the one-body density matrix of the polarized vacuum satisfies a nonlinear equation of the form 
\begin{equation}\label{eq:BDF-equation-intro-0}
P=\projnego{D^0+V+\alpha X_P}, 
\end{equation}
where $X_P$ is an operator depending on $P$ (see Equation \eqref{eq:BDF-equation} below) and $\alpha>0$ is a coupling constant controlling the size of the non-linearity (or, equivalently, the size of the interactions). The number $\alpha$ can also be interpreted as the square of the charge of an electron, $\alpha=e^2$, in which case it should also appear in front of the electric potential $V$. Since we are interested in pair production in strong fields, it is also natural to consider a fixed potential $V$ (interpreted as the shape of a nucleus, for instance), which strength is controlled by a parameter $Z>0$ (interpreted as the charge of the nucleus). The equation satisfied by $P$ is then
\begin{equation}\label{eq:BDF-equation-intro}
P=\projnego{D^0+(\alpha Z)V+\alpha X_P}, 
\end{equation}
As a consequence, studying \eqref{eq:BDF-equation-intro-0} for $V$ fixed and $\alpha>0$ small is the same as studying \eqref{eq:BDF-equation-intro} for $\alpha Z$ fixed and $\alpha>0$ small, which is a well-known regime in the physics literature \cite{MohPluSof-98,ReiGreAre-71,GreRei-77,Shabaev-02}. In the sequel, we will thus write $\kappa=\alpha Z$. The existence of a solution to \eqref{eq:BDF-equation-intro} is non trivial. Actually, one can show that there is \emph{no} projection on $L^2(\R^3,\C^4)$ satisfying \eqref{eq:BDF-equation-intro}. One way to circumvent this problem is to impose an ultraviolet cut-off $\Lambda>0$, and to solve instead
\begin{equation}\label{eq:BDF-equation-intro-2}
 P=\projnego{\Pi_\Lambda(D^0+\kappa V+\alpha X_P)\Pi_\Lambda}, 
\end{equation}
where $\Pi_\Lambda$ is the multiplication operator in Fourier space by $\ind_{B(0,\Lambda)}$. We will show (Theorem \ref{thm:thm1}) that for a general class of $V$, Equation \eqref{eq:BDF-equation-intro-2} has a unique solution for $\kappa$ fixed if $\alpha$ is small enough, $\Lambda$ is large enough, and $\alpha\log\Lambda$ fixed. Furthermore, we will show that this projection $P$ satisfies
$$P-\projnego{D^0+\frac{\kappa}{1+\frac{2}{3\pi}\alpha\log\Lambda}V}\longrightarrow0$$
in the Hilbert-Schmidt topology, as $\alpha\to0$ with $\alpha\log\Lambda$ and $\kappa$ fixed. We interpret this result as a \emph{first order charge renormalization} of our model, meaning that we can compute the screening induced by the polarized vacuum explicitly in terms of $\alpha$ and $\Lambda$. The term ``first order'' refers here to the fact that this is valid only in the limit $\alpha\to0$. Roughly speaking, if we put an external potential $V$ in the vacuum, what we observe in return (in the limit $\alpha\to0$) is a screened potential $V_{\rm{scr}}$, with
\begin{equation}\label{eq:V-renorm}
V_{\rm{scr}}=\frac{1}{1+\frac{2}{3\pi}\alpha\log\Lambda}V. 
\end{equation}
A similar formula was already obtained in \cite{GraLewSer-09}, and it was valid \emph{for all $\alpha>0$}. However, in \cite{GraLewSer-09}, the non-linearity $X_P$ is simpler than the one we consider in this article (namely, it does not contain the so-called \emph{exchange term}). Here, we obtain \eqref{eq:V-renorm} only in the limit $\alpha\to0$, but we manage to include the exchange term. 

Finally, we study static pair production for this model with interactions. Because of the nonlinearity $X_P$, it is not clear if the picture provided by Klaus and Scharf is still valid, that is if an eigenvalue of $D(\alpha,\kappa V):=\Pi_\Lambda(D^0+\kappa V+\alpha X_P)\Pi_\Lambda$ crosses 0 when $\kappa$ is increased, then a particle is created. Instead, we will compare two operators
$$P_+=\projnegplus{\Pi_\Lambda(D^0+\kappa V+\alpha X_{P_+})\Pi_\Lambda},$$
and 
$$P_-=\projnegminus{\Pi_\Lambda(D^0+\kappa V+\alpha X_{P_-})\Pi_\Lambda},$$
where $-1<\mu_-<0<\mu_+<1$. If the operator $D^0+\kappa V$ has an eigenvalue $\lambda(\kappa V)\in(\mu_-,\mu_+)$ and $\alpha$ is small enough, $P_-$ is interpreted as the polarized vacuum with a certain charge $q$, while $P_+$ is the polarized vacuum with charge $q+1$. We want to determine which one is the most advantageous energetically. To do so, we introduce the function
$$F(\kappa,\alpha)=\cE(P_+)-\cE(P_-),$$
where $\cE$ is the energy functional which will be defined in the next section. Our second result (Theorem \ref{thm:thm2}) provides an explicit formula valid for $\alpha$ small enough and $\alpha\log\Lambda$ fixed
$$F(\kappa,\alpha)=\lambda(\kappa V_{\rm{scr}})+\cO_{\alpha\to0}(\alpha),$$
where $V_{\rm{scr}}$ is given by \eqref{eq:V-renorm}. In particular, if $\lambda(\kappa V)$ is positive for $\kappa<\kappa_c$ and negative for $\kappa>\kappa_c$, then the polarized vacuum has charge $q$ for $\kappa<\kappa_c(\alpha)$ and charge $q+1$ for $\kappa>\kappa_c(\alpha)$: an electron is created; if $\lambda(\kappa V)$ is negative for $\kappa<\kappa_c$ and positive for $\kappa>\kappa_c$, then a positron is created. Furthermore, we have 
$$\kappa_c(\alpha)\longrightarrow\left(1+\frac{2}{3\pi}\alpha\log\Lambda\right)\kappa_c>\kappa_c$$
as $\alpha\to0$ with $\alpha\log\Lambda$ fixed. Hence, due to the screening of the polarized vacuum, static pair production in the interacting case requires a stronger potential than in the non-interacting case.

In principle, we can expand explicitly the function $F$ to any order. For instance, we can compute the next order of $F$ in $\alpha$, that is $\lim_{\alpha\to0}(F(\kappa,\alpha)-\lambda(\kappa V_{\rm{scr}}))/\alpha$. We are interested in the sign of this quantity, which would determine the growth of $F$ to the next order. Unfortunately, we were not able to determine its sign, due to the complicated form of this term. It remains an interesting open problem to compute this sign. Finally, let us mention \cite{Sok-12}, where the same regime $\alpha\to0$, $\alpha\log\Lambda$ fixed is studied without external field and for the vacuum of charge 1, interpreted as an electron together with the Dirac sea. 

The paper is organized as follows. In Section 1, we introduce the model and present our main results. The proofs are given in Section 2. 

\bigskip

\noindent\textbf{Acknowledgments.} I am indebted to Mathieu Lewin for his advices and his help. I also acknowledge support from the ERC MNIQS-258023 and from the ANR ``NoNAP'' (ANR-10-BLAN 0101) of the French ministry of research. 

\section{Main Results}

In this section, we first explain the setting of the Bogoliubov-Dirac-Fock (BDF) model, and we state our results later.  

\subsection{The Bogoliubov-Dirac-Fock model} The content of this section can be found in \cite{HaiLewSer-05a,HaiLewSer-05b, HaiLewSol-07,HaiLewSer-08,GraLewSer-09}. Let $\gH=L^2(\R^3,\C^4)$. For any $\Lambda>0$, we define
$$\gH_\Lambda:=\{f\in\gH,\quad\text{supp}\hat{f}\subset B(0,\Lambda)\},$$
where the hat denotes the Fourier transform of $f$, defined for $f$ integrable by 
$$\widehat{f}(\xi):=\frac{1}{(2\pi)^{3/2}}\int_{\R^3}f(x)e^{-ix\cdot\xi}\d{x},$$
and $B(0,\Lambda)$ denotes the ball of center 0 and radius $\Lambda$ in $\C^4$. We call such a $\Lambda$ a \emph{cut-off} in Fourier space. Since the Dirac operator $D^0$ is a multiplication operator by $p\mapsto\alp\cdot p+\beta$ in Fourier space, it stabilizes $\gH_\Lambda$ and its restriction to $\gH_\Lambda$ as an operator is denoted by $D^0_\Lambda$. Notice that $D^0_\Lambda$ is bounded and that $\|D^0_\Lambda\|_{\gH_\Lambda\to\gH_\Lambda}=\sqrt{1+\Lambda^2}$. We use the notation $\Pm=\chi_{(-\ii,0]}(D^0_\Lambda)$ (resp. $\Pp=\chi_{[0,+\ii)}(D^0_\Lambda)$) for the negative (resp. positive) spectral projection of the Dirac operator $D^0_\Lambda$. For any bounded operator $A$ over $\gH_\Lambda$ and any $\epsilon,\epsilon'=+,-$, we define the block matrices $A_{\epsilon\epsilon'}:=P^0_{\Lambda,\epsilon}AP^0_{\Lambda,\epsilon'}$. We also define the associated spectral subspaces $\gH_{\Lambda,\pm}:=P^0_{\Lambda,\pm}\gH_\Lambda$. For any separable Hilbert space $\cH$ and any $p>0$, we 
denote by $\gS_p(\cH)$ the Schatten class of all bounded operators $A$ over $\cH$ such that $\|A\|_{\gS_p}^p:=\tr|A|^p<+\ii$, where $|A|=A^*A$.  We then introduce the vector space
$$\gS_{1,\Pm}(\gH_\Lambda):=\left\{Q\in\gS_2(\gH_\Lambda),\quad Q_{++}, Q_{--}\in\gS_1(\gH_\Lambda)\right\}.$$
It is a Banach space endowed with the norm 
$$\|Q\|_{1,\Pm}:=\|Q_{++}\|_{\gS_1}+\|Q_{--}\|_{\gS_1}+\|Q_{+-}\|_{\gS_2}+\|Q_{-+}\|_{\gS_2}.$$
For any $Q\in\gS_{1,\Pm}$, we define its generalized trace by 
$$\tr_0(Q):=\tr(Q_{++}+Q_{--}).$$
Any $Q\in\gS_{1,\Pm}$ has an integral kernel $Q(\cdot,\cdot)\in L^2(\R^3\times\R^3,\cM_4(\C))$, where $\cM_4(\C)$ denotes the set of all $4\times 4$ complex matrices. Thanks to the cut-off in Fourier space, it has a unique smooth representative allowing to define the \emph{density of charge} $\rho_Q$ associated to $Q$, defined by $\rho_Q(x)=\tr_{\C^4}Q(x,x)$ for all $x\in\R^3$. It satisfies $\rho_Q\in L^2(\R^3)\cap\cC$, where $\cC$ denotes the \emph{Coulomb space}, 
$$\cC:=\{f,\quad\int_{\R^3}\frac{|\hat{f}(p)|^2}{|p|^2}\d{p}<+\ii\}=\dot{H}^{-1}(\R^3).$$
For any $f,g\in\cC$, the Coulomb inner product between $f$ and $g$ is defined as
$$D(f,g):=4\pi\int_{\R^3}\frac{\hat{f}(p)\overline{\hat{g}(p)}}{|p|^2}\d{p}.$$
If $f,g$ are smooth enough (e.g. $L^{6/5}\cap\cC$), then
$$D(f,g)=\int_{\R^3}\int_{\R^3}\frac{f(x)\bar{g(y)}}{|x-y|}\d{x}\d{y}.$$
In the BDF theory, a quantum state is described by a self-adjoint operator $P$ on $\gH_\Lambda$ such that $0\le P\le 1_{\gH_\Lambda}$ in the sense of operators. Such a $P$ is called a \emph{one-body density matrix}. For instance, the operator $\Pm$ represents the quantum state of the (free) Dirac sea. For any one-body density matrix $P$, we define its \emph{renormalized one-body density matrix} by $Q=P-\Pm$. The set of all admissible quantum states is then
$$\cK_\Lambda:=\{Q\in\gS_{1,\Pm},\quad Q=Q^*,\quad -\Pm\le Q\le1-\Pm\}.$$
For any $Q$ such that $-\Pm\le Q\le 1-\Pm$, we have $Q_{++}\ge0$, $-Q_{--}\ge0$ and the inequality
$$Q^2\le Q_{++}-Q_{--},$$
so that $Q\in\gS_2$ if $Q_{++},Q_{--}\in\gS_1$. The quantity $\tr(Q_{++})$ (resp. $\tr(Q_{--})$) represents the average number of electrons (resp. positrons) of the system. Hence, the convex set $\cK_\Lambda$ represents the set of all quantum state with a finite average number of electrons and positrons. The BDF energy of any quantum state $Q\in\cK_\Lambda$ is
\begin{equation}\label{eq:BDF-energy}
 \Ebdf{V}{Q}=\tr_0(D^0_\Lambda Q)+\alpha\int_{\R^3}\rho_QV+\frac{\alpha}{2}D(\rho_Q,\rho_Q)-\frac{\alpha}{2}\int_{\R^3}\int_{\R^3}\frac{|Q(x,y)|^2}{|x-y|}\d{x}\d{y}.
\end{equation}
Here, $V$ is the external electric potential and $\alpha>0$ is the coupling constant. The first term of this energy represents the kinetic energy of the system. The second term is the interaction energy between the external potential and the density of charge $\rho_Q$ of the system. The third term, usually called \emph{direct term}, is the Coulomb electrostatic energy of the density of charge $\rho_Q$. The last term, well-known in Hartree-Fock theory, is called the \emph{exchange term}. It is well-defined for $Q\in\cK_\Lambda$ by the Hardy-Kato inequality
$$\int_{\R^3}\int_{\R^3}\frac{|Q(x,y)|^2}{|x-y|}\d{x}\d{y}\le\frac{\pi}{2}\tr(|D^0_\Lambda|Q^2),$$
the right term being finite since $D^0_\Lambda$ is bounded. Since $\Ppm D^0_\Lambda=\pm|D^0_\Lambda|\Ppm$, the kinetic energy can be rewritten as
$$\tr_0D^0_\Lambda Q=\tr|D^0_\Lambda|(Q_{++}-Q_{--}),$$
which is finite for $Q\in\cK_\Lambda$. The direct term is also well-defined since $\rho_Q\in\cC$. Finally, we assume that $V$ is given by the Coulomb potential generated by a fixed external density of charge $\nu\in\cC$ of strength $Z>0$:
 $$V=-Z\nu\star\frac{1}{|\cdot|},$$
so that we can rewrite
$$\int_{\R^3}\rho_QV=-D(\rho_Q,Z\nu).$$
With this choice of $V$ and by the Hardy-Kato inequality, we find that $\Ebdf{\nu}{Q}$ is well-defined for any $Q\in\cK_\Lambda$, and that it is bounded-below for $0\le\alpha\le 4/\pi$ since it satisfies
$$\Ebdf{\nu}{Q}\ge-\frac{\alpha Z^2}{2}D(\nu,\nu),\qquad\forall Q\in\cK_\Lambda.$$
It was proved in \cite{HaiLewSer-05b} that for any $\nu\in\cC$, for any $Z>0$, and for any $0\le\alpha<4/\pi$, the BDF energy has at least one minimizer on $\cK_\Lambda$. If $\nu\equiv0$, then $Q=0$ is the unique global minimizer of the BDF energy. If $\nu\neq0$, uniqueness does not hold in general. However, there always exists one minimizer $\bar{Q}$ such that $\bar{Q}+\Pm$ is an orthogonal projection. More precisely, this minimizer satisfies the following self-consistent equation
\begin{equation}\label{eq:BDF-equation}
\left\{\begin{array}{lll}
          \bar{Q} & = & \chi_{(-\ii,0]}(D_{\bar{Q}})-\Pm,\\
          D_{\bar{Q}} & = & \Pi_\Lambda(D^0-\alpha Z V_\nu+\alpha(V_{\rho_{\bar{Q}}}-R_{\bar{Q}}))\Pi_\Lambda,
         \end{array}\right. 
\end{equation}
where $\Pi_\Lambda$ is the multiplication by $\ind_{B(0,\Lambda)}$ in Fourier space, $V_\rho:=\rho\star|\cdot|^{-1}$ for any $\rho$, and $R_Q:=\frac{Q(x,y)}{|x-y|}$ for any $Q$. This minimizer is interpreted as the polarized vacuum in the presence of $V_\nu$. 

\subsection{Static Pair Production}\label{sec:setting} We now present the setting we use to study pair production. As we said in the introduction, we consider the vacuum polarized by a density of charge $Z\nu$. Moreover, we consider the limit of weak interactions, ie $\alpha\to0$ and strong external field $Z\to+\ii$, with $\alpha Z$ fixed. Hence, we define a new parameter,
$$\kappa:=\alpha Z.$$

We make the following assumption on $\nu\in L^2(\R^3)\cap\cC$. We assume that there exists an eigenvalue $\lambda(\kappa\nu)$ of multiplicity 1 of the operator $D^{\kappa\nu}:=D^0-\kappa V_\nu$ on $\gH$ that \emph{crosses 0} at some value $\kappa=\kappa_c\in\R$. In mathematical terms, it means that there exists $\kappa_c\in\R$ and $\epsilon>0$ such that 
\begin{itemize}
 \item For all $\kappa\in[\kappa_c-\epsilon,\kappa_c+\epsilon]$, the operator $D^{\kappa\nu}$ on $\gH$ has only one eigenvalue $\lambda(\kappa\nu)$ in some fixed neighborhood of 0, of multiplicity 1;
 \item For $\kappa\in[\kappa_c-\epsilon,\kappa_c)$, $\lambda(\kappa\nu)>0$, and for $\kappa\in(\kappa_c,\kappa_c+\epsilon]$, $\lambda(\kappa\nu)<0$. 
\end{itemize}
The assumption of multiplicity 1 is made for simplification and it can be easily removed. For the sake of clarity, we will also absorb $\kappa_c$ in $\nu$ by assuming in the following that 
$$\boxed{
\kappa_c=1.
}
$$
We could also assume that $\lambda(\kappa\nu)$ changes sign the other way around, which only changes the physical interpretation: a positron is created instead of an electron. Of course, since $\nu\in L^2(\R^3)\cap\cC$, the eigenvalue $\lambda(\kappa\nu)$ is a continuous function of $\kappa$. Hence, there exist $-1<\mu_-<0<\mu_+<1$ such that for all $\kappa\in[1-\epsilon,1+\epsilon]$, $\lambda(\kappa)\in(\mu_-,\mu_+)$ and $\mu_\pm\notin\sigma(D^{\kappa\nu})$.

The interacting vacuum polarized by $Z\nu$ is a solution to the equation
\begin{equation}\label{eq:BDF-equation-2}
\left\{\begin{array}{lll}
          Q & = & \chi_{(-\ii,0]}(D_Q)-\Pm,\\
          D_Q & = & \Pi_\Lambda(D^{\kappa\nu}+\alpha(V_{\rho_Q}-R_Q))\Pi_\Lambda.
         \end{array}\right. 
\end{equation}
Here, the operator $D_Q$ is seen an operator on $\gH_\Lambda$. The non-interacting vacuum in presence of the density of charge $Z\nu$ is given by
$$\Ql(\kappa)=\chi_{(-\ii,0]}(D^{\kappa\nu})-P^0_-,$$
where $P^0_-=\chi_{(-\ii,0]}(D^0)$ is the negative spectral projection of the Dirac operator in $\gH$. Notice that, with our notations, the non-interacting vacuum satisfies
$$\left\{\begin{array}{lllllll}
          \Ql(\kappa) & = & \Qlinm(\kappa) & = & \chi_{(-\ii,\mu_-]}(D^{\kappa\nu})-P^0_- & \text{if} & \kappa < 1;\\
          \Ql(\kappa) & = & \Qlinp(\kappa) & = & \chi_{(-\ii,\mu_+]}(D^{\kappa\nu})-P^0_- & \text{if} & \kappa > 1.
         \end{array}\right.
$$
Consider the non-linear analogue of these equations, ie
\begin{equation}\label{eq:BDF-equation-mupm}
Q_\pm(\kappa,\alpha)=\chi_{(-\ii,\mu_\pm]}(\Pi_\Lambda(D^{\kappa\nu}+\alpha(V_{\rho_Q}-R_Q))\Pi_\Lambda)-\Pm. 
\end{equation}
Our first result shows the existence of solutions to this equation. By an easy adaptation of \cite[Theorem 2]{HaiLewSer-05a}, it can be shown that, for $\alpha$ small enough, $Q_-(\kappa,\alpha)$ (resp. $Q_+(\kappa,\alpha)$) is a minimizer of the BDF energy in the domain of charge $\tr_0(\Qlinm(\kappa))=q$ (resp. $\tr_0(\Qlinp(\kappa))=q+1$). Hence, $Q_+(\kappa,\alpha)$ has one electron more than $Q_-(\kappa,\alpha)$, as in the linear case. We want to know which one of these two has the lowest BDF energy. To answer this question, we introduce the function
\begin{equation}\label{eq:def-F}
F(\kappa,\alpha)=\Ebdf{Z\nu}{Q_+(\kappa,\alpha)}-\Ebdf{Z\nu}{Q_-(\kappa,\alpha)}. 
\end{equation}
If $F$ changes sign while increasing $\kappa$, it means that the vacuum gains one particle, as explained in the introduction. Our second result provides an expansion of $F(\kappa,\alpha)$ as $\alpha\to0$ with $\alpha\log\Lambda$ fixed. 

\subsection{Statement of the Results} Our first result concerns the existence of solutions to the equation \eqref{eq:BDF-equation-mupm}. In \cite{HaiLewSer-05a}, it has been solved by a fixed-point argument when $\mu=0$, for $\alpha$ small enough, for $\Lambda$ large enough, and for $\nu\in L^2(\R^3)\cap\cC$ such that $\|\kappa\nu\|_{L^2\cap\cC}$ is small enough. Here, we need to solve it for $\mu\notin\sigma(D^{\kappa\nu})$ and under no smallness assumption on the parameter $\kappa$ (indeed, if $\kappa$ is too small, $D^{\kappa\nu}$ cannot have an eigenvalue near 0, which we assume in our application to pair production). Hence, we have to go beyond the method of \cite{HaiLewSer-05a}. A solution to \eqref{eq:BDF-equation-2} can be found by a minimization argument, as in \cite{HaiLewSer-05b}. With this technique, no assumption on the smallness of $\|\kappa\nu\|_{L^2\cap\cC}$ and of $\alpha$ (except $\alpha<4/\pi$) is needed. By the same arguments, one can solve \eqref{eq:BDF-equation-mupm} with general $\mu$ by minimizing the BDF energy in sectors of charge $N$, given by $\cK_\Lambda(N):=\{Q\in\cK_\Lambda, \tr_0(Q)=N\}$. The existence of minimizers in these charge 
sectors has been proved in \cite{HaiLewSer-08}, under the assumption of a binding condition, well-known in many-body theories. However, we do not know if these binding conditions are satisfied in the regime of parameters we consider. Even if we could prove this binding condition, the range of $\mu$s for which we can solve \eqref{eq:BDF-equation-mupm}, ie the range of the map $N\mapsto\mu$, is not known. This range has been studied carefully in \cite{GraLewSer-09}, as well as the verification of the binding condition, in the case where the exchange term is neglected in the BDF energy. In the setting of our study, nothing is known about this issue. For all these reasons, we solve \eqref{eq:BDF-equation-mupm} by a fixed-point argument, following the method of \cite{HaiLewSer-05a}. This approach has also the advantage of furnishing a priori estimates on the solution of \eqref{eq:BDF-equation-mupm} in adequate norms, which are useful in proving the renormalization formula \eqref{eq:V-renorm} and also in the proof of Theorem \ref{thm:thm2}. 

Before stating our result, we have to introduce the Banach space where we will find solutions to \eqref{eq:BDF-equation-mupm}, which was introduced in \cite{HaiLewSer-05a}. For any $x\in\R^3$, we use the notation $E(x):=\sqrt{1+x^2}$. We define the space
$$\cX:=\cQ\times\left(L^2(\R^3)\cap\cC\right),$$
where
$$\cQ:=\{Q\in\gS_2(\gH_\Lambda),\:\|Q\|_\cQ^2:=\int_{\R^3}\int_{\R^3}E(p-q)^2E(p+q)|\hat{Q}(p,q)|^2\d{p}\d{q}<+\ii\}.$$
The vector space $\cX$ is a Banach space, endowed with the canonical norm
$$\|(Q,\rho)\|_\cX:=\|Q\|_\cQ+\|\rho\|_{L^2\cap\cC}.$$
We find solutions to \eqref{eq:BDF-equation-mupm} by applying a fixed-point argument to the couple $(Q,\rho_Q)$. That is why the Banach space $\cX$ contains both the one-body density matrices $Q$ as well as the densities of charge $\rho$. 

\begin{theorem}\label{thm:thm1}
Let $\nu\in L^2(\R^3)\cap\cC$, $\mu\in(-1,1)$, and $\epsilon\ge0$ such that 
\begin{equation}\label{eq:condition-gap}
\forall \kappa\in[1-\epsilon,1+\epsilon],\qquad\mu\notin\sigma(D^{\kappa\nu}).
\end{equation}
 Assume furthermore that
 $$\int_{\R^3}\log(2+|k|)^2|\hat{\nu}(k)|^2\d{k}<+\ii.$$
 Then, there exists $L_0=L_0(\mu,\nu,\epsilon)>0$, $\alpha_0=\alpha_0(\mu,\nu,\epsilon)>0$, $\Lambda_0=\Lambda_0(\mu,\nu,\epsilon)>0$, and $R_0=R_0(\mu,\nu,\epsilon)>0$ such that for all $0\le\alpha\le\alpha_0$ and $\Lambda\ge\Lambda_0$ satisfying $\alpha\log\Lambda\le L_0(\mu,\nu,\epsilon)$, and for all $\kappa\in[1-\epsilon,1+\epsilon]$, the equation
 \begin{equation}\label{eq:thm}
Q=\chi_{(-\ii,\mu]}(\Pi_\Lambda(D^0-\kappa V_\nu+\alpha(V_{\rho_Q}-R_Q))\Pi_\Lambda)-\Pm  
 \end{equation}
 has a unique solution $(Q,\rho_Q)\in\cX$ such that 
 \begin{equation}\label{eq:apriori}
\left\|(Q+G_{1,0}(\kappa\nur),\rho_Q-B_\Lambda\kappa\nur)\right\|_\cX\le R_0.   
 \end{equation}
 Here, we used the notations
 \begin{equation}\label{eq:def-nur}
\nur:=\frac{1}{1+\alpha B_\Lambda}\nu,
 \end{equation}
 and for any $\rho\in L^2(\R^3)\cap\cC$, $G_{1,0}(\rho)\in\cQ$ is defined by
 \begin{equation}\label{eq:def-G10}
  G_{1,0}(\rho)=\frac{1}{2\pi}\int_\R\frac{1}{D^0_\Lambda+i\eta}\Pi_\Lambda V_\rho\Pi_\Lambda\frac{1}{D^0_\Lambda+i\eta}\d{\eta}.
 \end{equation}
 Finally, the function $B_\Lambda$ is given by
 \begin{equation}\label{eq:dev-Blambda}
  B_\Lambda=\frac{1}{\pi}\int_0^{\frac{\Lambda}{\sqrt{1+\Lambda^2}}}\frac{z^2-z^4/3}{1-z^2}\d{z}=\frac{2}{3\pi}\log\Lambda-\frac{5}{9\pi}+\frac{2\log2}{3\pi}+\cO_{\Lambda\to\ii}(\Lambda^{-2}).
 \end{equation}
\end{theorem}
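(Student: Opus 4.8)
The plan is to realise the solution of \eqref{eq:thm} as the unique fixed point, in a ball of $\cX$, of the map
\[
\Phi(Q,\rho):=\left(\chi_{(-\ii,\mu]}(D_{Q,\rho})-\Pm,\ \rho_{\chi_{(-\ii,\mu]}(D_{Q,\rho})-\Pm}\right),\qquad D_{Q,\rho}:=\Pi_\Lambda\left(D^0-\kappa V_\nu+\alpha V_\rho-\alpha R_Q\right)\Pi_\Lambda ,
\]
observing that $(Q,\rho)$ is a fixed point of $\Phi$ precisely when $\rho=\rho_Q$ and $Q$ solves \eqref{eq:thm}. We look for the fixed point in the closed ball $\mathcal{B}\subset\cX$ of centre $\left(-G_{1,0}(\kappa\nur),\,B_\Lambda\kappa\nur\right)$ and radius $R_0$, with $R_0=R_0(\mu,\nu,\epsilon)$ large but independent of $\alpha,\Lambda$; this centre is exactly the self-consistent \emph{linear response} to $-\kappa V_\nu$ after the geometric resummation $\nu\mapsto\nur$ built into \eqref{eq:def-nur}, so that \eqref{eq:apriori} is just a restatement of ``the fixed point lies in $\mathcal{B}$''. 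For $(Q,\rho)\in\mathcal{B}$ one has $\|\rho\|_{L^2\cap\cC}\le B_\Lambda\|\kappa\nur\|_{L^2\cap\cC}+R_0$ and $\|Q\|_\cQ\le\|G_{1,0}(\kappa\nur)\|_\cQ+R_0$, so by the estimates of \cite{HaiLewSer-05a} (which give $\|\Pi_\Lambda V_\rho\Pi_\Lambda\|_{\gH_\Lambda\to\gH_\Lambda}\lesssim\|\rho\|_{L^2\cap\cC}$, $\|R_Q\|_{\gH_\Lambda\to\gH_\Lambda}\lesssim\|Q\|_\cQ$ and $\|G_{1,0}(\rho)\|_\cQ\lesssim\|\rho\|_{L^2\cap\cC}$, with constants involving $\Lambda$ at worst logarithmically) together with $B_\Lambda=\tfrac{2}{3\pi}\log\Lambda+\cO_{\Lambda\to\ii}(1)$, the self-adjoint operator $\Pi_\Lambda(\alpha V_\rho-\alpha R_Q)\Pi_\Lambda$ has norm $\le C(\mu,\nu,\epsilon)(\alpha\log\Lambda+\alpha R_0)$, which is $<\delta/2$ once $L_0,\alpha_0$ are small, $\delta$ being the gap below. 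Thus $D_{Q,\rho}$ is a small bounded perturbation of $D^{\kappa\nu}_\Lambda:=\Pi_\Lambda(D^0-\kappa V_\nu)\Pi_\Lambda$, and since $0\notin\sigma(D^0_\Lambda)$ while $\mu\notin\sigma(D_{Q,\rho})$, the representation
\[
\chi_{(-\ii,\mu]}(D_{Q,\rho})-\Pm=\frac{1}{2\pi}\int_\R\left(\frac{1}{i\eta+\mu-D_{Q,\rho}}-\frac{1}{i\eta-D^0_\Lambda}\right)\d{\eta}
\]
used in \cite{HaiLewSer-05a} shows, via the $|\eta|^{-1}$ decay of the resolvents and the cut-off, that $\Phi(Q,\rho)\in\cQ\times(L^2(\R^3)\cap\cC)$: the first component lies in $\gS_{1,\Pm}$ because $\chi_{(-\ii,\mu]}(D_{Q,\rho})$ is an orthogonal projection differing from $\Pm$ by an operator whose off-diagonal blocks are Hilbert--Schmidt and whose diagonal blocks are trace class, exactly as in \cite{HaiLewSer-05a}. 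Hence $\Phi\colon\mathcal{B}\to\cX$.

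\emph{Uniform spectral gap.} By \eqref{eq:condition-gap}, $\mu\notin\sigma(D^{\kappa\nu})$ on $\gH$; continuity of $\kappa\mapsto D^{\kappa\nu}$ and compactness of $[1-\epsilon,1+\epsilon]$ give a gap of width $\ge 2\delta>0$ uniformly in $\kappa\in[1-\epsilon,1+\epsilon]$. Since $V_\nu$ is a bounded operator on $\gH$ for $\nu\in L^2(\R^3)\cap\cC$ (Schur's test), $D^{\kappa\nu}_\Lambda\to D^{\kappa\nu}$ in the strong resolvent sense and a perturbative argument shows that $\mu$ remains in a gap of $D^{\kappa\nu}_\Lambda$ of width $\ge\delta$ for $\Lambda\ge\Lambda_0$ (stability of the isolated eigenvalues of $D^{\kappa\nu}$ lying in $(-1,1)$; if needed one uses the $\mu_\pm$ of Section~\ref{sec:setting}). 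Combined with the previous paragraph, $\dist(\mu,\sigma(D_{Q,\rho}))\ge\delta/2$ for all $(Q,\rho)\in\mathcal{B}$, which is what was used above and makes the contour integrals below convergent.

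\emph{Stability of $\mathcal{B}$, contraction and identification of the centre.} In the integral displayed above we separate the contribution of $D^{\kappa\nu}_\Lambda$ and Taylor-expand the rest in the small perturbation $W_{Q,\rho}:=\Pi_\Lambda(\alpha V_\rho-\alpha R_Q)\Pi_\Lambda$. The $D^{\kappa\nu}_\Lambda$-part is the polarised vacuum of the \emph{fixed} potential $-\kappa V_\nu$ together with the bound state that has crossed $\mu$; by the one-loop structure of the Dirac vacuum polarisation (cf.\ \cite{HaiLewSer-05b,GraLewSer-09,Sabin-11}) only its term of first order in $\kappa V_\nu$ diverges as $\Lambda\to\ii$, with divergent part $-G_{1,0}(\kappa\nu)$ for the operator and $B_\Lambda\kappa\nu$ for the density; and the self-consistent ansatz $\rho\approx B_\Lambda\kappa\nur$, $Q\approx-G_{1,0}(\kappa\nur)$ is precisely the stable one, since the identity $\alpha V_{B_\Lambda\kappa\nur}-\kappa V_\nu=-\kappa V_{\nur}$ — immediate from \eqref{eq:def-nur} — resums the divergent direct contribution. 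All the remaining pieces (higher orders in $\kappa V_\nu$, the powers of $W_{Q,\rho}$, the bound-state correction and the cross terms) are bounded in $\cX$ uniformly in $\Lambda$ by the estimates of \cite{HaiLewSer-05a} and the hypothesis $\int_{\R^3}\log(2+|k|)^2|\hat\nu(k)|^2\d{k}<\ii$, which is exactly what turns the logarithmic factors of the cut-off into $\cX$-bounds in the regime $\alpha\log\Lambda\le L_0$. Choosing $R_0$ large, then $\alpha_0,L_0$ small, yields $\Phi(\mathcal{B})\subset\mathcal{B}$. For the contraction one uses, with $A_j:=D_{Q_j,\rho_j}$,
\[
\chi_{(-\ii,\mu]}(A_1)-\chi_{(-\ii,\mu]}(A_2)=\frac{1}{2\pi}\int_\R\frac{1}{i\eta+\mu-A_1}\,(W_{Q_1,\rho_1}-W_{Q_2,\rho_2})\,\frac{1}{i\eta+\mu-A_2}\,\d{\eta},
\]
whose $\cX$-norm is $\le C(\mu,\nu,\epsilon)(\alpha+L_0)\,\|(Q_1,\rho_1)-(Q_2,\rho_2)\|_\cX$ (the $\eta$-integral converging by the $|\eta|^{-2}$ decay, with no $\Lambda$-blow-up, and $\|W_{Q_1,\rho_1}-W_{Q_2,\rho_2}\|\lesssim\alpha\|(Q_1,\rho_1)-(Q_2,\rho_2)\|_\cX$). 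Hence $\Phi$ contracts $\mathcal{B}$ for $\alpha_0,L_0$ small, and Banach's theorem gives a unique fixed point in $\mathcal{B}$, i.e.\ a unique solution of \eqref{eq:thm} satisfying \eqref{eq:apriori}. The expansion \eqref{eq:dev-Blambda} is then the elementary evaluation of the stated one-dimensional integral and its Taylor expansion at $\Lambda=\ii$.

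The main obstacle is the third step: once the single divergent scalar $B_\Lambda\sim\tfrac{2}{3\pi}\log\Lambda$ has been extracted through the resummation $\nu\mapsto\nur$, one must show that \emph{every} remaining contribution is bounded in $\cX$ uniformly in the cut-off. The delicate points are the full nonlinear vacuum polarisation of the order-one potential $-\kappa V_\nu$ (where one leans on the one-loop finiteness results), the exchange term $R_Q$, which is absent from \cite{GraLewSer-09}, and the bound state near $\mu$; it is this uniformity that forces the log-weighted integrability assumption on $\nu$ and ties the admissible parameters to $\alpha\log\Lambda\le L_0$.
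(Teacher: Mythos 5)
Your overall strategy — fixing the ansatz $(-G_{1,0}(\kappa\nur),B_\Lambda\kappa\nur)$ as the centre of a ball of $\Lambda$\nobreakdash-independent radius in $\cX$, observing that $\alpha V_{B_\Lambda\kappa\nur}-\kappa V_\nu=-V_{\kappa\nur}$ resums the divergence, and applying Banach's fixed-point theorem — is the right one and matches the paper's plan. However, there are three genuine gaps.

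\emph{Gap lemma.} Your claim that strong resolvent convergence $D^{\kappa\nu}_\Lambda\to D^{\kappa\nu}$ plus ``stability of isolated eigenvalues'' gives $d(\mu,\sigma(D^{\kappa\nu}_\Lambda))\ge\delta$ for $\Lambda$ large is not a valid deduction: strong resolvent convergence only ensures $\sigma(D^{\kappa\nu})\subset\liminf\sigma(D^{\kappa\nu}_\Lambda)$, and says nothing about new spectrum of the cut-off operators appearing near $\mu$; moreover $\mu$ is in a spectral gap, not an eigenvalue, so eigenvalue-perturbation theory does not apply. This needs an actual argument. The paper's Lemma \ref{lemma:gap-Dref} proves it (for the more general $\Dref(\omega)$) by contradiction: a Weyl-type sequence $\phi_n\in\gH_{\Lambda_n}$ with $\|(D_n-\mu)\phi_n\|\to0$ is shown to be bounded in $H^1$, weakly convergent to an eigenfunction of $D^{\kappa\nu}$ (contradiction) unless $\phi=0$, and then compactness of $V_{\nu}(D^0)^{-1}$ forces $(D^0-\mu)\phi_n\to0$, contradicting $\mu\in(-1,1)$. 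Without this step, the Cauchy-integral representation of $\chi_{(-\ii,\mu]}(D_{Q,\rho})$ is not even defined.

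\emph{The core estimate.} You lean repeatedly on ``the estimates of \cite{HaiLewSer-05a}'' to bound $\Phi$ and its Lipschitz constant uniformly in $\Lambda$. But those estimates are stated for integrands built from resolvents $(D^0_\Lambda+i\eta)^{-1}$, whose Fourier kernel is explicit and whose $\cQ$-norm can be computed by hand. Here the natural resolvent after you freeze the centre is $(\Dref-\mu+i\eta)^{-1}$, which has no explicit kernel in Fourier space. The entire technical effort of Proposition~\ref{prop:main} and Lemma~\ref{lemma:estA0A1} — expanding $(\Dref(\omega)-\mu+i\eta)^{-1}$ to fifth order in $(D^0_\Lambda-\mu+i\eta)^{-1}Y(\omega,\alpha)$, using the $\gS_6$-decay of each factor to gain integrability in $\eta$, estimating the remainder via $\|Q\|_\cQ\lesssim\||D^0|^{3/2}Q\|_{\gS_2}$, and the commutator bounds of Lemma~\ref{lemm:comm} — exists precisely to bridge this gap. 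Citing \cite{HaiLewSer-05a} does not cover it, since those results simply do not apply to the operator you are estimating. Similarly, the bound $\|(\Qref,\rhoref)\|_\cX\le C(\mu,\nu,\epsilon)$ is not a consequence of ``one-loop finiteness''; it requires the splitting $\rhoref=-\cU_\Lambda\kappa\nur+\rho[I]+\rho[II]$ and the estimate \eqref{eq:est-ULambda} on the multiplier $U_\Lambda$, which is exactly where the hypothesis $\int\log(2+|k|)^2|\hat\nu(k)|^2\d{k}<\ii$ is used. You mention this hypothesis only in passing, without connecting it to the quantity it actually controls.

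\emph{The division by $1+\alpha B_\Lambda(k)$.} You omit the paper's step of solving for $\hat\rho(k)$ by moving the leading term $-\alpha B_\Lambda(k)\hat\rho(k)$ from the first-order vacuum polarisation to the left-hand side. Without this, the Lipschitz constant of the density component of $\Phi$ carries a factor $\alpha B_\Lambda\sim\tfrac{2}{3\pi}L_0$, which is not small as $\alpha\to0$. This does not break the existence argument — one may still take $L_0$ small enough to make the map a contraction, as you do — but it is weaker than the paper's $\cO(\sqrt{\alpha L_0})$ bound \eqref{eq:estimate_derivative}, and it means you cannot simply absorb the first-order density term into ``other terms are $\cO_{\alpha\to0}(1)$''; you must track it explicitly, which your proof does not do.

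In short: your architecture is correct and your centre is the same as the paper's, but the spectral-gap lemma is justified by an argument that does not work, the uniform-in-$\Lambda$ $\cX$-bounds (the real content) are asserted rather than proved, and the first-order density divergence is handled in a way that quietly changes the size of your Lipschitz constant without acknowledgment.
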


\begin{remark}
 The function $B_\Lambda$ already appeared in the previous works \cite{HaiLewSer-05a,HaiLewSer-05b,GraLewSer-09}. It is logarithmically divergent in the cut-off $\Lambda$. This kind of divergence is well-known is Quantum Electrodynamics, and has been noticed in the early days of this theory by Dirac. The reason why we study the regime where $\alpha\log\Lambda$ is bounded is to control this divergence. 
\end{remark}

\begin{remark}
 One can apply this theorem to $\epsilon=0$, in which case the condition on $\mu,\nu$ is simply $\mu\notin\sigma(D^\nu)$. Here, we allow $\epsilon>0$ since we need estimates uniform in $\kappa$ for the proof of Theorem \ref{thm:thm2}.
\end{remark}

\begin{corollary}[First order charge renormalization]\label{coro}
 Under the assumptions of Theorem \ref{thm:thm1}, and if we denote by $Q(\kappa\nu,\alpha)$ the solution of \eqref{eq:thm} given by Theorem \ref{thm:thm1}, we have as $\alpha\to0$ with $\alpha\log\Lambda\le L_0$ fixed:
 $$
 \boxed{
 Q(\kappa\nu,\alpha)-\Ql(\kappa\nur)\longrightarrow0
 }$$
 in $\cQ$, where
 $$
 \boxed{
 \Ql(\kappa\nur)=\chi_{(-\ii,\mu]}(\Pi_\Lambda(D^0-\kappa V_\nur)\Pi_\Lambda)-\Pm.
 }
 $$
\end{corollary}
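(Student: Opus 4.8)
Write $Q:=Q(\kappa\nu,\alpha)$, denote by $\rho_Q$ its density, and introduce the two self-adjoint operators on $\gH_\Lambda$
$$A:=\Pi_\Lambda\big(D^0-\kappa V_\nu+\alpha(V_{\rho_Q}-R_Q)\big)\Pi_\Lambda,\qquad A_0:=\Pi_\Lambda\big(D^0-\kappa V_{\nur}\big)\Pi_\Lambda,$$
so that, by \eqref{eq:thm} and the definition of $\Ql(\kappa\nur)$ in the statement, $Q-\Ql(\kappa\nur)=\chi_{(-\ii,\mu]}(A)-\chi_{(-\ii,\mu]}(A_0)$ (the $\Pm$'s cancel). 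The plan is to show that $A-A_0$ is of size $\alpha$ in the norm that feeds the $\cQ$-estimates of \cite{HaiLewSer-05a}, and then to convert this, through a resolvent formula for the difference of the two spectral projections, into $\cQ$-convergence of $Q-\Ql(\kappa\nur)$ to $0$.

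\emph{Step 1: the divergent terms cancel.} Using $\nu-\nur=\alpha B_\Lambda\nur$ from \eqref{eq:def-nur} and splitting $\rho_Q=B_\Lambda\kappa\nur+(\rho_Q-B_\Lambda\kappa\nur)$, a direct computation gives
$$A-A_0=\alpha\,\Pi_\Lambda\big(V_{\rho_Q-B_\Lambda\kappa\nur}-R_Q\big)\Pi_\Lambda,$$
the two contributions $\mp\kappa\alpha B_\Lambda V_{\nur}$ — each individually of size $\cO(\alpha\log\Lambda)=\cO(1)$ — cancelling exactly. By the a priori estimate \eqref{eq:apriori} of Theorem \ref{thm:thm1} one has $\|\rho_Q-B_\Lambda\kappa\nur\|_{L^2\cap\cC}\le R_0$ and $\|Q+G_{1,0}(\kappa\nur)\|_\cQ\le R_0$; combining the latter with the bound $\|G_{1,0}(\rho)\|_\cQ\le C\|\rho\|_{L^2\cap\cC}$ of \cite{HaiLewSer-05a} (uniform in $\Lambda$) and $\|\nur\|_{L^2\cap\cC}\le\|\nu\|_{L^2\cap\cC}$ yields $\|Q\|_\cQ\le C$ uniformly in the regime $\alpha\to0$, $\alpha\log\Lambda$ fixed. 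Hence, by the Kato--Hardy inequality controlling the exchange operator $R_Q$ in terms of $\|Q\|_\cQ$ and by the standard mapping estimates for $V_\rho$ with $\rho\in L^2\cap\cC$, the operator $A-A_0$ is of size $\alpha$ in $\B(\gH_\Lambda)$ and in the auxiliary norm used below.

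\emph{Steps 2--3: gap and resolvent formula.} The hypothesis $\mu\notin\sigma(D^{\kappa\nu})$ for all $\kappa\in[1-\epsilon,1+\epsilon]$ gives, by continuity of the spectrum and compactness of $[1-\epsilon,1+\epsilon]$, a $\delta_0>0$ with $\dist(\mu,\sigma(D^{\kappa\nu}))\ge\delta_0$; as in the proof of Theorem \ref{thm:thm1}, for $\Lambda\ge\Lambda_0$ this passes to $A_0$, and since $\|A-A_0\|_{\B(\gH_\Lambda)}\le C\alpha$ it passes to $A$, so that $\dist(\mu,\sigma(A)),\dist(\mu,\sigma(A_0))\ge\delta_0/2$ uniformly and $\|(A-\mu+i\eta)^{-1}\|,\|(A_0-\mu+i\eta)^{-1}\|\le(\delta_0^2/4+\eta^2)^{-1/2}$. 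Representing the two spectral projections by the Cauchy formula on the line $\{\re z=\mu\}$, exactly as in \eqref{eq:def-G10}, and using the resolvent identity gives
$$Q-\Ql(\kappa\nur)=\frac{\alpha}{2\pi}\int_\R\frac{1}{A-\mu+i\eta}\,\Pi_\Lambda\big(V_{\rho_Q-B_\Lambda\kappa\nur}-R_Q\big)\Pi_\Lambda\,\frac{1}{A_0-\mu+i\eta}\d{\eta}.$$
Estimating the $\cQ$-norm of the right-hand side by the same arguments used in \cite{HaiLewSer-05a} to bound $G_{1,0}$ and the analogous second-order terms — now with the gapped operators $A,A_0$ in place of $D^0_\Lambda$, which is legitimate by the uniform gap after expanding each resolvent into finitely many terms in the $\cO(1)$ potentials $\kappa V_\nu$, $\kappa V_{\nur}$ (each extra factor coming with an $\eta$-integrable free resolvent) — and inserting the uniform bounds on $\|\rho_Q-B_\Lambda\kappa\nur\|_{L^2\cap\cC}$ and $\|Q\|_\cQ$ from Step 1, one obtains $\|Q-\Ql(\kappa\nur)\|_\cQ\le C\alpha\to0$.

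\emph{Main obstacle.} The delicate part is the last step: $A$ and $A_0$ are genuine, not small, perturbations of $D^0_\Lambda$ by the potentials $\kappa V_\nu$, $\kappa V_{\nur}$, whose $L^\infty$-norms do not tend to $0$, so the weighted ($\cQ$-type) resolvent bounds of \cite{HaiLewSer-05a} cannot be quoted verbatim and must be re-established for Dirac operators with a uniform gap at $\mu$, with all constants uniform as $\Lambda\to\ii$ and in $\kappa\in[1-\epsilon,1+\epsilon]$; the term involving the exchange operator $R_Q$, which sees the full off-diagonal kernel of $Q$, needs particular care. By contrast, the cancellation of the logarithmically divergent pieces in Step 1 is the algebraic observation that makes the final bound tend to $0$ rather than merely stay bounded.
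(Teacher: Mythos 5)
The central cancellation in your Step 1 — $\kappa V_\nu - \kappa V_{\nur} = \alpha B_\Lambda\kappa V_{\nur}$, which regroups with $\alpha V_{\rho_Q}$ to give $\alpha V_{\rho_Q-B_\Lambda\kappa\nur}$ — is exactly the algebraic observation the paper uses, and the a~priori bound \eqref{eq:apriori} does control $\|\rho_Q-B_\Lambda\kappa\nur\|_{L^2\cap\cC}$ uniformly. However, there is a genuine error just after: the estimate you invoke, $\|G_{1,0}(\rho)\|_\cQ\le C\|\rho\|_{L^2\cap\cC}$ uniform in $\Lambda$, is false; the paper's Lemma~\ref{lemma:prop-G10} (inequality \eqref{eq:ineq-G10Q}, from \cite[Lemma~11]{HaiLewSer-05a}) gives $\|G_{1,0}(\rho)\|_\cQ\le C\sqrt{\log\Lambda}\,\|\rho\|_{L^2\cap\cC}$, so from \eqref{eq:apriori} one only gets $\|Q\|_\cQ\le C\sqrt{\log\Lambda}=\cO(\alpha^{-1/2})$, not $\cO(1)$. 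Consequently $A-A_0$ is \emph{not} of size $\alpha$: its $R_Q$ piece carries a factor $\|Q\|_\cQ\sim\sqrt{\log\Lambda}$, and what survives after the cancellation is $\alpha\sqrt{\log\Lambda}=\cO(\sqrt{\alpha})$. This is precisely why the paper does not stop at $A_0=\Pi_\Lambda(D^0-\kappa V_{\nur})\Pi_\Lambda$ but inserts the extra exchange correction $\alpha\Pi_\Lambda R_{1,0}(\kappa\nur)\Pi_\Lambda$ into $\Dref$ in \eqref{eq:def-dref} — see the remark immediately after that definition, which says that without this term there remain $\log\Lambda$-divergent pieces. Your final claim $\|Q-\Ql(\kappa\nur)\|_\cQ\le C\alpha$ is therefore too strong; the correct rate, and the one the paper obtains in \eqref{eq:QOsqrtalpha}, is $\cO_{\alpha\to0}(\sqrt{\alpha})$, which of course still yields the corollary.

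Beyond this, the ``main obstacle'' you flag at the end is not a minor technical annoyance but the whole substance of Proposition~\ref{prop:main}: the $\cQ$-estimate of the resolvent integral with the dressed resolvents $(A-\mu+i\eta)^{-1}$, $(A_0-\mu+i\eta)^{-1}$ requires the fifth-order expansion $A_0(\eta)+A_1(\eta)$ around the free resolvent together with the commutator machinery of Lemma~\ref{lemm:comm}, and none of this is quoted or reproduced. The paper's actual route to the corollary is different from yours: it does not compare $Q$ and $\Ql(\kappa\nur)$ through a single resolvent identity, but rather (i) uses the series expansion of the fixed-point equation around $\Dref$ (which tracks the bounded combination $Q+G_{1,0}(\kappa\nur)$, not $Q$) to get $\|Q-\Qref\|_\cQ=\cO(\sqrt{\alpha})$, and (ii) separately bounds $\|\Qref-\Ql(\kappa\nur)\|_\cQ=\cO(\sqrt{\alpha})$ via estimate \eqref{eq:est-Qref-Qlin}. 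Your one-step resolvent-identity comparison is a valid and arguably cleaner strategy if one corrects the $\|Q\|_\cQ$ bound and accepts the $\sqrt{\alpha}$ rate, but it still ultimately rests on the same hard estimates; as written, the proposal has one false intermediate claim and one acknowledged but unfilled gap.
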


\begin{remark}
 This corollary does not imply the convergence of $Q(\kappa\nu,\alpha)$ towards $\chi_{(-\ii,\mu]}(D^{\kappa\nur})-P^0_-$ in $\cQ$ as $\alpha\to0$, merely because this last operator does \emph{not} belong to $\cQ$. However, it belongs to $\gS_2$ by \cite{KlaSch-77b}. Since convergence in $\cQ$ implies convergence in $\gS_2$, we deduce that
 $$Q(\kappa\nu,\alpha)\longrightarrow\chi_{(-\ii,\mu]}(D^{\kappa\nur})-P^0_-$$
 in $\gS_2$, as $\alpha\to0$ with $\alpha\log\Lambda\le L_0$ fixed. 
\end{remark}

\begin{remark}
 By \eqref{eq:dev-Blambda}, in the limit $\alpha\to0$ with $\alpha\log\Lambda$ fixed, we have
 $$\nur=\frac{1}{1+\frac{2}{3\pi}\alpha\log\Lambda}\nu,$$
 which was the form used in the introduction. 
\end{remark}

Thanks to Theorem \ref{thm:thm1}, we can build a solution to \eqref{eq:thm} with $\nu\in L^2(\R^3)\cap\cC$ satisfying the assumptions of Section \ref{sec:setting}, for $\mu=\mu_\pm$, for any $\kappa\in[1-\epsilon,1+\epsilon]$, and for any $\alpha>0$ small enough, $\Lambda$ large enough such that $\alpha\log\Lambda$ is small enough. Since the corresponding $Q_\pm(\kappa\nu,\alpha)$ belongs to $\cQ\subset\gS_2(\gH_\Lambda)$ and is a difference of orthogonal projections, we can use the fact that
$$P=P^2=P^*\Rightarrow (P-\Pm)^2=(P-\Pm)_{++}-(P-\Pm)_{--},$$
to infer that $Q_\pm(\kappa,\alpha)\in\cK_\Lambda$, hence the function $F$ given by \eqref{eq:def-F} is well-defined. Our second result is the following:

\begin{theorem}\label{thm:thm2}
 Let $\nu\in L^2(\R^3)\cap\cC$ satisfying the assumptions of Section \ref{sec:setting}, with furthermore
 $$\int_{\R^3}\log(2+|k|)^2|\hat{\nu}(k)|^2\d{k}<+\ii.$$ 
 Then, for $\alpha$ small enough, for $\Lambda$ large enough such that $\alpha\log\Lambda$ is small enough, we have the following expansion for all $\kappa\in[1-\epsilon,1+\epsilon]$:
 \begin{equation}\label{eq:F-expansion}
  \boxed{
  F(\kappa,\alpha)=\lambda\left(\kappa\nur\right)+\cO_{\alpha\to0}(\alpha),
  }
  \end{equation}
where $\cO_{\alpha\to0}(\alpha)$ denotes a function which absolute value is $\le C\alpha$ for $\alpha$ small, where $C$ only depends on $\mu$, $\nu$, and $\epsilon$.  
\end{theorem}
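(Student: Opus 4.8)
\smallskip
\noindent\textit{Strategy of the proof.} The plan is to combine the exact quadratic structure of the BDF energy with the charge renormalization of Corollary \ref{coro}. First I would apply Theorem \ref{thm:thm1} twice, with $\mu=\mu_+$ and with $\mu=\mu_-$: for $\alpha$ small, $\Lambda$ large and $\alpha\log\Lambda\le L_0$ this produces the two vacua $Q_\pm=Q_\pm(\kappa,\alpha)$ solving \eqref{eq:BDF-equation-mupm}, the a priori bounds \eqref{eq:apriori}, and, by Corollary \ref{coro}, the convergence $Q_\pm-\big(\chi_{(-\ii,\mu_\pm]}(\Pi_\Lambda D^{\kappa\nur}\Pi_\Lambda)-\Pm\big)\to 0$ in $\cQ$ as $\alpha\to0$. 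Since $\alpha B_\Lambda$ is small when $L_0$ is small, $\nur=\nu/(1+\alpha B_\Lambda)$ stays close to $\nu$ uniformly in $\kappa\in[1-\epsilon,1+\epsilon]$; shrinking $L_0$ if necessary, $\lambda(\kappa\nur)$ is then a simple eigenvalue of $D^{\kappa\nur}$ lying in $(\mu_-,\mu_+)$, so that both $\Pi_\Lambda D^{\kappa\nur}\Pi_\Lambda$ and, for $\alpha$ small, its perturbations $D_{Q_\pm}$ carry exactly one eigenvalue in $(\mu_-,\mu_+)$, with normalized eigenvectors $\psi^{(\pm)}\in\gH_\Lambda$. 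Next I would use that $Q\mapsto\Ebdf{Z\nu}{Q}=\ell(Q)+\tfrac12 B(Q,Q)$ is affine--quadratic ($\ell$ linear, $B$ symmetric bilinear, differential $H\mapsto\tro{D_Q H}$), while $Q\mapsto D_Q$ is itself affine; writing $\gamma:=Q_+-Q_-\in\gS_{1,\Pm}$ and $\bar Q:=\tfrac12(Q_++Q_-)$, the midpoint rule for quadratics is \emph{exact} and yields
\[ F(\kappa,\alpha)=\tro{D_{\bar Q}\,\gamma},\qquad D_{\bar Q}=\tfrac12\big(D_{Q_+}+D_{Q_-}\big). \]

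The second step isolates the crossing eigenvalue. Since $D_{Q_+}-D_{Q_-}=\alpha(V_{\rho_\gamma}-R_\gamma)$ and $\mu_-$ sits in a spectral gap of fixed width, the Cauchy-integral (resolvent) estimates underlying Theorem \ref{thm:thm1} give $\gamma=\proj{\psi^{(+)}}+r$ with $\|r\|_\cQ=\cO(\alpha)$, and Corollary \ref{coro} shows $\gamma-\proj{\psi^{\mathrm{ren}}_\Lambda}\to0$ in $\cQ$, where $\psi^{\mathrm{ren}}_\Lambda$ is the eigenvector of $\Pi_\Lambda D^{\kappa\nur}\Pi_\Lambda$. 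Then $F=\tro{D_{\bar Q}\proj{\psi^{(+)}}}+\tro{D_{\bar Q}\,r}$. For the first term, $\proj{\psi^{(+)}}$ being rank one, $\tr_0=\tr$ on it, so it equals $\langle\psi^{(+)},D_{\bar Q}\psi^{(+)}\rangle=\lambda_\Lambda^{(+)}+\tfrac12\langle\psi^{(+)},(D_{Q_-}-D_{Q_+})\psi^{(+)}\rangle=\lambda_\Lambda^{(+)}+\cO(\alpha)$, using $D_{Q_-}-D_{Q_+}=-\alpha(V_{\rho_\gamma}-R_\gamma)$, $\|\rho_\gamma\|_{L^2\cap\cC}=\cO(1)$, boundedness of $R_\gamma$, and $\Lambda$-uniform control of $\psi^{(+)}$ (see below), where $\lambda_\Lambda^{(+)}\in(\mu_-,\mu_+)$ is the eigenvalue of $D_{Q_+}$. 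For the second term I would expand $\tro{D_{\bar Q}\,r}$ through the self-consistent equations for $Q_\pm$ and the resolvents of $D_{Q_\pm}$, localizing $D_{\bar Q}$ in energy near the gap so that its large $|D^0_\Lambda|\sim\sqrt{1+\Lambda^2}$ component never acts against $r$, to obtain $\tro{D_{\bar Q}\,r}=\cO(\alpha\|r\|_\cQ)=\cO(\alpha)$.

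The third step identifies $\lambda_\Lambda^{(+)}$ with $\lambda(\kappa\nur)$. Using \eqref{eq:def-nur} and the a priori bound on $\rho_{Q_+}-B_\Lambda\kappa\nur$ in \eqref{eq:apriori}, the algebraic identity $-\kappa V_\nu+\alpha B_\Lambda\kappa V_{\nur}=-\kappa V_{\nur}$ lets one rewrite $D_{Q_+}=\Pi_\Lambda\big(D^{\kappa\nur}+\alpha(V_\sigma-R_{Q_+})\big)\Pi_\Lambda$ with $\sigma:=\rho_{Q_+}-B_\Lambda\kappa\nur$, $\|\sigma\|_{L^2\cap\cC}\le R_0$. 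First-order eigenvalue perturbation around $\Pi_\Lambda D^{\kappa\nur}\Pi_\Lambda$ (whose eigenvalue in $(\mu_-,\mu_+)$ is $\lambda_\Lambda(\kappa\nur)$) then gives $\lambda_\Lambda^{(+)}=\lambda_\Lambda(\kappa\nur)+\cO(\alpha)$, while removing the cut-off costs only $|\lambda_\Lambda(\kappa\nur)-\lambda(\kappa\nur)|=\cO(\Lambda^{-1})=\cO(\alpha)$ since the eigenfunctions of $D^{\kappa\nur}$ are localized. Collecting the estimates yields $F(\kappa,\alpha)=\lambda(\kappa\nur)+\cO(\alpha)$, with the constant uniform in $\kappa\in[1-\epsilon,1+\epsilon]$ by the uniformity built into Theorem \ref{thm:thm1}.

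The main obstacle is to make every $\cO(\alpha)$ above uniform in $\Lambda$, which runs to infinity like $e^{L/\alpha}$: the naive Sobolev embeddings on $\gH_\Lambda$ carry factors $\Lambda^{1/2}$ that would overwhelm the gain $\alpha$. The point is that the objects paired against $D_{\bar Q}$ — the eigenvector $\psi^{(+)}$, the rank-one projection $\proj{\psi^{(+)}}$, and the remainder $r$ — are not generic elements of $\gH_\Lambda$: they inherit $\Lambda$-uniform bounds from the $\cQ$-a priori estimate of Theorem \ref{thm:thm1} and from the eigenvalue equation $(D^0-\lambda_\Lambda^{(+)})\psi^{(+)}=\Pi_\Lambda(\kappa V_{\nur}-\alpha V_\sigma+\alpha R_{Q_+})\psi^{(+)}$ with $\lambda_\Lambda^{(+)}\in(-1,1)$, which by Dirac-operator regularity upgrades $\psi^{(+)}$ to a function bounded in $L^2\cap L^\infty\cap\dot H^{1/2}$ independently of $\Lambda$. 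Proving these uniform bounds, and arranging all expansions so that $D_{\bar Q}$ is tested only against such localized objects (after subtracting or resolving away its $|D^0_\Lambda|$-part), is the technical heart of the argument.
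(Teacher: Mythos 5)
Your midpoint identity is correct: with $\gamma:=Q_+-Q_-$ and $\bar Q:=\tfrac12(Q_++Q_-)$, the exact affine--quadratic structure of $\Ebdf{Z\nu}{\cdot}$ does give $F(\kappa,\alpha)=\tro{D_{\bar Q}\,\gamma}$, and the piece $\tro{D_{\bar Q}\proj{\psi^{(+)}}}=\lambda_\Lambda^{(+)}+\cO(\alpha)$ together with the perturbative identification $\lambda_\Lambda^{(+)}=\lambda(\kappa\nur)+\cO(\alpha)$ are both sound. This is a genuinely different organization from the paper, which instead expands each of the four terms of the BDF energy separately. Where the paper's route is more laborious, yours is potentially cleaner; but the hard part of the proof is not circumvented, it is merely moved to the remainder $\tro{D_{\bar Q}\,r}$, and there your argument has a real gap.

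Concretely, the claimed bound $\|r\|_\cQ=\cO(\alpha)$ is false. Writing $r=\chi_{(-\ii,\mu_-]}(D_{Q_+})-\chi_{(-\ii,\mu_-]}(D_{Q_-})$ and expanding the resolvent difference in powers of $D_{Q_+}-D_{Q_-}=\alpha\Pi_\Lambda(V_{\rho_\gamma}-R_\gamma)\Pi_\Lambda$, the $n=1$ term is $\alpha$ times a $G_1$-type operator, whose $\cQ$-norm carries a $\sqrt{\log\Lambda}\sim\alpha^{-1/2}$ loss (this is exactly \eqref{eq:est-G1}, and it is why the paper only gets $\|Q_\pm(\alpha)-\Qlinpm(\kappa\nur)\|_\cQ=\cO(\sqrt\alpha)$ in \eqref{eq:QOsqrtalpha}). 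So $\|r\|_\cQ=\cO(\sqrt{\alpha})$, not $\cO(\alpha)$. More importantly, the statement $\tro{D_{\bar Q}\,r}=\cO(\alpha)$ cannot be obtained by any ``energy localization'' of $D_{\bar Q}$: the kinetic part $\tro{D^0_\Lambda r}=\tr(|D^0_\Lambda|(rB+Ar))$ (with $A=Q_+-\proj{\psi^{(+)}}$, $B=Q_-$) is paired against $\|Q_-\|_\cQ=\cO(\sqrt{\log\Lambda})=\cO(\alpha^{-1/2})$ and $\|r\|_\cQ=\cO(\sqrt\alpha)$, giving an $\cO(1)$ contribution, and the direct part $\alpha\tro{V_{\rho_{\bar Q}}r}$ is also $\cO(1)$ because $\|\rho_{\bar Q}\|_\cC\sim B_\Lambda\|\nu\|_\cC=\cO(\alpha^{-1})$. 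These two $\cO(1)$ pieces cancel each other up to $\cO(\alpha)$, and that cancellation — which produces the terms $\mp\frac{\alpha B_\Lambda}{1+\alpha B_\Lambda}\langle\phi,V_{\kappa\nur}\phi\rangle$ in \eqref{eq:part1} and \eqref{eq:part3} — is the technical heart of the theorem. Your proposal asserts smallness of the remainder without exhibiting this cancellation, and the paper explicitly warns that the naive estimate based on \eqref{eq:diffQpm} is wrong precisely for this reason. To salvage your approach you would need to compute $\tro{D_{\bar Q}\,r}$ to leading order (not just bound it), essentially reproducing the expansions in Appendix~\ref{appendix:kinetic}.

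A secondary point: the trace $\tro{D_{\bar Q}\,\gamma}$ needs a bit of justification. Since $\gamma$ is a difference of two $P-\Pm$'s rather than a single one, one should define this trace as $\tro{D_{\bar Q}Q_+}-\tro{D_{\bar Q}Q_-}$ and use the identity $\tr_0(D^0 Q_\pm)=\tr(|D^0|Q_\pm^2)$ to make each summand absolutely convergent; the midpoint rewriting is then legitimate by bilinearity. This is manageable, but it should be stated rather than taken for granted, since the estimate $\||D^0|^{1/2}Q\|_{\gS_2}^2\le\|Q\|_\cQ^2$ from \eqref{eq:estkineticnormQ} is genuinely quadratic and applies only to operators of the form $P-\Pm$.
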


\begin{remark}
 Theorem \ref{thm:thm2} shows that for $\alpha$ small enough, the interacting vacuum for $\kappa>\kappa_c(\alpha)$ has one electron more than the interacting vacuum for $\kappa<\kappa_c(\alpha)$, with $\kappa_c(\alpha)\sim_{\alpha\to0}1+\alpha B_\Lambda>1=\kappa_c$. In words, charge renormalization hinders pair production. 
\end{remark}

\begin{remark} One could look for the next order in the expansion \eqref{eq:F-expansion}, i.e. look for a number $c$ such that
$$F(\kappa,\alpha)=\lambda(\kappa\nur)+c\alpha+o_{\alpha\to0}(\alpha).$$
 Of particular importance would be the determination of the sign of $c$, since it would give the next order influence of the interactions on pair production. Indeed, the first term in the expansion of $F$, $\lambda(\kappa\nur)$ is purely an effect of charge renormalization, and not a direct consequence of the interactions between the particles (though charge renormalization is itself a consequence of the interactions). It can be shown that $c=\langle\phi,H(\kappa\nur)\phi\rangle$, where $\phi\in\gH$ is the eigenvector associated to $\lambda(\kappa\nur)$, and $H(\kappa\nur)$ is a self-adjoint operator on $\gH$. However, we could not determine if $H(\kappa\nur)$ had a sign, because of its very complicated form. For this reason, we do not provide the next order expansion of $F$ in this paper. The details are provided in \cite{Sab-phd}. In principle, our method enables to compute the expansion of $F$ to any order.
\end{remark}

\begin{remark}
 When the exchange term is included, the operators $P^0_-$ and $D^0$ should in principle be replaced by effective operators $\cP^0_-$ and $\cD^0$ on $\gH_\Lambda$ satisfying a non-linear equation and which depend on both $\alpha$ and $\Lambda$, see \cite{HaiLewSol-07}. Our strategy works the same with $\cP^0_-$ and $\cD^0$, but the estimates are even more technical. Some details on how to adapt our method will be provided in \cite{Sab-phd}. In particular, in Corollary \ref{coro}, the operator $D^0-\kappa V_\nur$ becomes 
 \begin{equation}\label{eq:def-Deff}
D_{\text{eff}}=g_1'(0)\alp\cdot(-i\nabla)+g_0(0)\beta-\frac{\kappa}{1+\alpha B_\Lambda}V_\nu,  
 \end{equation}
where
 $$g_1'(0)=1+\frac{2}{3\pi}\alpha\log\Lambda,\quad g_0(0)=\frac{g_1'(0)}{1-\frac{1}{3\pi}\alpha\log\Lambda},\quad B_\Lambda=\frac{2}{3\pi}\frac{\log\Lambda}{g_1'(0)}.$$
 In terms of mass and charge renormalization, the interpretation of this operator is not completely clear to us (see \cite{LieSie-00} and \cite[Sect. 2.5]{HaiLewSol-07} for a discussion).
\end{remark}

The rest of the article is devoted to the proofs of Theorem \ref{thm:thm1} and Theorem \ref{thm:thm2}. 

\section{Proofs}

\subsection{Proof of Theorem \ref{thm:thm1}}\label{sec:proof-thm1}

The strategy for solving \eqref{eq:thm} in \cite{HaiLewSer-05a} was to see the operator $$D_Q:=D^0_\Lambda+\alpha\Pi_\Lambda(V_{\rho_Q-Z\nu}-R_Q)\Pi_\Lambda$$
as a small perturbation of $D^0_\Lambda$ if $\alpha$ is small enough, and then use the resolvent formula to expand 
$$\chi_{(-\ii,\mu]}(D_Q)-\chi_{(-\ii,\mu]}(D^0_\Lambda)$$
in powers of $\alpha$, to find a functional to which they apply the fixed-point theorem of Banach-Picard. Here, we obviously cannot follow this strategy since $\alpha Z=\kappa$ is fixed and not small. Instead, we have to find a new reference operator $\Dref$, independent of $Q$, satisfying $D_Q=\Dref+\cO(\alpha)$, in an adequate sense. We choose
\begin{equation}\label{eq:def-dref}
\boxed{
 \Dref:=D^0_\Lambda-\Pi_\Lambda V_{\kappa\nur}\Pi_\Lambda+\alpha \Pi_\Lambda R_{1,0}(\kappa\nur)\Pi_\Lambda,
 }
\end{equation}
where\footnote{Here and in the sequel, for any operator $Q$ with integral kernel $Q(x,y)$, $R[Q]$ (or $R_Q$) is the operator which kernel is given by $\frac{Q(x,y)}{|x-y|}$, and for any density of charge $\rho$, $V[\rho]$ (or $V_\rho$) is the Coulomb potential generated by $\rho$, i.e. $V[\rho]:=\rho\star|\cdot|^{-1}$.} $R_{1,0}(\kappa\nur):=R[G_{1,0}(\kappa\nur)]$ is the exchange term associated to $G_{1,0}(\kappa\nur)$, which was defined in \eqref{eq:def-G10}, and $\nur$ was defined in \eqref{eq:def-nur}. 
\begin{remark}
 A more obvious choice for $\Dref$ would have been $\Dref=D^0_\Lambda-\Pi_\Lambda V_{\kappa\nu}\Pi_\Lambda$. Unfortunately, for this choice we do not have $D_Q-\Dref=\cO(\alpha)$ since there remains diverging terms in the expression $\Pi_\Lambda(V_{\rho_Q}-R_Q)\Pi_\Lambda$. These terms diverge precisely as $\log\Lambda$, and extracting them leads to the expression \eqref{eq:def-dref}. 
\end{remark}
With this new reference operator, Equation \eqref{eq:thm} is equivalent to
\begin{equation}\label{eq:thm-newref}
 Q=\chi_{(-\ii,\mu]}\left(\Dref+\alpha\Pi_\Lambda\left[V_{\rho_Q+\rhoref}-R_{Q+\Qref}\right]\Pi_\Lambda\right)-\Pref,
\end{equation}
where 
$$\Pref:=\chi_{(-\ii,\mu]}(\Dref),\quad \Qref:=\Pref-\Pm+G_{1,0}(\kappa\nur),\quad \rhoref:=\rho_{\Pref-\Pm}-B_\Lambda\kappa\nur.$$
Under adequate conditions on $\nu$, we will show that $\|(\Qref,\rhoref)\|_\cX=\cO_{\alpha\to0}(1)$, so that we are able to solve \eqref{eq:thm-newref} by a fixed-point argument. First, we need to derive some properties of the operator $\Dref$.

\subsubsection{Spectral properties of the operator $\Dref$} We first recall the following general result, which was partly contained in \cite{HaiLewSer-05a}.

\begin{lemma}\label{lemma:basic-estimates}
 Let $\rho\in L^2(\R^3)\cap\cC$ and $Q\in\cQ$. Then, we have
 \begin{equation}\label{eq:ineq-VLii}
  \|V_\rho\|_{L^\ii}\le2\sqrt{\pi}\|\rho\|_{L^2\cap\cC},
 \end{equation}
 and there exists $C>0$ such that for all $\eta\in\R$, 
 \begin{equation}\label{eq:ineq-VD0-1S6}
  \|V_{\rho}|D^0+i\eta|^{-1}\|_{\gS_6}\le \frac{C}{E(\eta)^{1/2}}\|\rho\|_\cC.
 \end{equation}
 For any $\eta\in\R$, we have
\begin{equation}\label{eq:ineq-RQD0-1S2}
 \|R_Q|D^0+i\eta|^{-1}\|_{\gS_2}\le \frac{C_0}{E(\eta)^{1/2}}\|Q\|_\cQ,
\end{equation}
where 
\begin{equation}\label{eq:def-C0}
 C_0:=\frac{\sqrt{2}}{\pi^2}\inf_{\tau\in(0,2)}\sup_{x\in\R^3}\left(E(2x)^\tau\int_{\R^3}\frac{\d{u}}{E(2u)^{1+\tau}|u-x|^2}\right).
\end{equation}
Finally, we have the estimates,
\begin{equation}\label{eq:ineq-RQD0-1L2L2}
 \|R_Q|D^0|^{-1}\|_{L^2\to L^2}=\|R_Q\|_{H^1\to L^2}\le 2\|Q\|_{\gS_2},
\end{equation}
\begin{equation}\label{eq:ineq-RQD0-12L2L2}
 \|R_Q|D^0|^{-1/2}\|_{L^2\to L^2}=\|R_Q\|_{H^{1/2}\to L^2}\le \frac{\pi}{2}\|Q\|_\cQ.
 \end{equation}
\end{lemma}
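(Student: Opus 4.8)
The plan is to prove the five inequalities separately; all are Fourier-analytic in nature, and only \eqref{eq:ineq-RQD0-1S2} will require real work. For \eqref{eq:ineq-VLii}, I would start from $\widehat{V_\rho}(p)=4\pi\,\widehat\rho(p)/|p|^2$, which gives pointwise $|V_\rho(x)|\le\sqrt{2/\pi}\int_{\R^3}|\widehat\rho(p)|\,|p|^{-2}\d{p}$; splitting this integral at $|p|=1$ and applying Cauchy--Schwarz to each piece --- using $\int_{|p|\le1}|p|^{-2}\d{p}=\int_{|p|\ge1}|p|^{-4}\d{p}=4\pi$ --- the low-frequency part is controlled by a constant times $\|\rho\|_\cC$ and the high-frequency part by a constant times $\|\rho\|_{L^2}$, yielding \eqref{eq:ineq-VLii}. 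For \eqref{eq:ineq-VD0-1S6}, I would write $|D^0+i\eta|^{-1}=m_\eta(-i\nabla)$ with $m_\eta(p)=(1+|p|^2+\eta^2)^{-1/2}$ and invoke the Kato--Seiler--Simon inequality $\|f(x)\,g(-i\nabla)\|_{\gS_6}\le C\|f\|_{L^6}\|g\|_{L^6}$ with $f=V_\rho$, $g=m_\eta$: a scaling computation gives $\|m_\eta\|_{L^6}=C\,E(\eta)^{-1/2}$, while $\|V_\rho\|_{L^6}\le C\|\rho\|_\cC$ follows from the Poisson equation $-\Delta V_\rho=4\pi\rho$ (so $\|V_\rho\|_{\dot H^1}=\sqrt{4\pi}\,\|\rho\|_\cC$) together with the Sobolev embedding $\dot H^1(\R^3)\hookrightarrow L^6(\R^3)$.

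Estimate \eqref{eq:ineq-RQD0-1S2} is the technical core, and the one I expect to be the main obstacle. The first step is to express the Hilbert--Schmidt norm through the Fourier transform of the kernel: since multiplying the kernel of $Q$ by $|x-y|^{-1}$ becomes, in Fourier space, a convolution of $\widehat Q$ against $\tfrac{1}{2\pi^2}|k|^{-2}$, and composing with $m_\eta(-i\nabla)$ on the right multiplies the Fourier transform of the kernel by $m_\eta$ in the second variable, Plancherel gives
\[
\|R_Q|D^0+i\eta|^{-1}\|_{\gS_2}^2=\frac{1}{4\pi^4}\int_{\R^3}\int_{\R^3}m_\eta(q)^2\left|\int_{\R^3}\widehat Q(s-k,q-k)\,\frac{\d{k}}{|k|^2}\right|^2\d{s}\,\d{q}.
\]
Next I would bound $m_\eta(q)^2\le\big(E(\eta)E(q)\big)^{-1}$ by the arithmetic--geometric mean inequality to extract the factor $E(\eta)^{-1}$, and pass to center-of-mass and relative momenta; in these variables $R_Q$ acts as a convolution against $|k|^{-2}$ in the center-of-mass variable, while the $\|\cdot\|_\cQ$-weight is $E(\cdot)^2$ in the relative momentum times $E(\cdot)$ in the (doubled) center-of-mass momentum. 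I would then estimate the resulting integral operator by a Schur test --- a Cauchy--Schwarz in the $k$-variable against a weight of the form $E(2u)^{\mp\tau}|k|^{-2}$ --- which reduces matters to the supremum over $x\in\R^3$ of $E(2x)^{\tau}\int_{\R^3}E(2u)^{-1-\tau}|u-x|^{-2}\d{u}$; optimizing over $\tau\in(0,2)$, with the Jacobian of the change of variables combining with $1/4\pi^4$ to produce the prefactor $2/\pi^4$, yields exactly the constant $C_0$ of \eqref{eq:def-C0}. The delicate point will be arranging the two Schur weights so that this precise constant --- rather than merely a bound of the correct form --- comes out.

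The last two bounds are short. For \eqref{eq:ineq-RQD0-1L2L2}, setting $h=|D^0|^{-1}f$ (so $\|f\|_{L^2}=\|h\|_{H^1}$), Cauchy--Schwarz gives $|R_Q h(x)|\le\big(\int_{\R^3}|Q(x,y)|^2\d{y}\big)^{1/2}\big(\int_{\R^3}|x-y|^{-2}|h(y)|^2\d{y}\big)^{1/2}$, and the three-dimensional Hardy inequality $\int_{\R^3}|x-y|^{-2}|h(y)|^2\d{y}\le4\|\nabla h\|_{L^2}^2$ together with integration in $x$ gives $\|R_Q h\|_{L^2}\le2\|Q\|_{\gS_2}\|\nabla h\|_{L^2}\le2\|Q\|_{\gS_2}\|h\|_{H^1}$. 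For \eqref{eq:ineq-RQD0-12L2L2}, I would instead split $|x-y|^{-1}=|x-y|^{-1/2}\cdot|x-y|^{-1/2}$ in the Cauchy--Schwarz step and apply Kato's inequality $|x-y|^{-1}\le\tfrac{\pi}{2}\sqrt{-\Delta}$ to the $h$-factor, so that $\|R_Q h\|_{L^2}^2\le\tfrac{\pi}{2}\|h\|_{\dot H^{1/2}}^2\int_{\R^3}\int_{\R^3}|x-y|^{-1}|Q(x,y)|^2\d{x}\,\d{y}$; the Hardy--Kato inequality recalled above bounds the last double integral by $\tfrac{\pi}{2}\tr(|D^0_\Lambda|Q^2)=\tfrac{\pi}{2}\int\int E(p)\,|\widehat Q(p,q)|^2\d{p}\,\d{q}$, and since $E(p)^2\le E(p-q)^2E(p+q)^2$ (an elementary consequence of $2|p|\le|p-q|+|p+q|$) this is at most $\tfrac{\pi}{2}\|Q\|_\cQ^2$; collecting constants gives $\|R_Q h\|_{L^2}\le\tfrac{\pi}{2}\|Q\|_\cQ\|h\|_{H^{1/2}}$, which is \eqref{eq:ineq-RQD0-12L2L2}.
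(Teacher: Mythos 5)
Your treatment of the four ``easy'' estimates is essentially correct. For \eqref{eq:ineq-VLii} the paper simply cites \cite[Lemma 3]{HaiLewSer-05a}; your low/high--frequency split of $\int|\hat\rho(p)||p|^{-2}\d p$ is the standard derivation of that lemma, and your identity $\|V_\rho\|_{\dot H^1}=2\sqrt\pi\,\|\rho\|_\cC$ is correct. For \eqref{eq:ineq-VD0-1S6} your argument is identical to the paper's (Kato--Seiler--Simon, Sobolev, and the explicit $L^6$-norm of the resolvent symbol). For \eqref{eq:ineq-RQD0-1L2L2} and \eqref{eq:ineq-RQD0-12L2L2} you follow exactly the paper's Cauchy--Schwarz--in--$y$ plus Hardy (resp.\ Hardy--Kato applied twice) argument, and you correctly supply the inequality $E(p)\le E(p-q)E(p+q)$, which the paper invokes implicitly in writing $\tr(|D^0|Q^2)\le\|Q\|_\cQ^2$.

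The one genuine divergence, and the place where your uncertainty is warranted, is \eqref{eq:ineq-RQD0-1S2}. The paper's proof rests on the pointwise inequality
\[
\frac{E(p+q)\,E(\eta)}{\bigl(E(p)^2+\eta^2\bigr)\,E(p-q)}\le 2,
\]
which is chosen precisely so that the resolvent weight $(E(p)^2+\eta^2)^{-1}$ is traded for a \emph{product} $\frac{2}{E(\eta)}\cdot\frac{E(p-q)}{E(p+q)}$ of a function of $p-q$ and a function of $p+q$. Since the $\cQ$-weight $E(p-q)^2E(p+q)$ and the Coulomb convolution (which acts diagonally, shifting $p$ and $q$ by the same $k$, i.e.\ acting only on $p+q$) both separate in these variables, the estimate then reduces to a single $L^2(\d(p+q))$ Schur test for each fixed $p-q$; that Schur test is exactly \cite[Lemma 8]{HaiLewSer-05a} and produces the $\tfrac{1}{\pi^2}\inf_\tau\sup_x(\cdots)$, the remaining $\sqrt2$ in $C_0$ being the square root of the $2$ above. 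Your route starts instead from the AM--GM bound $(E(q)^2+\eta^2)^{-1}\le (E(q)E(\eta))^{-1}$, which leaves a weight $E(q)^{-1}$ that does \emph{not} factor in the $(p-q,\,p+q)$ variables; $q=\tfrac12\bigl((p+q)-(p-q)\bigr)$ mixes both. So your Schur test in the convolution variable, with $p-q$ held fixed, has an output weight depending on $p-q$ as well, and the optimization you sketch does not close. To repair it you would have to trade $E(q)^{-1}$ for a product of functions of $p\pm q$, and the cheapest such trade, $E(q)^{-1}\le 2\sqrt2\,E(p-q)/E(p+q)$, costs a factor $2\sqrt2$ rather than the $2$ coming from the paper's inequality; the resulting constant would then be $2^{3/4}C_0/\sqrt2\ne C_0$. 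In short, the obstacle you flag is real, and the missing ingredient is the paper's pointwise inequality (not the AM--GM bound), whose role is specifically to produce a weight that separates in the variables $p-q$, $p+q$ before the Schur test is applied.
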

\begin{proof}
 Inequality \eqref{eq:ineq-VLii} was proved in \cite[Lemma 3]{HaiLewSer-05a}. Inequality \eqref{eq:ineq-VD0-1S6} is a consequence of the Kato-Seiler-Simon inequality
 \begin{equation}
  \forall p\ge 2,\:\forall f,g\in L^p(\R^3),\quad \|f(x)g(-i\nabla)\|_{\gS_p}\le (2\pi)^{-3/p}\|f\|_{L^p}\|g\|_{L^p},
 \end{equation}
 of the Sobolev inequality $\|V_\rho\|_{L^6}\le C\|\nabla V_\rho\|_{L^2}=C\|\rho\|_\cC$, and of 
  $$\left\|\frac{1}{|\alp\cdot p+\beta+i\eta|}\right\|_{L^6}^6=\int_{\R^3}\frac{\d{p}}{(E(p)^2+\eta^2)^3}=\frac{C}{E(\eta)^3}.$$
 Finally, inequality \eqref{eq:ineq-RQD0-1S2} is proved using the fact that for all $p,q\in\R^3$,
 $$\frac{E(p+q)E(\eta)}{(E(p)^2+\eta^2)E(p-q)}\le 2,$$
 hence
 \begin{multline*}
\|R_Q|D^0+i\eta|^{-1}\|^2_{\gS_2}=\iint\frac{|\hat{R_Q}(p,q)|^2}{E(p)^2+\eta^2}\d{p}\d{q}\\
\le\frac{2}{E(\eta)}\iint\frac{E(p-q)^2}{E(p+q)}|\hat{R_Q}(p,q)|^2\d{p}\d{q}\le\frac{C_0^2}{E(\eta)}\|Q\|_\cQ^2,  
 \end{multline*}
 where the second inequality was proved in \cite[Lemma 8]{HaiLewSer-05a}. To prove \eqref{eq:ineq-RQD0-1L2L2}, let $\phi\in H^1$. Then, by the H\"older inequality we have for all $x\in\R^3$,
 $$|R_Q\phi(x)|^2\le\int_{\R^3}|Q(x,y)|^2\d{y}\int_{\R^3}\frac{|\phi(y)|^2}{|x-y|^2}\d{y}.$$
 By the Hardy inequality, one deduces
 $$\|R_Q\phi\|_{L^2}^2\le4\|Q\|_{\gS_2}^2\|\nabla\phi\|_{L^2}^2\le4\|Q\|_{\gS_2}^2\||D^0|\phi\|_{L^2}^2.$$
 The proof of \eqref{eq:ineq-RQD0-12L2L2} relies on the same computation for $\phi\in H^{1/2}$:
 $$|R_Q\phi(x)|^2\le\int_{\R^3}\frac{|Q(x,y)|^2}{|x-y|}\d{y}\int_{\R^3}\frac{|\phi(y)|^2}{|x-y|}\d{y},$$
 so that by the Hardy-Kato inequality applied to both terms,
 $$\|R_Q\phi\|_{L^2}^2\le\frac{\pi^2}{4}\tr(|D^0|Q^2)\langle|D^0|\phi,\phi\rangle_{L^2}\le\frac{\pi^2}{4}\|Q\|_\cQ^2\||D^0|^{1/2}\phi\|_{L^2}^2.$$
 
\end{proof}

\begin{lemma}[Properties of $G_{1,0}(\rho)$]\label{lemma:prop-G10} Let $\rho\in L^2(\R^3)\cap\cC$. Then, for any $\Lambda>1$, the operator $G_{1,0}(\rho)$ defined by \eqref{eq:def-G10} satisfies for some $C>0$:
\begin{equation}\label{eq:ineq-G10Q}
\|G_{1,0}(\rho)\|_\cQ\le C\sqrt{\log\Lambda}\|\rho\|_{L^2\cap\cC}, 
\end{equation}
\begin{equation}\label{eq:ineq-G10S2}
\|G_{1,0}(\rho)\|_{\gS_2}\le 4\|\rho\|_\cC.
\end{equation}
\end{lemma}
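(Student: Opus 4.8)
The plan is to compute the Fourier-space integral kernel of $G_{1,0}(\rho)$ explicitly, read off the cancellation it carries on the diagonal $\{p=q\}$, and then estimate the two norms separately.

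\emph{Kernel computation.} In Fourier variables $D^0_\Lambda$ is multiplication by the Hermitian matrix $D^0(p)=\alp\cdot p+\beta$ with $D^0(p)^2=E(p)^2\ind$, and $\Pi_\Lambda V_\rho\Pi_\Lambda$ acts as the \emph{scalar} kernel $(2\pi)^{-3/2}\widehat{V_\rho}(p-q)$ cut off to $|p|,|q|\le\Lambda$, where $\widehat{V_\rho}(k)=4\pi|k|^{-2}\hat\rho(k)$. Pulling the scalar out of the $\eta$-integral in \eqref{eq:def-G10} and writing $(D^0(p)+i\eta)^{-1}=(D^0(p)-i\eta)(E(p)^2+\eta^2)^{-1}$, the part odd in $\eta$ drops and the elementary integrals $\int_\R\frac{\d{\eta}}{(E(p)^2+\eta^2)(E(q)^2+\eta^2)}=\frac{\pi}{E(p)E(q)(E(p)+E(q))}$ and $\int_\R\frac{\eta^2\,\d{\eta}}{(E(p)^2+\eta^2)(E(q)^2+\eta^2)}=\frac{\pi}{E(p)+E(q)}$ give
\[
\widehat{G_{1,0}(\rho)}(p,q)=\frac{(2\pi)^{-3/2}}{2}\,\widehat{V_\rho}(p-q)\,\frac{D^0(p)D^0(q)-E(p)E(q)\ind}{E(p)E(q)(E(p)+E(q))}\,\ind_{|p|,|q|\le\Lambda}.
\]
Equivalently, closing the $\eta$-contour shows that the $++$ and $--$ blocks of $G_{1,0}(\rho)$ vanish (as they must, being the first-order variation of a spectral projection), and that $\Pp G_{1,0}(\rho)\Pm$ has kernel $-\frac{(2\pi)^{-3/2}}{E(p)+E(q)}P^0_+(p)\widehat{V_\rho}(p-q)P^0_-(q)$ with $P^0_\pm(p)=\tfrac12(\ind\pm D^0(p)/E(p))$. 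Since the two off-diagonal blocks are pointwise Frobenius-orthogonal, $\|G_{1,0}(\rho)\|_{\gS_2}^2=2\|\Pp G_{1,0}(\rho)\Pm\|_{\gS_2}^2$ and likewise in $\cQ$.

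\emph{The diagonal cancellation.} From $\tr_{\C^4}(D^0(p)D^0(q))=4(1+p\cdot q)$ one computes $|D^0(p)D^0(q)-E(p)E(q)\ind|^2=8E(p)E(q)\bigl(E(p)E(q)-1-p\cdot q\bigr)$ (equivalently $|P^0_+(p)P^0_-(q)|^2=\tr_{\C^4}(P^0_+(p)P^0_-(q))=\frac{E(p)E(q)-1-p\cdot q}{E(p)E(q)}$), together with the key identity
\[
E(p)E(q)-1-p\cdot q=\frac{|p-q|^2+|p\times q|^2}{E(p)E(q)+1+p\cdot q}=\frac{2\bigl(|p-q|^2+|p\times q|^2\bigr)}{(E(p)+E(q))^2-|p-q|^2}.
\]
Thus the kernel vanishes like $|p-q|^2$ on the diagonal (so $G_{1,0}(\rho)$ is at least Hilbert--Schmidt), and, crucially, for large momenta the denominator $(E(p)+E(q))^2-|p-q|^2=2(E(p)E(q)+1+p\cdot q)$ grows like $\max(|p|,|q|)^2$. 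One also uses $|p\times q|=|(p-q)\times q|\le|p-q|\min(|p|,|q|)$, which comes from $p\times q=(p-q)\times q=-p\times(p-q)$.

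\emph{The $\gS_2$ bound \eqref{eq:ineq-G10S2}.} Insert the identities above into $\|G_{1,0}(\rho)\|_{\gS_2}^2=\iint|\widehat{G_{1,0}(\rho)}(p,q)|^2\,\d{p}\,\d{q}$, estimate $|p\times q|^2$ as just described, and \emph{keep} the factor $E(p)E(q)+1+p\cdot q$ in the denominator. After the change of variable $k=p-q$ one is reduced to showing that $\int_{\R^3}\frac{E(p)E(p-k)-1-p\cdot(p-k)}{(E(p)+E(p-k))^2E(p)E(p-k)}\,\d{p}\le C|k|^2$ uniformly in $k$: on $\{|p|\ge 2|k|\}$ the integrand is $\lesssim|k|^2|p|^{-4}$, which is integrable at infinity (the cut-off is not even needed here); on $\{|p|<2|k|\}$ the only dangerous set is a tube of transverse size $O(1)$ around $p=k/2$, on which $(E(p)+E(p-k))^2-|k|^2$ is comparable to $1+\dist(p-\tfrac k2,\R k)^2$ while $E(p)E(p-k)\gtrsim|k|$, and integrating there costs only $\lesssim|k|\log(2+|k|)\lesssim|k|^2$. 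Hence $\|G_{1,0}(\rho)\|_{\gS_2}^2\le C\int_{\R^3}|k|^{-2}|\hat\rho(k)|^2\,\d{k}$, and bookkeeping of the constants (the $4$ from $\tr_{\C^4}\ind$, the $2\pi$ powers, the $4\pi$ in $\widehat{V_\rho}$) produces the explicit constant $4$.

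\emph{The $\cQ$ bound \eqref{eq:ineq-G10Q}.} Repeat the same computation with the weight $E(p-q)^2E(p+q)$; bounding $E(p+q)\le E(p)+E(q)$ and writing $E(p-q)^2=1+|k|^2$, one must now control $\int_{|p|\le\Lambda}\frac{\d{p}}{(E(p)+E(p-k))\bigl((E(p)+E(p-k))^2-|k|^2\bigr)}$ against $(1+|k|^{-2})|\hat\rho(k)|^2$. On $\{|p|\ge 2|k|\}$ this integrand is only $\lesssim|p|^{-3}$, so the $p$-integral is $\lesssim\log(\Lambda/|k|)\lesssim\log\Lambda$ — this is exactly the source of the logarithmic divergence of $B_\Lambda$ in \eqref{eq:dev-Blambda} — while the tube around $p=k/2$ again gives a contribution $O(\log(2+|k|))=O(\log\Lambda)$. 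Integrating against $(1+|k|^{-2})|\hat\rho(k)|^2$ yields $\|G_{1,0}(\rho)\|_\cQ^2\le C\log\Lambda\bigl(\|\rho\|_\cC^2+\|\rho\|_{L^2}^2\bigr)$, which is \eqref{eq:ineq-G10Q}. The delicate point in both estimates is the near-diagonal analysis: localizing the region where the energy denominator $E(p)+E(p-k)-|k|$ is smallest (the thin tube around $p=\tfrac k2$) and showing it costs only a constant for $\gS_2$, and only $\log\Lambda$ for $\cQ$, rather than a power of $\Lambda$; keeping enough precision in the $\gS_2$ estimate to reach the clean constant $4$ is the remaining, purely computational, nuisance.
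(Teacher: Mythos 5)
Your kernel computation recovers exactly the formula the paper quotes from \cite{HaiLewSer-05a}, $\hat{G_{1,0}(\rho)}(p,q)=\frac{1}{2(2\pi)^{3/2}}\hat{V_\rho}(p-q)M(p,q)$ with $M$ as in \eqref{eq:def-M}, and the identity
$$E(p)E(q)-1-p\cdot q=\frac{|p-q|^2+|p\times q|^2}{E(p)E(q)+1+p\cdot q},$$
together with $|p\times q|\le|p-q|\min(|p|,|q|)$, is precisely the algebra behind the pointwise estimate $|M(p,q)|^2\le 16|p-q|^2/E(p+q)^4$ that the paper cites from \cite[Lemma 12]{HaiLewSer-05a}. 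Up to this point you and the paper agree. The divergence is in what comes next: the paper inserts this pointwise bound and changes variables, getting the $\gS_2$ estimate in one line, and cites \cite[Lemma 11]{HaiLewSer-05a} outright for the $\cQ$ estimate; you instead bound $\int|M(p,p-k)|^2\d{p}$ and $\int E(2p-k)|M(p,p-k)|^2\d{p}$ directly by a two-region decomposition. That is a legitimate, more self-contained route, but as written the execution has several inaccuracies. On $\{|p|\ge 2|k|\}$ the correct bound is $|k|^2E(p)^{-4}$ (resp.\ $|k|^2E(p)^{-3}$), not $|k|^2|p|^{-4}$: the latter is not integrable down to $p=0$, which lies in the outer region when $|k|$ is small. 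On $\{|p|<2|k|\}$ the ``near-singular tube'' picture does not match the actual behaviour: $(E(p)+E(p-k))^2-|k|^2 = 2\bigl(E(p)E(p-k)+1+p\cdot(p-k)\bigr)\ge 4$ everywhere on $\R^3$ (the smallness of $E(p)+E(p-k)-|k|\sim|k|^{-1}$ near $p=k/2$ is exactly compensated by $E(p)+E(p-k)+|k|\sim|k|$), so there is no dangerous set; the integrand is uniformly $O(|k|^{-2})$ on the inner ball, whose volume is $\sim|k|^3$, giving a contribution $O(|k|)$, not $O(|k|\log(2+|k|))$. Moreover $|k|\log(2+|k|)\lesssim|k|^2$ is not a uniform inequality for $k\in(0,1]$, so $|k|\le1$ and $|k|\ge1$ must be treated separately. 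Finally, the explicit constant $4$ in \eqref{eq:ineq-G10S2} is asserted (``bookkeeping\ldots\ produces $4$'') but not derived; getting a numerical constant out of a multi-region argument with unspecified implied constants is not automatic, whereas the cited pointwise bound makes it immediate. In summary, your outline is correct and genuinely more self-contained than the paper's two-citation proof, but the region estimates need to be reworked carefully to constitute a complete argument; done properly, you would in effect be reproving \cite[Lemma 12]{HaiLewSer-05a}.
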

\begin{proof}
 Inequality \eqref{eq:ineq-G10Q} was proved in \cite[Lemma 11]{HaiLewSer-05a}. It relies of the explicit form of the integral kernel of $G_{1,0}(\rho)$ in Fourier space, obtained by the residuum formula:
 \begin{equation}
  \forall p,q\in\R^3,\quad \hat{G_{1,0}(\rho)}(p,q)=\frac{1}{2(2\pi)^{3/2}}\hat{V_\rho}(p-q)M(p,q),
 \end{equation}
 where the function $M$ is defined by
 \begin{equation}\label{eq:def-M}
  M(p,q)=\frac{1}{E(p)+E(q)}\left(\frac{\alp\cdot p+\beta}{E(p)}\frac{\alp\cdot q+\beta}{E(q)}-1\right).
 \end{equation}
  It was proved in \cite[Lemma 12]{HaiLewSer-05a} that 
 \begin{equation}
  |M(p,q)|^2\le16\frac{|p-q|^2}{E(p+q)^4},
 \end{equation}
 hence 
 $$\|G_{1,0}(\rho)\|_{\gS_2}^2\le \frac{64}{\pi}\int_{\R^3}|k|^2|\hat{V_\rho}(k)|^2\d{k}\int_{\R^3}\frac{\d{\ell}}{E(\ell)^4}=16\|\rho\|_\cC^2,$$
 where we used that $\int_{\R^3}E(\ell)^{-4}\d{\ell}=\pi^2$. 
\end{proof}

Combining Lemmas \ref{lemma:basic-estimates} and \ref{lemma:prop-G10}, we infer that for any $\alpha>0$ and $\Lambda>1$, the operator $\Dref$ is bounded and self-adjoint on $\gH_\Lambda$. The following result ensures that $\Dref$ has a gap around $\mu$ for $\alpha$ small enough and $\Lambda$ large enough. For any $\omega\in L^2(\R^3)\cap\cC$, we introduce the auxiliary operator on $\gH_\Lambda$,
\begin{equation}
 \Dref(\omega):=D^0_\Lambda-\Pi_\Lambda V_\omega\Pi_\Lambda+\alpha \Pi_\Lambda R_{1,0}(\omega)\Pi_\Lambda,
\end{equation}
so that $\Dref=\Dref(\kappa\nur)$. 

\begin{lemma}\label{lemma:gap-Dref}
 Let $\nu\in L^2(\R^3)\cap\cC$ and $\mu\in(-1,1)$. Let $\epsilon\ge0$ such that for all $s\in[1-\epsilon,1+\epsilon]$, $\mu\notin\sigma_\gH(D^{s\nu})$. Then for all $L>0$, there exists $\alpha_0=\alpha_0(\mu,\nu,\epsilon,L)>0$, $\Lambda_0=\Lambda_0(\mu,\nu,\epsilon,L)>1$ and $\xi_0=\xi_0(\mu,\nu,\epsilon,L)>0$ such that for all $0\le\alpha\le\alpha_0$, for all $\Lambda\ge\Lambda_0$ satisfying $\alpha\sqrt{\log\Lambda}\le L$, and for all $s\in[1-\epsilon,1+\epsilon]$, we have $d(\mu,\sigma_{\gH_\Lambda}(\Dref(s\nu)))\ge\xi_0$.
\end{lemma}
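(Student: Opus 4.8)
The plan is to run a Birman--Schwinger argument, comparing $\Dref(s\nu)$ on $\gH_\Lambda$ with the ``bare'' operators $D^{s\nu}=D^0-sV_\nu$ on $\gH$, for which the hypothesis of the lemma provides a gap at $\mu$. I would first record a \emph{uniform} bare gap. Since $\nu\in\cC=\dot H^{-1}(\R^3)$, Sobolev's inequality gives $V_\nu\in L^6(\R^3)$, while the symbol of $(D^0-\mu')^{-1}$ lies in $L^6(\R^3;\cM_4)$ for every real $\mu'$ with $|\mu'|<1$ (it decays like $|p|^{-1}$); hence by the Kato--Seiler--Simon inequality $V_\nu(D^0-\mu')^{-1}\in\gS_6(\gH)$, so $(D^0-\mu')^{-1}V_\nu$ is compact and $\mathrm{Id}-s(D^0-\mu')^{-1}V_\nu$ is invertible on $\gH$ precisely when $\mu'\notin\sigma_\gH(D^{s\nu})$ (because $D^{s\nu}-\mu'=(D^0-\mu')(\mathrm{Id}-s(D^0-\mu')^{-1}V_\nu)$ and $(D^0-\mu')$ is boundedly invertible). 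As the map $(s,\mu')\mapsto\mathrm{Id}-s(D^0-\mu')^{-1}V_\nu$ is norm-continuous and invertible on the compact segment $[1-\epsilon,1+\epsilon]\times\{\mu\}$ by assumption, the tube lemma and continuity of inversion yield $\delta'\in(0,1-|\mu|)$ and $C_1>0$ with
$$\big\|\big(\mathrm{Id}-s(D^0-\mu')^{-1}V_\nu\big)^{-1}\big\|_{\gH\to\gH}\le C_1,\qquad |\mu'-\mu|\le\delta',\ s\in[1-\epsilon,1+\epsilon],$$
and moreover $\{s(D^0-\mu')^{-1}V_\nu:\ |\mu'-\mu|\le\delta',\ s\in[1-\epsilon,1+\epsilon]\}$ is a compact subset of the compact operators on $\gH$. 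This is the only place where the assumption $\mu\notin\sigma_\gH(D^{s\nu})$ enters.

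For $|\mu'-\mu|\le\delta'$ I would then factor, on $\gH_\Lambda$,
$$\Dref(s\nu)-\mu'=(D^0_\Lambda-\mu')\Big(\mathrm{Id}_{\gH_\Lambda}+(D^0_\Lambda-\mu')^{-1}\big[-\Pi_\Lambda V_{s\nu}\Pi_\Lambda+\alpha\,\Pi_\Lambda R_{1,0}(s\nu)\Pi_\Lambda\big]\Big),$$
where $(D^0_\Lambda-\mu')^{-1}$ is bounded with norm $(1-|\mu'|)^{-1}$ since $\sigma(D^0_\Lambda)\subset\{|\lambda|\ge1\}$. The potential term $(D^0_\Lambda-\mu')^{-1}\Pi_\Lambda V_{s\nu}\Pi_\Lambda$ equals the compression to $\gH_\Lambda$ of $s(D^0-\mu')^{-1}V_\nu$ (move $\Pi_\Lambda$ through the resolvent of $D^0$); as $\Pi_\Lambda\to\mathrm{Id}$ strongly, it converges to $s(D^0-\mu')^{-1}V_\nu$ in operator norm, \emph{uniformly} over the compact parameter family above. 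The exchange term is where the real difficulty lies: $\|\alpha\,\Pi_\Lambda R_{1,0}(s\nu)\Pi_\Lambda\|_{\gH_\Lambda\to\gH_\Lambda}$ is \emph{not} $\cO(\alpha)$ — estimating it via \eqref{eq:ineq-G10Q} and \eqref{eq:ineq-RQD0-12L2L2} yields bounds growing with $\Lambda$ — so one cannot treat it as a small norm perturbation of $D^0_\Lambda-\Pi_\Lambda V_{s\nu}\Pi_\Lambda$. The resolution is to exploit smallness relative to $D^0_\Lambda$ rather than in operator norm: the \emph{$\Lambda$-independent} Hilbert--Schmidt bound \eqref{eq:ineq-G10S2}, $\|G_{1,0}(s\nu)\|_{\gS_2}\le 4\|s\nu\|_\cC$, combined with \eqref{eq:ineq-RQD0-1L2L2}, gives
$$\big\|R_{1,0}(s\nu)\,|D^0_\Lambda|^{-1}\big\|_{\gH_\Lambda\to\gH_\Lambda}\le 2\|G_{1,0}(s\nu)\|_{\gS_2}\le 8(1+\epsilon)\|\nu\|_\cC,$$
uniformly in $\Lambda$, and since $\big\||D^0_\Lambda|(D^0_\Lambda-\mu')^{-1}\big\|\le(1-|\mu|-\delta')^{-1}$, the exchange contribution $\alpha(D^0_\Lambda-\mu')^{-1}\Pi_\Lambda R_{1,0}(s\nu)\Pi_\Lambda$ has operator norm $\le C\alpha$ with $C=C(\mu,\nu,\epsilon)$.

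I would conclude by a Neumann series. For $\Lambda\ge\Lambda_0$ and $\alpha\le\alpha_0$ (chosen from the quantitative versions of the above; one checks these may be taken to depend only on $\mu,\nu,\epsilon$, so the restriction $\alpha\sqrt{\log\Lambda}\le L$ is not actually used in this lemma), the operator $\mathrm{Id}_{\gH_\Lambda}+(D^0_\Lambda-\mu')^{-1}[-\Pi_\Lambda V_{s\nu}\Pi_\Lambda+\alpha\Pi_\Lambda R_{1,0}(s\nu)\Pi_\Lambda]$ differs from the compression to $\gH_\Lambda$ of $\mathrm{Id}_\gH-s(D^0-\mu')^{-1}V_\nu$ by at most $\tfrac1{2C_1}$ in operator norm, uniformly in $|\mu'-\mu|\le\delta'$ and $s\in[1-\epsilon,1+\epsilon]$; by the first paragraph and a Neumann series it is therefore invertible on $\gH_\Lambda$ with inverse of norm $\le 2C_1$. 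Hence $\Dref(s\nu)-\mu'$ is invertible with $\|(\Dref(s\nu)-\mu')^{-1}\|\le 2C_1(1-|\mu|-\delta')^{-1}$ for all $|\mu'-\mu|\le\delta'$ and $s\in[1-\epsilon,1+\epsilon]$, which gives the lemma with $\xi_0:=\delta'$. The only genuine obstacle in this scheme is the failure of the exchange term to be small in operator norm; it is circumvented by the relative bound coming from the cut-off-independent Hilbert--Schmidt estimate on $G_{1,0}$.
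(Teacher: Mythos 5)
Your argument is correct, but it is a genuinely different route from the paper's. The paper proves the lemma by contradiction: it assumes sequences $\alpha_n\to0$, $\Lambda_n\to\infty$, $s_n\to s$ along which the gap degenerates, produces approximate eigenvectors $\phi_n$, shows that the condition $\alpha_n\sqrt{\log\Lambda_n}\le L$ together with $\eqref{eq:ineq-VD0-1S6}$ and $\eqref{eq:ineq-RQD0-1S2}$ gives a uniform $H^1$ bound on $\phi_n$, extracts a weak limit, kills the exchange contribution with the cut-off-independent Hilbert--Schmidt bound $\eqref{eq:ineq-G10S2}$, and finally combines the compactness of $V_{s\nu}(D^0)^{-1}$ with the hypothesis $\mu\notin\sigma_\gH(D^{s\nu})$ and Weyl's criterion to reach a contradiction. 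Your proof instead factors $\Dref(s\nu)-\mu'=(D^0_\Lambda-\mu')\bigl(\mathrm{Id}+(D^0_\Lambda-\mu')^{-1}(-\Pi_\Lambda V_{s\nu}\Pi_\Lambda+\alpha\Pi_\Lambda R_{1,0}(s\nu)\Pi_\Lambda)\bigr)$, establishes a uniform bare gap on a compact tube in $(s,\mu')$ via the Birman--Schwinger operator $s(D^0-\mu')^{-1}V_\nu\in\gS_6$, uses norm-convergence of compressions of compact operators to handle the cut-off, and controls the exchange term relative to $|D^0_\Lambda|$ via the same $\eqref{eq:ineq-G10S2}$ and $\eqref{eq:ineq-RQD0-1L2L2}$. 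The two proofs thus rely on the same structural facts ($\gS_6$ compactness of the potential term, $\Lambda$-independent Hilbert--Schmidt control of $G_{1,0}$), but yours is direct and quantitative (it produces an explicit resolvent bound and hence $\xi_0$ constructively), while the paper's soft compactness argument gives $\xi_0$ only implicitly. Your side observation is also correct: because your exchange-term estimate uses $\eqref{eq:ineq-G10S2}$ rather than the $\cQ$-norm estimate $\eqref{eq:ineq-G10Q}$, the hypothesis $\alpha\sqrt{\log\Lambda}\le L$ never actually enters your proof of this lemma; the paper, by invoking $\eqref{eq:ineq-RQD0-1S2}$ in the $H^1$-boundedness step, does use it there, though that step could be rewritten to avoid it as you suggest.
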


\begin{proof}
 The proof is done by contradiction. Assume there exists $\alpha_n\to0$, $\Lambda_n\to+\ii$ satisfying $\alpha_n\sqrt{\log\Lambda_n}\le L$, $s_n\in[1-\epsilon,1+\epsilon]$ such that $d(\mu,\sigma_{\gH_{\Lambda_n}}(\Dref(s_n\nu))\le1/n$. Up to extracting a subsequence, we can assume $s_n\to s\in[1-\epsilon,1+\epsilon]$. Since $d(\mu,\sigma_{\gH_{\Lambda_n}}(\Dref(s_n\nu)))\le1/n$, there exists $\phi_n\in\gH_n:=\gH_{\Lambda_n}$ with $\|\phi_n\|_{L^2}=1$ and $\|(D_n-\mu)\phi_n\|_{L^2}\le2/n$, with $D_n:=\Dref(s_n\nu)$. For any $\eta\in\R$ and $n\in\N$ we have
 $$D_n\phi_n=(1+\Pi_n(\alpha_n R_n(s_n\nu)-V_{s_n\nu})(D^0+i\eta)^{-1})(D^0+i\eta)\phi_n-i\eta\phi_n,$$
 where we used the notations $\Pi_n:=\Pi_{\Lambda_n}$ and $R_n(s_n\nu):=R_{1,0}(s_n\nu)$. Now for $\eta$ large enough,  we have $\|V_{s_n\nu}(D^0+i\eta)^{-1}\|<1$ by \eqref{eq:ineq-VD0-1S6}, and $\|\alpha_n R_{1,0}^{(n)}(\omega)(D^0+i\eta)^{-1}\|<1$ by \eqref{eq:ineq-RQD0-1S2} and the fact that $\alpha_n\sqrt{\log\Lambda_n}\le L$. Then, we have for such a $\eta$,
 $$(D^0+i\eta)\phi_n=\frac{1}{(1+\Pi_n(\alpha_n R_n(s_n\nu)-V_{s_n\nu})(D^0+i\eta)^{-1})}(D_n\phi_n+i\eta\phi_n)	,$$
 showing that $(\phi_n)_n$ is bounded in $H^1(\R^3)$. Hence, up to a subsequence we can assume $\phi_n\rightharpoonup \phi$ in $H^1(\R^3)$. This implies that $D^0 \phi_n\rightharpoonup D^0 \phi$ and $\mu \phi_n\rightharpoonup\mu \phi$ weakly in $L^2(\R^3)$. Since $V_{s\nu} (D^0)^{-1}$ is a compact operator we also have $V_{s_n\nu}\phi_n\to V_{s\nu}\phi$ strongly in $L^2(\R^3)$ and hence $\Pi_nV_{s_n\nu}\phi_n\to V_{s\nu}\phi$ strongly in $L^2(\R^3)$. Using \eqref{eq:ineq-RQD0-1L2L2} and \eqref{eq:ineq-G10S2}, we infer that 
   $$\|\alpha_n R_n(s_n\nu)|D^0|^{-1}\|_{L^2\to L^2}\le  2\alpha_n\|G_{1,0}(s_n\nu)\|_{\gS_2}\le 8\alpha_n(1+\epsilon)\|\nu\|_\cC\to0.$$
   We thus have $\Pi_n\alpha_n R_n(s_n\nu)\phi_n$ converging strongly to 0 in $L^2$. Hence, $D_n\phi_n\rightharpoonup(D^0-V_{s\nu})\phi$ weakly in $L^2$ and since $\|(D_n-\mu)\phi_n\|_{L^2}\le2/n$, we have $(D^0-V_{s\nu})\phi=\mu\phi$. But $\mu\notin\sigma_\gH(D^{s\nu})$, hence we must have $\phi=0$. This implies that $\Pi_nV_{s_n\nu}\phi_n\to 0$ strongly in $L^2(\R^3)$. We deduce $(D^0-\mu)\phi_n\to0$ strongly in $L^2$, which is a contradiction by the Weyl criterion for the essential spectrum since $\|\phi_n\|_{L^2}=1$, $\phi_n\rightharpoonup0$ weakly in $L^2$ and $\mu\in(-1,1)$. 
\end{proof}

 \begin{lemma}\label{lemma:D0Dref-1L2L2}
  Let $\omega\in L^2(\R^3)\cap\cC$ and $\mu\in(-1,1)$ such that $\mu\notin\Dref(\omega)$. Then for all $\alpha\ge0$ such that $8\alpha\|\omega\|_\cC<1-|\mu|$, the operator $|D^0|(\Dref(\omega)-\mu)^{-1}$ (extended by 0 on $\gH_\Lambda^\perp$) is bounded on $L^2(\R^3,\C^4)$, and we have
  $$\left\||D^0|\frac{1}{\Dref(\omega)-\mu}\right\|_{L^2\to L^2}\le\theta(\mu,\omega),$$
  where
  $$\theta(\mu,\omega)=\left(1-\frac{8\alpha\|\omega\|_\cC}{1-|\mu|}\right)^{-1}\frac{1}{1-|\mu|}\left(1+\frac{2\sqrt{\pi}\|\omega\|_\cC}{d(\mu,\sigma(\Dref(\omega)))}\right).$$
 \end{lemma}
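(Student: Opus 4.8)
The plan is to bound the operator norm of $|D^0|(\Dref(\omega)-\mu)^{-1}$ by writing
$$|D^0|\frac{1}{\Dref(\omega)-\mu}=|D^0|\frac{1}{D^0_\Lambda-\mu}\cdot(D^0_\Lambda-\mu)\frac{1}{\Dref(\omega)-\mu},$$
and controlling each factor separately. The first factor, $|D^0|(D^0_\Lambda-\mu)^{-1}$ (extended by $0$ on $\gH_\Lambda^\perp$), is a Fourier multiplier whose symbol is $E(p)/(\sqrt{E(p)^2}\,\mathrm{sgn}(\ldots)-\mu)$; since $\sigma(D^0_\Lambda)\subset(-\infty,-1]\cup[1,+\infty)$ and $|\mu|<1$, a direct computation gives $\||D^0|(D^0_\Lambda-\mu)^{-1}\|\le (1-|\mu|)^{-1}$. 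The second factor is handled by the second resolvent identity: writing $\Dref(\omega)-\mu=(D^0_\Lambda-\mu)+\Pi_\Lambda(\alpha R_{1,0}(\omega)-V_\omega)\Pi_\Lambda$, one gets
$$(D^0_\Lambda-\mu)\frac{1}{\Dref(\omega)-\mu}=1-\Pi_\Lambda(\alpha R_{1,0}(\omega)-V_\omega)\Pi_\Lambda\frac{1}{\Dref(\omega)-\mu}.$$

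The remaining issue is then to estimate $\|\Pi_\Lambda(\alpha R_{1,0}(\omega)-V_\omega)\Pi_\Lambda(\Dref(\omega)-\mu)^{-1}\|$. For the potential part, I would split $V_\omega(\Dref(\omega)-\mu)^{-1}=V_\omega|D^0|^{-1}\cdot|D^0|(\Dref(\omega)-\mu)^{-1}$ — but this is circular, so instead I would bound $\|V_\omega(\Dref(\omega)-\mu)^{-1}\|$ directly: since $\|V_\omega\|_{L^\infty}\le 2\sqrt\pi\|\omega\|_\cC$ by \eqref{eq:ineq-VLii} (here using $\|\omega\|_{L^2\cap\cC}$, but one can also get the cleaner constant from the $\cC$-norm alone via the proof of that lemma, or simply absorb it), and $\|(\Dref(\omega)-\mu)^{-1}\|=d(\mu,\sigma(\Dref(\omega)))^{-1}$, we obtain $\|V_\omega(\Dref(\omega)-\mu)^{-1}\|\le 2\sqrt\pi\|\omega\|_\cC\,d(\mu,\sigma(\Dref(\omega)))^{-1}$, which is exactly the term $2\sqrt\pi\|\omega\|_\cC/d(\mu,\sigma(\Dref(\omega)))$ appearing in $\theta$. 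For the exchange part, I would use $\|R_{1,0}(\omega)|D^0|^{-1}\|_{L^2\to L^2}\le 2\|G_{1,0}(\omega)\|_{\gS_2}\le 8\|\omega\|_\cC$ by combining \eqref{eq:ineq-RQD0-1L2L2} and \eqref{eq:ineq-G10S2}; thus $\alpha R_{1,0}(\omega)=\big(\alpha R_{1,0}(\omega)|D^0|^{-1}\big)|D^0|$ contributes $8\alpha\|\omega\|_\cC\cdot\||D^0|(\Dref(\omega)-\mu)^{-1}\|$, which is where the self-referential term comes from.

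Putting these together, set $N:=\||D^0|(\Dref(\omega)-\mu)^{-1}\|$. The bound becomes
$$N\le\frac{1}{1-|\mu|}\left(1+8\alpha\|\omega\|_\cC\, N+\frac{2\sqrt\pi\|\omega\|_\cC}{d(\mu,\sigma(\Dref(\omega)))}\right),$$
and solving for $N$ under the hypothesis $8\alpha\|\omega\|_\cC<1-|\mu|$ (so that the coefficient $\tfrac{8\alpha\|\omega\|_\cC}{1-|\mu|}$ of $N$ is $<1$) yields precisely $N\le\theta(\mu,\omega)$. One technical point to address first is the justification that $|D^0|(\Dref(\omega)-\mu)^{-1}$ is genuinely bounded (a priori it is only defined on the range of $\Dref(\omega)-\mu$, i.e. all of $\gH_\Lambda$, but one needs to know the composition with the unbounded $|D^0|$ makes sense): this follows because $(\Dref(\omega)-\mu)^{-1}$ maps $\gH_\Lambda$ into $\gH_\Lambda\subset H^1$, and on $\gH_\Lambda$ the operator $|D^0|$ coincides with the bounded operator $|D^0_\Lambda|$, so the composition is bounded on $\gH_\Lambda$ and extends by $0$ to $L^2$; the resolvent identity manipulation above is then legitimate on $\gH_\Lambda$. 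The main obstacle is purely bookkeeping — making sure the two ``$|D^0|$-relative'' estimates are set up so the self-referential term can be absorbed — rather than any genuine analytic difficulty, since all the hard estimates (\eqref{eq:ineq-VLii}, \eqref{eq:ineq-RQD0-1L2L2}, \eqref{eq:ineq-G10S2}) are already available from Lemmas \ref{lemma:basic-estimates} and \ref{lemma:prop-G10}.
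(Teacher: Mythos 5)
Your proposal is correct and takes essentially the same route as the paper: after the resolvent identity relating $(\Dref(\omega)-\mu)^{-1}$ to $(D^0_\Lambda-\mu)^{-1}$, one bounds $\||D^0|(D^0_\Lambda-\mu)^{-1}\|\le(1-|\mu|)^{-1}$, handles the $V_\omega$ term via the $L^\infty$ bound and $\|(\Dref(\omega)-\mu)^{-1}\|=d(\mu,\sigma(\Dref(\omega)))^{-1}$, handles the exchange term via $\|R_{1,0}(\omega)|D^0|^{-1}\|\le 8\|\omega\|_\cC$ from \eqref{eq:ineq-RQD0-1L2L2} and \eqref{eq:ineq-G10S2}, and absorbs the resulting self-referential term using $8\alpha\|\omega\|_\cC<1-|\mu|$. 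Your small aside about the $\cC$ versus $L^2\cap\cC$ norm in the $V_\omega$ bound is well spotted — the paper's own proof displays $\|\omega\|_{L^2\cap\cC}$ in that slot while the statement of $\theta$ writes $\|\omega\|_\cC$, a minor inconsistency in the paper rather than a flaw in your argument.
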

 \begin{proof}
 By the resolvent identity, we have
 $$\frac{1}{\Dref(\omega)-\mu}=\frac{1}{D^0_\Lambda-\mu}+\frac{1}{D^0_\Lambda-\mu}\Pi_\Lambda(V_\omega-\alpha R_{1,0}(\omega))\Pi_\Lambda\frac{1}{\Dref(\omega)-\mu}.$$
 Using $\||D^0|(D^0_\Lambda-\mu)^{-1}\|_{L^2\to L^2}\le(1-|\mu|)^{-1}$ and $\|R_{1,0}(\omega)|D^0|^{-1}\|_{L^2\to L^2}\le 8\|\omega\|_{\cC}$ by \eqref{eq:ineq-RQD0-1L2L2} and \eqref{eq:ineq-G10S2}, we obtain 
 $$\left(1-\frac{8\alpha\|\omega\|_{\cC}}{1-|\mu|}\right)\left\||D^0|\frac{1}{\Dref(\omega)-\mu}\right\|_{L^2\to L^2}\le\frac{1}{1-|\mu|}\left(1+\frac{2\sqrt{\pi}\|\omega\|_{L^2\cap\cC}}{d(\mu,\sigma(\Dref(\omega)))}\right).$$ 
 \end{proof}

 Lemma \ref{lemma:gap-Dref} ensures that we can use Cauchy's formula to $\Pref$ since $\Dref$ has a gap around $\mu$. The next result is the analogue of \cite[Lemma 9]{HaiLewSer-05a}. It ensures that the operator $D_Q$ also has a gap around $\mu$. For any $Q\in\cQ$ and $\omega,\rho'\in L^2(\R^3)\cap\cC$ we define the following operator on $\gH_\Lambda$:
\begin{equation}\label{eq:DQmu}
 D_{\omega,\rho',Q}=\Dref(\omega)+\alpha\Pi_\Lambda\left(V_{\rho'}-R_Q\right)\Pi_\Lambda.
\end{equation}
 With this notation, we have $D_Q=D_{\kappa\nur,\rho_Q+\rhoref,Q+\Qref}$. 
\begin{lemma}\label{lemma:gap-DQ}
 Let $\omega\in L^2(\R^3)\cap\cC$, $\mu\in(-1,1)$, $\alpha\ge0$, $\Lambda>0$,  $(Q,\rho')\in\cX$ be such that the following conditions are satisfied:
 \begin{enumerate}
  \item $\mu\notin\sigma(\Dref(\omega))$;
  \item $8\alpha\|\omega\|_\cC<1-|\mu|$;
  \item $\alpha b\|(Q,\rho')\|_\cX<1$,
 \end{enumerate}
where, using the notations of Lemma \ref{lemma:D0Dref-1L2L2},
 $$b:=\frac{1}{d(\mu,\sigma(\Dref(\omega)))}+\theta(\mu,\omega).$$
 Then, we have the operator inequality
 $$|D_{\omega,\rho',Q}-\mu|\ge\left(1-\alpha b\|(Q,\rho')\|_\cX\right)|\Dref(\omega)-\mu|.$$
 In particular, $\mu\notin\sigma(D_{\omega,\rho',Q})$. 
\end{lemma}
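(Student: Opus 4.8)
The plan is to estimate the difference $D_{\omega,\rho',Q}-\Dref(\omega) = \alpha\Pi_\Lambda(V_{\rho'}-R_Q)\Pi_\Lambda$ relative to $|\Dref(\omega)-\mu|$, and then conclude via a standard operator-perturbation argument. More precisely, since $\mu\notin\sigma(\Dref(\omega))$ by condition (1), the operator $|\Dref(\omega)-\mu|$ is invertible and one has
$$
|D_{\omega,\rho',Q}-\mu| \ge |\Dref(\omega)-\mu| - \alpha\|\Pi_\Lambda(V_{\rho'}-R_Q)\Pi_\Lambda\|,
$$
but to get the \emph{multiplicative} bound stated in the Lemma I would instead write, for any $\phi\in\gH_\Lambda$,
$$
\|(D_{\omega,\rho',Q}-\mu)\phi\|_{L^2} \ge \|(\Dref(\omega)-\mu)\phi\|_{L^2} - \alpha\|(V_{\rho'}-R_Q)\phi\|_{L^2},
$$
and then bound the last term by a constant times $\|(\Dref(\omega)-\mu)\phi\|_{L^2}$. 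The cleanest way to do that is to insert $|D^0|$: write $(V_{\rho'}-R_Q)\phi = (V_{\rho'}-R_Q)|D^0|^{-1}\cdot|D^0|\phi$, then $|D^0|\phi = |D^0|(\Dref(\omega)-\mu)^{-1}(\Dref(\omega)-\mu)\phi$. This is exactly where Lemma \ref{lemma:D0Dref-1L2L2} enters.

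The key steps, in order, are as follows. First, I would verify that the hypotheses of Lemma \ref{lemma:D0Dref-1L2L2} are met: condition (2), $8\alpha\|\omega\|_\cC<1-|\mu|$, is precisely the smallness assumption needed there, and condition (1) gives $\mu\notin\sigma(\Dref(\omega))$; hence $\||D^0|(\Dref(\omega)-\mu)^{-1}\|_{L^2\to L^2}\le\theta(\mu,\omega)$. Second, I would estimate $\|(V_{\rho'}-R_Q)|D^0|^{-1}\|_{L^2\to L^2}$. For the exchange part, \eqref{eq:ineq-RQD0-1L2L2} gives $\|R_Q|D^0|^{-1}\|_{L^2\to L^2}\le 2\|Q\|_{\gS_2}\le 2\|Q\|_\cQ$ (since $\cQ\hookrightarrow\gS_2$). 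For the potential part, I need a bound of $\|V_{\rho'}|D^0|^{-1}\|_{L^2\to L^2}$; note $\||D^0|^{-1}\|_{L^2\to L^2}\le 1$ and by \eqref{eq:ineq-VLii}, $\|V_{\rho'}\|_{L^\infty}\le 2\sqrt\pi\|\rho'\|_{L^2\cap\cC}$, so $\|V_{\rho'}|D^0|^{-1}\|_{L^2\to L^2}\le 2\sqrt\pi\|\rho'\|_{L^2\cap\cC}$. Combining, $\|(V_{\rho'}-R_Q)|D^0|^{-1}\|_{L^2\to L^2}\le C\|(Q,\rho')\|_\cX$ for a suitable absolute constant $C$ (one can absorb constants into the definition of $b$; the precise form of $b$ in the statement suggests the bookkeeping is arranged so the product $\alpha b\|(Q,\rho')\|_\cX$ comes out exactly, with the $d(\mu,\sigma(\Dref(\omega)))^{-1}$ term handling the resolvent itself and $\theta(\mu,\omega)$ the $|D^0|$-twisted resolvent). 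Third, I would chain these: for $\phi\in\gH_\Lambda$,
$$
\alpha\|(V_{\rho'}-R_Q)\phi\|_{L^2} \le \alpha\,C\|(Q,\rho')\|_\cX\,\theta(\mu,\omega)\,\|(\Dref(\omega)-\mu)\phi\|_{L^2} \le \alpha b\|(Q,\rho')\|_\cX\,\|(\Dref(\omega)-\mu)\phi\|_{L^2},
$$
whence $\|(D_{\omega,\rho',Q}-\mu)\phi\|_{L^2}\ge(1-\alpha b\|(Q,\rho')\|_\cX)\|(\Dref(\omega)-\mu)\phi\|_{L^2}$, and squaring both sides (both are nonnegative reals, and $1-\alpha b\|(Q,\rho')\|_\cX>0$ by condition (3)) and using $\|(A-\mu)\phi\|_{L^2}^2=\langle\phi,|A-\mu|^2\phi\rangle$ gives the claimed operator inequality $|D_{\omega,\rho',Q}-\mu|\ge(1-\alpha b\|(Q,\rho')\|_\cX)|\Dref(\omega)-\mu|$. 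Since the right side is a positive multiple of an operator bounded below away from $0$ (as $\mu\notin\sigma(\Dref(\omega))$), we conclude $0\notin\sigma(D_{\omega,\rho',Q}-\mu)$, i.e.\ $\mu\notin\sigma(D_{\omega,\rho',Q})$.

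The main obstacle is bookkeeping rather than conceptual: one must be careful that the $|D^0|^{-1}$ inserted on the right of $V_{\rho'}-R_Q$ is compensated by exactly one factor of $|D^0|$ on the resolvent side, and that the operator $|D^0|(\Dref(\omega)-\mu)^{-1}$ is genuinely bounded on all of $L^2(\R^3,\C^4)$ (extended by $0$ off $\gH_\Lambda$), which is exactly the content of Lemma \ref{lemma:D0Dref-1L2L2}. A minor subtlety is that $D_{\omega,\rho',Q}-\Dref(\omega)$ involves the cut-off projectors $\Pi_\Lambda$ on both sides, so all operators act on $\gH_\Lambda$ and $\|\Pi_\Lambda\|\le 1$ causes no loss; one just restricts attention to $\phi\in\gH_\Lambda$ throughout. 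Finally, getting the constant $b$ to come out in precisely the stated form requires tracking that $\|R_Q|D^0|^{-1}\|\le 2\|Q\|_\cQ$ and $\|V_{\rho'}|D^0|^{-1}\|\le 2\sqrt\pi\|\rho'\|_{L^2\cap\cC}$ assemble (together with $\theta(\mu,\omega)$ and the plain resolvent bound $\|(\Dref(\omega)-\mu)^{-1}\|\le d(\mu,\sigma(\Dref(\omega)))^{-1}$) into the sum defining $b$; I would present the chain of inequalities transparently enough that the reader sees which term produces which summand.
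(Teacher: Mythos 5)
Your overall strategy is the right one and matches the paper's: estimate $\alpha\Pi_\Lambda(V_{\rho'}-R_Q)\Pi_\Lambda$ on vectors, convert each piece into a multiple of $\||\Dref(\omega)-\mu|u\|_{L^2}$, and square to obtain the operator inequality. The exchange term is handled exactly as in the paper: insert $|D^0|^{-1}$ on the right of $R_Q$, bound $\|R_Q|D^0|^{-1}\|$, and use Lemma~\ref{lemma:D0Dref-1L2L2} to produce the $\theta(\mu,\omega)$ factor. (The paper uses the $\gS_2$-bound $\|R_Q|D^0|^{-1}\|_{\gS_2}\le C_0\|Q\|_\cQ$ from \eqref{eq:ineq-RQD0-1S2} rather than the operator-norm bound $2\|Q\|_{\gS_2}$ from \eqref{eq:ineq-RQD0-1L2L2}; either works since $\|\cdot\|\le\|\cdot\|_{\gS_2}$.)

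The genuine gap is in your treatment of the potential term, and it breaks the final inequality as you wrote it. You insert $|D^0|^{-1}$ for $V_{\rho'}$ as well, so the $V$-contribution also picks up a factor $\theta(\mu,\omega)$, and your combined constant is on the order of $C\,\theta(\mu,\omega)$. You then assert
\begin{equation*}
\alpha\,C\|(Q,\rho')\|_\cX\,\theta(\mu,\omega)\ \le\ \alpha\,b\,\|(Q,\rho')\|_\cX
\quad\text{with } b=\frac{1}{d(\mu,\sigma(\Dref(\omega)))}+\theta(\mu,\omega),
\end{equation*}
i.e.\ $C\theta\le d^{-1}+\theta$, which is simply false in general (take $d$ and $\theta$ both large; nothing in the hypotheses forces $(C-1)\theta\le d^{-1}$). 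The two summands in $b$ arise precisely because the paper treats the two terms \emph{differently}: for $V_{\rho'}$ one does \emph{not} insert $|D^0|^{-1}$, but instead writes $\|V_{\rho'}u\|_{L^2}\le\|V_{\rho'}\|_{L^\ii}\|u\|_{L^2}\le 2\sqrt\pi\|\rho'\|_{L^2\cap\cC}\,d(\mu,\sigma(\Dref(\omega)))^{-1}\||\Dref(\omega)-\mu|u\|_{L^2}$, using only the operator inequality $|\Dref(\omega)-\mu|\ge d(\mu,\sigma(\Dref(\omega)))$. That is what produces the $d^{-1}$ summand, while the exchange term (the only place the $|D^0|$-insertion is needed, because $R_Q$ is not an $L^\infty$ multiplier) produces the $\theta$ summand. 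Once you make this split, the two estimates add up to $b\,\|(Q,\rho')\|_\cX$ and the rest of your argument (squaring, operator monotonicity of $t\mapsto\sqrt t$) goes through verbatim. A last cosmetic remark: even with the correct split, the intermediate constants $2\sqrt\pi$ and $C_0$ do not literally cancel out; the paper absorbs them into the bookkeeping, and since $b$ is only used downstream through the qualitative condition (3), this looseness is harmless — but be aware that the stated $b$ should morally carry a universal constant prefactor.
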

\begin{remark}
 The conditions listed in the above lemma give a non trivial relation between $\alpha$ and $\mu$ (recall that $\Dref(\omega)$ depends on $\alpha$). Fortunately, these conditions are satisfied if $\mu\in(-1,1)$, $\mu\notin\sigma(D^0-V_\omega)$ and $\alpha$ is small enough by Lemma \ref{lemma:gap-Dref}.
\end{remark}
\begin{proof}
 We adapt the proof of \cite[Lemma 9]{HaiLewSer-05a} to our case. First of all, since $\sigma_{\text{ess}}(\Dref(\omega))=[-E(\Lambda),-1]\cup[1,E(\Lambda)]$ and $\mu\in(-1,1)$, $\mu\notin\sigma(\Dref(\omega))$, we have $d(\mu,\sigma(\Dref(\omega)))>0$. Let $u\in\gH_\Lambda$. By Lemma \ref{lemma:basic-estimates}, we have
 $$\|\Pi_\Lambda V_{\rho'} u\|_{L^2}\le\|V_{\rho'}\|_{L^\ii}\|u\|_{L^2}\le\frac{2\sqrt{\pi}\|\rho'\|_{L^2\cap\cC}}{d(\mu,\sigma(\Dref(\omega)))}\||\Dref(\omega)-\mu|u\|_{L^2},$$
 where we used the operator inequality $|\Dref(\omega)-\mu|\ge d(\mu,\sigma(\Dref(\omega)))$. For the exchange term, we write
 $$\|\Pi_\Lambda R_Q u\|_{L^2}\le\|R_Q|D^0|^{-1}\|_{\gS_2}\left\||D^0|\frac{1}{|\Dref(\omega)-\mu|}\right\|_{L^2\to L^2}\||\Dref(\omega)-\mu|u\|_{L^2}.$$
 By Lemma \ref{lemma:D0Dref-1L2L2}, this implies
$$\|\Pi_\Lambda R_Q u\|_{L^2}\le C_0\|Q\|_\cQ\theta(\mu,\omega)\||\Dref(\omega)-\mu|u\|_{L^2},$$
showing that 
$$|\Pi_\Lambda(V_{\rho'}-R_Q)\Pi_\Lambda|\le\left(\frac{1}{d(\mu,\sigma(\Dref(\omega)))}+\theta(\mu,\omega)\right)\|(Q,\rho')\|_\cX|\Dref(\omega)-\mu|.$$
\end{proof}

\subsubsection{Resolvent expansion} Now that we know sufficient conditions on our parameters to prove that both $\Dref$ and $D_Q$ have a gap around $\mu$, we can use Cauchy formula, and then resolvent expansion (at least formally for the moment), to see that \eqref{eq:thm-newref} is equivalent to 
\begin{equation*}
 Q  =  \frac{1}{2\pi}\int_\R\left(\frac{1}{\Dref-\mu+i\eta}-\frac{1}{D_Q-\mu+i\eta}\right)\d{\eta},
\end{equation*}
which leads to
\begin{multline*}
   Q=\sum_{n\ge1}\frac{(-1)^{n+1}\alpha^n}{2\pi}\int_\R\frac{1}{\Dref-\mu+i\eta}\times\\
   \times\left(\Pi_\Lambda(V_{\rho_Q+\rhoref}-R_{Q+\Qref})\Pi_\Lambda\frac{1}{\Dref-\mu+i\eta}\right)^n\d{\eta}.
\end{multline*}
We write it as
\begin{equation}\label{eq:serie}
 Q=\sum_{n\ge1}\alpha^n G_n^{\kappa\nur}(Q+\Qref,\rho_Q+\rhoref),
\end{equation}
where for any $Q\in\cQ$, $\omega,\rho'\in L^2(\R^3)\cap\cC$,
\begin{equation}
 G_n^\omega(Q,\rho')=\frac{(-1)^{n+1}}{2\pi}\int_\R\frac{1}{\Dref(\omega)-\mu+i\eta}\left(\Pi_\Lambda(V_{\rho'}-R_Q)\Pi_\Lambda\frac{1}{\Dref(\omega)-\mu+i\eta}\right)^n\d{\eta}.
\end{equation}
We will prove in particular that \eqref{eq:serie} makes sense by showing that the series converges in $\cQ$, for $\alpha$ small enough. Equation \eqref{eq:serie} implies that 
\begin{equation}\label{eq:serie-rho}
\rho_Q=\sum_{n\ge1}\alpha^n\rho[G_n^{\kappa\nur}(Q+\Qref,\rho_Q+\rhoref)]. 
\end{equation}
We use the same trick as in \cite{HaiLewSer-05a}, by writing
$$\rho[G_1^\omega(Q,\rho')]=\rho[G_{1,0}(\rho')]+J_1^\omega(Q,\rho'),$$
where we recall that $G_{1,0}(\rho')$ was defined in \eqref{eq:def-G10}. It was proved in \cite{HaiLewSer-05a} that 
$$\forall k\in\R^3,\quad \cF\left[\rho[G_{1,0}(\rho')]\right](k)=-B_\Lambda(k)\hat{\rho'}(k),$$
where $\cF[f]:=\hat{f}$. The function $B_\Lambda(k)$ is computed in \cite[Eq.(88)]{GraLewSer-09}. It satisfies $B_\Lambda(k)=0$ for $|k|>2\Lambda$ while for $|k|\le2\Lambda$ we have the formula 
\begin{multline*}
 B_\Lambda(k)=\frac{1}{\pi}\int_0^{Z_\Lambda(|k|)}\frac{z^2-z^4/3}{(1-z^2)(1+|k|^2(1-z^2)/4)}\d{z}\\
 +\frac{|k|}{2\pi}\int_0^{Z_\Lambda(|k|)}\frac{z-z^3/3}{\sqrt{1+\Lambda^2}-|k|z/2}\d{z},
\end{multline*}
with
$$Z_\Lambda(r)=\frac{\sqrt{1+\Lambda^2}-\sqrt{1+(\Lambda-r)^2}}{r},\quad\forall r\ge0.$$
Notice that $B_\Lambda=B_\Lambda(0)$. Defining $J_n^\omega(Q,\rho'):=\rho[G_n^\omega(Q,\rho')]$ for $n\ge2$, we see that \eqref{eq:serie-rho} is equivalent to
\begin{multline*}
\forall k\in\R^3,\quad\F[\rho_Q+\rhoref](k)=\frac{1}{1+\alpha B_\Lambda(k)}\hat{\rhoref}(k)\\
+\sum_{n\ge1}\frac{\alpha^n}{1+\alpha B_\Lambda(k)}\cF[J_n^\nur(Q+\Qref,\rho_Q+\rhoref)](k). 
\end{multline*}
Hence, the couple $(Q,\rho_Q+\rhoref)$ satisfies
$$(Q,\rho_Q+\rhoref)=\Phi(Q,\rho_Q+\rhoref)=\left(\Phi_1^{\kappa\nur}(Q+\Qref,\rho_Q+\rhoref),\Phi_2^{\kappa\nur}(Q+\Qref,\rho_Q+\rhoref)\right),$$
where 
$$\Phi_1^\omega(Q,\rho')=\sum_{n\ge1}\alpha^n G_n^\omega(Q,\rho'),$$
$$\forall k\in\R^3,\quad\cF[\Phi_2^\omega(Q,\rho')](k)=\frac{1}{1+\alpha B_\Lambda(k)}\hat{\rhoref}(k)+\sum_{n\ge1}\frac{\alpha^n}{1+\alpha B_\Lambda(k)}\cF[J_n^\omega(Q,\rho')](k).$$
We prove that $\Phi$ stabilizes a certain ball in $\cX$ and that it is contractant on this ball. By the Banach-Picard theorem, it proves Theorem \ref{thm:thm1}. To do so, we need estimates on the function $\Phi$. 

\subsubsection{Estimates on $\Phi$} We begin by a useful result about commutators. 

\begin{lemma}\label{lemm:comm}
 Let $\rho\in L^2(\R^3)\cap\cC$ and $Q\in\cQ$. Then we have the following estimates:
 \begin{equation}\label{eq:Vcomm}
 \forall 0<\zeta<1,\quad \forall t>\zeta+\frac{1}{2},\quad \left\|\left[|D^0|^\zeta,V_\rho\right]|D^0|^{-t}\right\|_{\gS_2}\le C\|\rho\|_{\cC},
\end{equation}
\begin{equation}\label{eq:Rcomm}
 \forall 0<\zeta<1,\quad\forall t>\zeta,\quad \left\|\left[|D^0|^\zeta,R_Q\right]|D^0|^{-t}\right\|_{\gS_2}\le C\|Q\|_\cQ.
\end{equation}
\end{lemma}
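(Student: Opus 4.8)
\textbf{Proof plan for Lemma \ref{lemm:comm}.}

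The plan is to reduce both commutator estimates to the Kato--Seiler--Simon inequality after extracting the correct power-counting in Fourier space, exactly as in the basic estimates of Lemma \ref{lemma:basic-estimates}. Write $f(p):=E(p)=\sqrt{1+p^2}$, so $|D^0|$ is multiplication by $f(p)$ in Fourier space. For \eqref{eq:Vcomm}, the operator $[|D^0|^\zeta,V_\rho]|D^0|^{-t}$ has integral kernel (in Fourier variables) proportional to $\bigl(f(p)^\zeta-f(q)^\zeta\bigr)\,\hat{V_\rho}(p-q)\,f(q)^{-t}$, and using $\hat{V_\rho}(k)=4\pi|k|^{-2}\hat\rho(k)$ we must bound
\begin{equation*}
\iint \frac{\bigl(f(p)^\zeta-f(q)^\zeta\bigr)^2}{|p-q|^4}\,|\hat\rho(p-q)|^2\,f(q)^{-2t}\d p\d q.
\end{equation*}
The point is the elementary pointwise inequality $\bigl|f(p)^\zeta-f(q)^\zeta\bigr|\le C\,|p-q|\,\bigl(f(p)^{\zeta-1}+f(q)^{\zeta-1}\bigr)$ (mean value theorem applied to $f^\zeta$, whose gradient decays like $f^{\zeta-1}$ since $0<\zeta<1$); the factor $|p-q|$ cancels two of the four powers of $|p-q|$ in the denominator, leaving $|p-q|^{-2}|\hat\rho(p-q)|^2$, which after the change of variables $k=p-q$ is exactly $|\hat\rho(k)|^2/|k|^2$, the ingredient of $\|\rho\|_\cC^2$. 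It then remains to check that $\int f(q)^{2(\zeta-1)}f(q)^{-2t}\d q<\infty$ and $\int f(p)^{2(\zeta-1)}f(q)^{-2t}\d q<\infty$; the first needs $2(1-\zeta)+2t>3$, and since the worst case is $\zeta$ close to $1$ one needs $t>3/2$, but more carefully tracking $f(p)$ versus $f(q)$ (splitting the region $|p|\le 2|q|$ from $|p|>2|q|$) the condition sharpens to $t>\zeta+\tfrac12$ as stated. The Kato--Seiler--Simon inequality is then invoked with $p=2$ to turn this $L^2\times L^2$ bound on the kernel into the $\gS_2$ bound.

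For \eqref{eq:Rcomm} the strategy is the same but simpler because $R_Q$ is directly an operator with kernel $Q(x,y)/|x-y|$, so I do not pay the extra $|k|^{-1}$. Its kernel satisfies $\widehat{R_Q}(p,q)=c\iint \hat Q(p',q')/|p-q-(p'-q')|^2$-type convolution, but the cleaner route is to avoid computing $\widehat{R_Q}$ explicitly and instead note that by the estimate $\|R_Q|D^0+i\eta|^{-1}\|_{\gS_2}\le C_0 E(\eta)^{-1/2}\|Q\|_\cQ$ from Lemma \ref{lemma:basic-estimates}, and its companion $\||D^0|^{-1/2}R_Q\|$-type bounds, one controls $R_Q$ sandwiched between powers of $|D^0|$ down to total weight slightly above $1$. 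Then write the commutator as $[|D^0|^\zeta,R_Q]|D^0|^{-t}=|D^0|^\zeta R_Q|D^0|^{-t}-R_Q|D^0|^{\zeta-t}$; each term is estimated by distributing the weights: the factor $|D^0|^{-t}$ with $t>\zeta$ supplies at least $|D^0|^{-\zeta}\cdot|D^0|^{-(t-\zeta)}$, and since $R_Q\in\gS_2$ when sandwiched against a power $|D^0|^{-s}$ with $s>0$ (combine \eqref{eq:ineq-RQD0-12L2L2} with a $\gS_2$ bound, or interpolate between \eqref{eq:ineq-RQD0-1S2} at $s=1$ and boundedness facts), one gets the $\gS_2$-norm bounded by $C\|Q\|_\cQ$. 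Alternatively, mimic the Fourier computation of part one: the kernel of the commutator carries the factor $f(p)^\zeta-f(q)^\zeta$ which is $\lesssim |p-q|(f(p)^{\zeta-1}+f(q)^{\zeta-1})$, and one feeds the extra $|p-q|$ together with the weight $f(p\mp q)$ built into the $\cQ$-norm; here the requirement is only $t>\zeta$ because there is no $|k|^{-2}$ singularity to absorb.

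The main obstacle is the bookkeeping of which Fourier variable carries the decay: the difference $f(p)^\zeta-f(q)^\zeta$ must be split so that the $\gS_2$ (Hilbert--Schmidt) integral genuinely factorizes into the $\cC$- (resp. $\cQ$-) norm of $\rho$ (resp. $Q$) times a convergent weight integral, and the precise thresholds $t>\zeta+\tfrac12$ and $t>\zeta$ come out of optimizing that split — in particular one must handle separately the regime where $p$ and $q$ are comparable (where the mean value estimate with either $f(p)^{\zeta-1}$ or $f(q)^{\zeta-1}$ is fine) and the regime where one of them dominates (where one must use the larger of the two momenta to produce decay, which is what costs the extra $\tfrac12$ in \eqref{eq:Vcomm}). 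Once the split is fixed, everything reduces to one-dimensional radial integrals of the form $\int_0^\infty r^2 (1+r^2)^{-a}\d r$ and to Kato--Seiler--Simon, which are routine.
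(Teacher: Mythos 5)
Your outline for \eqref{eq:Vcomm} is essentially sound — the mean value bound on $E(p)^\zeta-E(q)^\zeta$, the cancellation of two of the four powers of $|p-q|$ in the denominator, and the verification that $2(t-\zeta+1)>3$ — though you blur one point: the condition $2(1-\zeta)+2t>3$ already \emph{is} $t>\zeta+\tfrac12$ (it is not ``$t>3/2$'' that later ``sharpens''), and to avoid the troublesome $E(p)^{\zeta-1}$ branch of the mean value bound the paper invokes a clean symmetrization trick (reduce to $|p|>|q|$ using the self-adjointness of $T$, i.e.\ $|\hat T(p,q)|=|\hat T(q,p)|$) which is tidier than your $|p|\le 2|q|$ vs.\ $|p|>2|q|$ split. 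You also invoke Kato--Seiler--Simon where none is needed: once the kernel in Fourier space is explicit, the $\gS_2$ norm \emph{is} its $L^2$ norm.

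The serious gap is in \eqref{eq:Rcomm}. You claim the direct mean value argument works for all $0<\zeta<1$ with the sole constraint $t>\zeta$, but it fails already at the level of the pointwise inequality once $\zeta>\tfrac12$. The only control you have on $\hat{R_Q}$ is
$$\iint\frac{E(p-q)^2}{E(p+q)}\,|\hat{R_Q}(p,q)|^2\,\d p\,\d q\le C\|Q\|_\cQ^2,$$
so the direct strategy requires, on $|p|>|q|$,
$$\frac{\bigl(E(p)^\zeta-E(q)^\zeta\bigr)^2}{E(q)^{2t}}\;\le\;C\,\frac{E(p-q)^2}{E(p+q)},$$
and taking $q=0$, $|p|\to\infty$ the left side is $\sim E(p)^{2\zeta}$ while the right side is $\sim E(p)$, so one needs $\zeta\le\tfrac12$. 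The paper explicitly notes that the direct computation only covers $t\ge\tfrac12\ge\zeta$ and then handles the general case by a genuinely different device: the integral representation $x^\zeta=\tfrac{\sin\pi\zeta}{\pi}\int_0^\infty(1-\tfrac{s}{x+s})\,s^{\zeta-1}\,\d s$, which turns $[|D^0|^\zeta,R_Q]$ into $\int_0^\infty (|D^0|+s)^{-1}[|D^0|,R_Q](|D^0|+s)^{-1}s^\zeta\,\d s$; the $\cQ$-norm is then brought in via the single Hilbert--Schmidt estimate on $|D^0|^{-1/2}[|D^0|,R_Q]|D^0|^{-1/2}$, and for $\tfrac12\le\zeta<1$ a further commutator decomposition through $|D^0|^{1/2}$ is needed. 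None of this appears in your plan. Your ``first alternative'' (brutally splitting $[|D^0|^\zeta,R_Q]|D^0|^{-t}$ into two non-commutator terms) is also unsound: $R_Q|D^0|^{-(t-\zeta)}$ is not in $\gS_2$ with bound $C\|Q\|_\cQ$ when $t-\zeta$ is small, since the $\cQ$-norm does not supply uniform decay in $q$ alone; one genuinely needs the cancellation in the commutator.

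So: \eqref{eq:Vcomm} is close to the paper's proof modulo cosmetics, but the proof of \eqref{eq:Rcomm} for $\zeta\ge\tfrac12$ (and, to be safe, for all $\zeta$) requires the resolvent/integral-representation machinery of the actual argument, which your proposal is missing.
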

\begin{proof}
 We begin by noticing that, if $T$ is a self-adjoint operator, we have
 $$\|[|D^0|^\zeta,T]|D^0|^{-t}\|_{\gS_2}\le \sqrt{2}\|[|D^0|^\zeta,T^>]|D^0|^{-t}\|_{\gS_2},$$
 where $\hat{T^>}(p,q)=T(p,q)\chi_{|p|>|q|}$ for all $p,q\in\R^3$.
 Indeed, 
 \begin{eqnarray*}
\|[|D^0|^\zeta,T]|D^0|^{-t}\|_{\gS_2}^2 & = & \iint_{\R^3\times\R^3}\frac{|E(p)^\zeta-E(q)^\zeta|^2}{E(q)^{2t}}|\hat{T}(p,q)|^2\d{p}\d{q}\\
 & = & \iint_{|p|>|q|}\cdots+\iint_{|p|<|q|}\cdots,  
 \end{eqnarray*}
 and using $|\hat{T}(p,q)|=|\hat{T}(q,p)|$ we find
\begin{multline*}
\iint_{|p|<|q|}\frac{|E(p)^\zeta-E(q)^\zeta|^2}{E(q)^{2t}}|\hat{T}(p,q)|^2\d{p}\d{q} \le \iint_{|p|<|q|} \frac{|E(p)^\zeta-E(q)^\zeta|^2}{E(p)^{2t}}|\hat{T}(p,q)|^2\d{p}\d{q}\\
 =  \iint_{|p|>|q|} \frac{|E(p)^\zeta-E(q)^\zeta|^2}{E(q)^{2t}}|\hat{T}(p,q)|^2\d{p}\d{q}. 
\end{multline*}
Hence, to estimate $\|[|D^0|^\zeta,T]|D^0|^{-t}\|_{\gS_2}$, it is sufficient to estimate $\|[|D^0|^\zeta,T^>]|D^0|^{-t}\|_{\gS_2}$ or $\|[|D^0|^\zeta,T^<]|D^0|^{-t}\|_{\gS_2}$. Now let $A=[|D^0|^\zeta,V_\rho^>]|D^0|^{-t}$. For any $|p|>|q|$, we have by the mean value theorem 
   $$|E(p)^\zeta-E(q)^\zeta|\le \zeta E(q)^{\zeta-1}|E(p)-E(q)|\le\frac{\zeta}{2}E(q)^{\zeta-1}|p-q|.$$
   Hence,
   $$\|A\|_{\gS_2}^2\le\zeta\iint_{\R^3\times\R^3}|p-q|^2|\hat{V_\rho}(p-q)|^2\frac{\d{p}\d{q}}{E(q)^{2(t-\zeta+1)}}\le C\|\rho\|_\cC^2,$$
   since $2(t-\zeta+1)>3$. 
   To prove \eqref{eq:Rcomm}, the same strategy only works in the case $t\ge1/2\ge\zeta$ (notice here that the equality $t=\zeta=1/2$ is authorized). In the general case, first assume $0<\zeta<1/2$. Then, using the formula
   $$\forall 0<\zeta<1,\,\forall x>0,\quad x^\zeta=\frac{\sin(\pi\zeta)}{\pi}\int_0^\ii\left(1-\frac{s}{x+s}\right)\frac{\text{d}s}{s^{1-\zeta}},$$
   we find 
$$[|D^0|^\zeta,R_Q] = \frac{\sin(\pi\zeta)}{\pi}\int_0^\ii\frac{1}{|D^0|+s}[|D^0|,R_Q]\frac{1}{|D^0|+s}s^{\zeta}\d{s}.$$
   We then estimate
 \begin{multline}\label{eq:ineqcommR}
  \|[|D^0|^\zeta,R_Q]|D^0|^{-t}\|_{\gS_2}\le\frac{\sin(\pi\zeta)}{\pi}\int_0^\ii\left\|\frac{|D^0|^{\frac{1}{2}}}{|D^0|+s}\right\|_{L^2\to L^2}\times\\
  \times\| |D^0|^{-\frac{1}{2}}[|D^0|,R_Q]|D^0|^{-\frac{1}{2}}\|_{\gS_2}\left\|\frac{|D^0|^{\frac{1}{2}-t}}{|D^0|+s}\right\|_{L^2\to L^2}s^\zeta\d{s}.
 \end{multline}
 Setting $B= |D^0|^{-\frac{1}{2}}[|D^0|,R_Q]|D^0|^{-\frac{1}{2}}$, one has
 \begin{eqnarray*}
  \|B\|_{\gS_2}^2 & = & \iint_{\R^3\times\R^3} |\hat{R_Q}(p-q)|^2\frac{(E(p)-E(q))^2}{E(p)E(q)}\d{p}\d{q}\\
& \le & C\iint_{\R^3\times\R^3} \frac{E(p-q)^2}{E(p+q)}|\hat{R_Q}(p-q)|^2\d{p}\d{q}\\
& \le & C\|Q\|_\cQ^2,
 \end{eqnarray*}
 where the last inequality was proved in \cite[Lemma 8]{HaiLewSer-05a}. Now assuming $t>\zeta>0$ we have $\tau:=\frac{1}{2}-t\le1$, hence using
\begin{equation}\label{eq:xtau}
\forall \tau\le1,\,\forall s\ge0\quad\sup_{x\ge1}\frac{x^\tau}{x+s}\le\frac{1}{(1+s)^{1-\max(\tau,0)}},
\end{equation}
 we deduce
$$
  \left\|\left[|D^0|^\zeta,R_Q\right]|D^0|^{-t}\right\|_{\gS_2}  \le  C\int_0^\ii\frac{s^\zeta\d{s}}{(1+s)^{\min(t+1,3/2)}}\|Q\|_\cQ,
$$
the right side being finite since $t>\zeta$ and $\zeta<1/2$. If $1/2\le\zeta<1$, we write $|D^0|^\zeta=|D^0|^{\zeta'}|D^0|^{1/2}$ with $0\le\zeta'<1/2$ to infer
$$|D^0|^{\zeta}=\frac{\sin(\pi\zeta')}{\pi}\int_0^{\ii}\frac{|D^0|^{3/2}}{|D^0|+s}\frac{\d{s}}{s^{1-\zeta'}}.$$
Now for all $s\ge0$ we have
$$[\frac{|D^0|^{3/2}}{|D^0|+s},R_Q^<]=\frac{1}{|D^0|+s}(s[|D^0|^{3/2},R_Q^<]+[|D^0|^{1/2},|D^0|R_Q^<|D^0|])\frac{1}{|D^0|+s},$$
If $B_1=|D^0|^{-1/2}[|D^0|^{3/2},R_Q^<]|D^0|^{-1}$, then
\begin{eqnarray*}
 \|B_1\|_{\gS_2}^2 & = & \iint_{|p|<|q|} \frac{|E(p)^{3/2}-E(q)^{3/2}|^2}{E(p)E(q)^2}|\hat{R_Q}(p,q)|^2\d{p}\d{q}\\
  & \le & C\iint_{|p|<|q|}\frac{E(p-q)^2}{E(p+q)}|\hat{R_Q}(p,q)|^2\d{p}\d{q}\le C\|Q\|_{\cQ}^2.
\end{eqnarray*}
If $B_2=|D^0|^{-1}[|D^0|^{1/2},|D^0|R_Q^<|D^0|]|D^0|^{-3/2}$, then
\begin{eqnarray*}
 \|B_2\|_{\gS_2}^2 & = & \iint_{|p|<|q|}\frac{|E(p)^{1/2}-E(q)^{1/2}|^2E(p)^2E(q)^2}{E(p)^2E(q)^3}|\hat{R_Q}(p,q)|^2\d{p}\d{q}\\
 & \le & C\iint_{|p|>|q|}\frac{E(p-q)^2}{E(p+q)}|\hat{R_Q}(p,q)|^2\d{p}\d{q}\le C\|Q\|_{\cQ}^2.
\end{eqnarray*}
We deduce that if $B_3=[|D^0|^\zeta,R_Q^<]|D^0|^{-t}$, we have, using \eqref{eq:xtau} again
\begin{eqnarray*} 
\|B_3\|_{\gS_2} & \le & C\|Q\|_\cQ\int_0^\ii\left(s\|\frac{|D^0|^{1/2}}{|D^0|+s}\|\|\frac{|D^0|^{1-t}}{|D^0|+s}\|+\|\frac{|D^0|}{|D^0|+s}\|\|\frac{|D^0|^{3/2-t}}{|D^0|+s}\|\right)\frac{\d{s}}{s^{1-\zeta'}} \\
 & \le & C\|Q\|_\cQ\int_0^\ii\left(\frac{s^{\zeta'}}{(1+s)^{\min(t+1/2,3/2)}}+\frac{s^{\zeta'-1}}{(1+s)^{\min(t-1/2,1)}}\right)\d{s},
\end{eqnarray*}
the right side being finite since $t>\zeta'+1/2=\zeta$ and $3/2-\zeta'>1$. 
\end{proof}

The next proposition is the analogue of \cite[Proposition 10]{HaiLewSer-05a}. It is the main tool to prove the contractivity of the map $\Phi$. 

\begin{proposition}\label{prop:main}
Let $\omega\in L^2(\R^3)\cap\cC$, $\mu\in(-1,1)$, $\alpha\ge0$, $\Lambda>1$, $(Q,\rho')\in\cX$ be such that the following conditions are satisfied:

\begin{enumerate}
 \item $\mu\notin\sigma(\Dref(\omega))$;
 \item $\mu\notin\sigma(D_{\omega,\rho',Q})$;
 \item $8\alpha\|\omega\|_\cC<1-|\mu|$.
\end{enumerate}

Then, we have the following estimates:
\begin{multline*}
  \left\|(\Phi_1^\omega(Q,\rho'),\Phi_2^\omega(Q,\rho'))\right\|_\cX\le2\sqrt{\pi}\left\|\rhoref\right\|_{L^2\cap\cC}+\alpha\sqrt{\log\Lambda}\Xi\left\|(Q,\rho')\right\|_\cX\\
  +\sum_{n\ge2}\Big(\alpha\Xi\left\|(Q,\rho')\right\|_\cX\Big)^n,
\end{multline*}
 \begin{equation}\label{eq:estimate_derivative}
   \left\|((\Phi_1^\omega)'(Q,\rho'),(\Phi_2^\omega)'(Q,\rho'))\right\|\le \alpha\sqrt{\log\Lambda}\Xi+\alpha\sum_{n\ge2}n\Big(\alpha\Xi\left\|(Q,\rho')\right\|_\cX\Big)^{n-1},
 \end{equation}
 with
 \begin{equation}
  \Xi:=C(\mu)(1+\alpha\sqrt{\log\Lambda})^{14}\left(1+\|\omega\|_{L^2\cap\cC}^{14}\right)\theta(\mu,\omega)^6,
 \end{equation}
  where $C(\mu)>0$ is some constant only depending on $\mu$, and $\theta(\mu,\omega)$ is defined in Lemma \ref{lemma:D0Dref-1L2L2}. 
\end{proposition}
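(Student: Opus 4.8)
\emph{Strategy.} Under hypotheses (1)--(3), Lemma \ref{lemma:gap-DQ} (together with Lemma \ref{lemma:D0Dref-1L2L2}) guarantees that both $\Dref(\omega)$ and $D_{\omega,\rho',Q}$ have a spectral gap at $\mu$, so the Cauchy-formula representation and the resolvent (Neumann) expansion carried out formally before the statement are legitimate. The plan is to estimate each term of the two series $\Phi_1^\omega=\sum_{n\ge1}\alpha^nG_n^\omega$ and $\Phi_2^\omega=(\text{inhomogeneous term})+\sum_{n\ge1}\alpha^nJ_n^\omega$ separately and then sum. Since the resolvents $(\Dref(\omega)-\mu+i\eta)^{-1}$ do not depend on $(Q,\rho')$, each $G_n^\omega(Q,\rho')$ is a homogeneous polynomial of degree $n$ in the pair $(V_{\rho'},R_Q)$; it therefore suffices to prove, for $n\ge2$, a bound $\|G_n^\omega(Q,\rho')\|_\cQ+\|J_n^\omega(Q,\rho')\|_{L^2\cap\cC}\le(\Xi\|(Q,\rho')\|_\cX)^n$, together with the analogous bound carrying one extra factor $\sqrt{\log\Lambda}$ for $n=1$ and the estimate of the inhomogeneous term. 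The derivative estimate \eqref{eq:estimate_derivative} then follows by differentiating the series term by term: $(G_n^\omega)'(Q,\rho')$ is a sum of $n$ chains of exactly the same structure as $G_n^\omega$ with one factor $(V_{\rho'},R_Q)$ replaced by the direction of differentiation.

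\emph{The chain estimate.} Writing $A_\eta:=(\Dref(\omega)-\mu+i\eta)^{-1}$, the term $G_n^\omega(Q,\rho')$ is, up to sign, $\frac1{2\pi}\int_\R A_\eta\,\Pi_\Lambda(V_{\rho'}-R_Q)\Pi_\Lambda\,A_\eta\cdots\Pi_\Lambda(V_{\rho'}-R_Q)\Pi_\Lambda\,A_\eta\,\d{\eta}$, a chain of $n+1$ resolvents and $n$ potential/exchange factors. To bound its $\cQ$-norm I would expand its Fourier kernel over the intermediate momenta $p=p_0,\dots,p_n=q$ and split the two $\cQ$-weights using $E(p_0+p_n)\le C(E(p_0)+E(p_n))$ and $E(p_0-p_n)\le C^n\prod_{j=1}^nE(p_{j-1}-p_j)$: one power of $E(p_0)$ (resp.\ $E(p_n)$) is absorbed by the outermost resolvent, at cost $\theta(\mu,\omega)$ uniformly in $\eta$ by Lemma \ref{lemma:D0Dref-1L2L2}, while each weight $E(p_{j-1}-p_j)^\zeta$ is moved onto the $j$-th potential/exchange factor, where it becomes a commutator $[|D^0|^\zeta,V_{\rho'}-R_Q]$ controlled by $\|\rho'\|_\cC$ resp.\ $\|Q\|_\cQ$ via Lemma \ref{lemm:comm}. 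The remaining resolvent--potential pairs are estimated in $\gS_6$ by the Kato--Seiler--Simon and Hardy--Kato bounds \eqref{eq:ineq-VD0-1S6}, \eqref{eq:ineq-RQD0-1S2} of Lemma \ref{lemma:basic-estimates} (each carrying a decay $E(\eta)^{-1/2}$) and grouped by H\"older for Schatten classes ($1/p+1/q+1/r=1/2$) into a $\gS_2$-bound; as soon as $n\ge2$ the chain has at least three resolvents, which leaves enough decay in $\eta$ and in momentum to make the $\eta$- and momentum-integrals converge with a bound whose $\Lambda$-dependence is confined to $\alpha\sqrt{\log\Lambda}$. Passing from $A_\eta$ to $(D^0_\Lambda-\mu+i\eta)^{-1}$ by the resolvent identity inserts copies of $\Pi_\Lambda(V_\omega-\alpha R_{1,0}(\omega))\Pi_\Lambda$, each costing at most $C(1+\alpha\sqrt{\log\Lambda})(1+\|\omega\|_{L^2\cap\cC})$ by \eqref{eq:ineq-VLii}, \eqref{eq:ineq-RQD0-1L2L2} and \eqref{eq:ineq-G10S2}; since only a bounded number of insertions per $\gS_6$-factor is needed, these accumulate to the power $14$, and distributing the $|D^0|$-powers through a triple $\gS_6$-decomposition uses Lemma \ref{lemma:D0Dref-1L2L2} at most six times, giving $\theta(\mu,\omega)^6$. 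This produces $\Xi$ as stated.

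\emph{The term $n=1$, the density, and summation.} The term $n=1$ must be isolated: its specialization to $Q=0$, with $\Dref(\omega)$ and $\mu$ replaced by $D^0_\Lambda$ and $0$, is exactly the operator $G_{1,0}(\rho')$ of \eqref{eq:def-G10}, whose $\cQ$-norm is genuinely of size $\sqrt{\log\Lambda}\,\|\rho'\|_{L^2\cap\cC}$ by Lemma \ref{lemma:prop-G10}, so the $\Lambda$-uniform mechanism above does not apply. I would therefore write $G_1^\omega(Q,\rho')=G_{1,0}(\rho')+\bigl(G_1^\omega(0,\rho')-G_{1,0}(\rho')\bigr)-\frac1{2\pi}\int_\R A_\eta\Pi_\Lambda R_Q\Pi_\Lambda A_\eta\,\d{\eta}$, bound the first summand by Lemma \ref{lemma:prop-G10}, and bound the remaining two by the chain estimate of the previous paragraph (the resolvent-identity difference and the single $R_Q$-factor supply enough extra decay), yielding $\|G_1^\omega(Q,\rho')\|_\cX\le\sqrt{\log\Lambda}\,\Xi\,\|(Q,\rho')\|_\cX$. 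The charge densities $J_n^\omega(Q,\rho')=\rho[G_n^\omega(Q,\rho')]$ are handled in parallel, estimating $\widehat{\rho[\cdot]}(k)$ and $\widehat{\rho[\cdot]}(k)/|k|$ pointwise through the chain's kernel as in \cite[Proposition 10]{HaiLewSer-05a}; the inhomogeneous term $\cF^{-1}[\hat{\rhoref}/(1+\alpha B_\Lambda(\cdot))]$ contributes at most $2\sqrt\pi\,\|\rhoref\|_{L^2\cap\cC}$ since $0\le B_\Lambda(k)$ forces $|1+\alpha B_\Lambda(k)|^{-1}\le1$. Summing both series over $n\ge1$ (legitimate for $\alpha$ small by the geometric bound just obtained) gives the first displayed inequality, and differentiating term by term with the product rule, as in the first paragraph, gives \eqref{eq:estimate_derivative}.

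\emph{Main obstacle.} The crux is the second step: redistributing the two $\cQ$-weights $E(p-q)^2$ and $E(p+q)$ across a chain of $n+1$ resolvents and $n$ potential/exchange factors so that (i) every power of $|D^0|$ lands on a resolvent where Lemma \ref{lemma:D0Dref-1L2L2} applies, (ii) every momentum-difference weight lands on a factor where the commutator Lemma \ref{lemm:comm} applies, and (iii) the leftover $\eta$- and momentum-integrals converge with a bound whose $\Lambda$-dependence reduces to $\alpha\sqrt{\log\Lambda}$ for $n\ge2$ --- all while keeping exact track of the finitely many resolvent-identity insertions that produce the exponents $14$ and $6$ in the definition of $\Xi$. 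The $n=1$ term, by contrast, is genuinely logarithmically divergent in $\Lambda$ and is disposed of by the explicit comparison with $G_{1,0}$.
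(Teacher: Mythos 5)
Your outline has the right scaffolding (Cauchy formula, commutator estimates from Lemma~\ref{lemm:comm}, the $\theta(\mu,\omega)$ bound of Lemma~\ref{lemma:D0Dref-1L2L2}, $\gS_6$-grouping, isolation of the $n=1$ term), but the core ``chain estimate'' step rests on a premise that fails. To bound $\|G_n^\omega(Q,\rho')\|_\cQ$, you propose to ``expand its Fourier kernel over the intermediate momenta $p_0,\dots,p_n$'' and redistribute the two $\cQ$-weights pointwise across the chain. This would be legitimate if the resolvent $A_\eta=(\Dref(\omega)-\mu+i\eta)^{-1}$ were diagonal in Fourier space, so that the intermediate momenta were well-defined and the weight factors $E(p_{j-1}-p_j)^\zeta$ could be pushed onto the $X$-factors as commutators. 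It is not: $\Dref(\omega)$ contains $\Pi_\Lambda V_\omega\Pi_\Lambda$ and $\alpha\Pi_\Lambda R_{1,0}(\omega)\Pi_\Lambda$, so $A_\eta$ has no explicit Fourier kernel, and the pointwise momentum-redistribution machinery of \cite[Proposition 10]{HaiLewSer-05a} (which applies to the free resolvent $(D^0_\Lambda-\mu+i\eta)^{-1}$, a multiplication operator in Fourier space) is not available for it. The paper singles out precisely this as the central obstruction to transposing the fixed-point argument of \cite{HaiLewSer-05a}.

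The missing device is the paper's two-tier estimation scheme. One first splits the resolvent into a finite degree-five resolvent-identity expansion plus a remainder, $A_\eta=A_0(\eta)+A_1(\eta)$ with $A_1(\eta)$ as in \eqref{eq:def-A1eta}; each $G_n^\omega$ then breaks into $2^{n+1}$ pieces $G_{n,x}$, $x\in\{0,1\}^{n+1}$. The unique all-$A_0$ piece $G_{n,(0,\dots,0)}$ has an explicit Fourier kernel, being a polynomial in the free resolvent and $Y(\omega,\alpha)$, and is controlled by the sharp $\cQ$-norm estimate of \cite[Proposition 10]{HaiLewSer-05a}, adapted via the $\mu$-shifted replacement for their Eq.~(58). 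Every other piece contains at least one $A_1(\eta)$ factor, and its $\cQ$-norm is bounded instead by the coarse majorant
$\|Q\|_\cQ^2\le\||D^0|^{3/2}Q\|_{\gS_2}^2+2\||D^0|Q|D^0|^{1/2}\|_{\gS_2}^2+2\||D^0|^{1/2}Q|D^0|\|_{\gS_2}^2+\|Q|D^0|^{3/2}\|_{\gS_2}^2$
of \eqref{eq:ineqnormQ}, which is \emph{computable by pure operator-norm and Schatten arguments}, with no need for the Fourier kernel of $A_\eta$ (this is the entire content of Lemma~\ref{lemma:estA0A1}). This crude bound is affordable only because the five $\gS_6$-factors inside $A_1(\eta)$ yield $E(\eta)^{-3/2}$ decay (resp.\ $E(\eta)^{-5/2}$ in $\gS_{6/5}$, Eq.~\eqref{eq:estD0A165}) and control the extra $|D^0|$-powers the majorant places at the ends; that is why the expansion must be carried to exactly fifth order, and why the bound fails when applied to the all-$A_0$ piece. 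Your proposal introduces neither the $A_0+A_1$ split nor the coarse majorant \eqref{eq:ineqnormQ}, so the chain estimate never gets off the ground, and the gap is genuine.
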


\begin{proof}
  We start by estimating $\Phi_1^\omega=\sum_{n\ge1}\alpha^nG_n^\omega$ in norm $\|\cdot\|_\cQ$. The difficulty relies in the fact that the definition of the norm $\|\cdot\|_\cQ$ uses the Fourier representation of the kernel of the operator considered. In \cite{HaiLewSer-05a}, it was not an issue since all the studied operators, for instance the resolvent $(D^0_\Lambda+i\eta)^{-1}$, have an explicit kernel in Fourier space. This is not the case in our setting: the kernel in Fourier space of the resolvent $(\Dref(\omega)-\mu+i\eta)^{-1}$ is not explicit. It is thus harder to estimate the $\cQ$-norms of these operators. To circumvent this difficulty, we will expand the resolvent $(\Dref(\omega)-\mu+i\eta)^{-1}$ in terms of the resolvent $(D^0_\Lambda-\mu+i\eta)^{-1}$ and of $\omega$, in \emph{each} $G_n^\omega$ separately. The terms only containing resolvents of the form $(D^0_\Lambda-\mu+i\eta)^{-1}$ can be estimated via the results of \cite{HaiLewSer-05a}, while the terms still containing 
resolvents of the form $(\Dref(\omega)-\mu+i\eta)^{-1}$ decay faster and can be estimated using inequalities of the form $\|Q\|_\cQ\ \lesssim\| |D^0|^{\frac{3}{2}}Q\|_{\gS_2}$. Indeed, the norm $\| |D^0|^{\frac{3}{2}}\cdot\|_{\gS_2}$ is a rough upper bound for the $\cQ$-norm, but has the advantage of being computable without using the kernel in Fourier space of the operator. 
  We thus expand the resolvent as
  $$\forall\eta\in\R,\quad\frac{1}{\Dref(\omega)-\mu+i\eta}=A_0(\eta)+A_1(\eta),$$
  with 
  $$A_0(\eta)=\frac{1}{D^0_\Lambda-\mu+i\eta}\sum_{k=0}^4\left(Y(\omega,\alpha)\frac{1}{D^0_\Lambda-\mu+i\eta}\right)^k,$$
  \begin{equation}\label{eq:def-A1eta}
 A_1(\eta)=\left(\frac{1}{D^0_\Lambda-\mu+i\eta}Y(\omega,\alpha)\right)^5\frac{1}{\Dref(\omega)-\mu+i\eta},  
  \end{equation}
  $$Y(\omega,\alpha):=\Pi_\Lambda(V_\omega-\alpha R_{1,0}(\omega))\Pi_\Lambda.$$
  Let $n\ge1$ and write
 \begin{eqnarray*}
G_n^\omega(Q,\rho') & = & \frac{(-1)^{n+1}}{2\pi}\sum_{x\in\{0,1\}^{n+1}}\int_\R A_{x_0}(\eta)\prod_{j=1}^n\Big(XA_{x_j}(\eta)\Big)\d{\eta}\\
 & = & \frac{(-1)^{n+1}}{2\pi}\sum_{x\in\{0,1\}^{n+1}}G_{n,x},  
 \end{eqnarray*}
 where we used the notation $X=V_{\rho'}-R_{Q}$. The term $G_{n,(0,\ldots,0)}$ was already estimated in \cite[Proposition 10]{HaiLewSer-05a},
 \begin{equation}\label{eq:estCMP}
\|G_{n,(0,\ldots,0)}\|_\cQ\le \kappa_n(1+(\alpha\sqrt{\log\Lambda})^4)^{n+1}(1+\|\omega\|_{L^2\cap\cC}^4)^{n+1}\|(Q,\rho')\|^n, 
 \end{equation}
 where $\kappa_n=C(\mu)^n$ for $n\ge2$ and $\kappa_1=C(\mu)\sqrt{\log\Lambda}$, for some number $C(\mu)$ only depending on $\mu$. Actually, the estimate \eqref{eq:estCMP} was obtained for $\mu=0$ in \cite{HaiLewSer-05a}. The way to obtain it for any $\mu$ is to replace Equation (58) in \cite{HaiLewSer-05a} by
 \begin{equation}
  \frac{1}{\sqrt{(E(p)-\mu)^2+\eta^2}}\frac{1}{\sqrt{(E(q)-\mu)^2+\eta^2}}\le\frac{2C'(\mu)}{E(p+q)E(\eta)},\quad\forall p,q\in\R^3,\:\forall\eta\in\R,
 \end{equation}
 for some $C'(\mu)$. It remains to treat $G_{n,x}$ with $x\neq(0,\ldots,0)$. Using that for any $p,q\in\R^3$ one has 
 $$E(p-q)^2E(p+q)\le E(p)^3+2E(p)^2E(q)+2E(p)E(q)^2+E(q)^3,$$
 we deduce that 
 \begin{equation}\label{eq:ineqnormQ}
\|Q\|_\cQ^2\le \||D^0|^{\frac{3}{2}}Q\|_{\gS_2}^2+2\||D^0|Q|D^0|^{\frac{1}{2}}\|_{\gS_2}^2+2\||D^0|^{\frac{1}{2}}Q|D^0|\|_{\gS_2}^2+\|Q|D^0|^{\frac{3}{2}}\|_{\gS_2}^2,
 \end{equation}
 for any $Q\in\cQ$ such that the right side is well defined. To estimate $\|G_{n,x}\|_\cQ$, we will estimate the right side of \eqref{eq:ineqnormQ} with $Q=G_{n,x}$. 
\begin{lemma}[Estimates on $A_0(\eta)$, $A_1(\eta)$]\label{lemma:estA0A1}
 Let $\eta\in\R$. We have the following estimates.
  \begin{equation}\label{eq:maxmax}
  \left\{\begin{array}{l}
   \|A_0(\eta)\|+\||D^0|A_0(\eta)\|\le C(\mu)(1+(\alpha\sqrt{\log\Lambda})^4)(1+\|\omega\|_{L^2\cap\cC}^4),\\
   \|A_1(\eta)\|+\||D^0|A_1(\eta)\|\le C(\mu)(1+(\alpha\sqrt{\log\Lambda})^5)\|\omega\|_{L^2\cap\cC}^5\theta(\mu,\omega).
   \end{array}\right.
 \end{equation}
For all $0\le\sigma<\frac{1}{2}$, we have
 \begin{equation}\label{eq:estD032+sigmaA1}
  \left\||D^0|^{\frac{3}{2}+\sigma}A_1(\eta)\right\|_{\gS_2}\le\frac{C(\mu)(1+\alpha\sqrt{\log\Lambda})^5\|\omega\|^5_{L^2\cap\cC}\theta(\mu,\omega)}{E(\eta)^{3/2}},
 \end{equation}
  and for any $B\in\gS_2$ such that $|D^0|^{\frac{3}{2}+\sigma}B\in\gS_2$, we have
   \begin{equation}\label{eq:D032B}
  \left\||D^0|^{\frac{3}{2}+\sigma}A_0(\eta)XB\right\|_{\gS_2}\le C(\mu)(1+\alpha\sqrt{\log\Lambda})^4\left(1+\|\omega\|_{L^2\cap\cC}^4\right)\|(Q,\rho')\|_\cX\left\||D^0|^{\frac{3}{2}+\sigma}B\right\|_{\gS_2}.
 \end{equation}
We also have
 \begin{equation}\label{eq:estA1D032}
  \left\|A_1(\eta)|D^0|^{\frac{3}{2}}\right\|_{\gS_2}\le\frac{C(\mu)(1+\alpha\sqrt{\log\Lambda})^7\|\omega\|_{L^2\cap\cC}^7\theta(\mu,\omega)^3}{E(\eta)^{3/2}},
 \end{equation}
 and for any $B\in\gS_2$ such that $B|D^0|^{\frac{3}{2}}\in\gS_2$, we have
 \begin{equation}\label{eq:BD032}
  \left\|BXA_0(\eta)|D^0|^{\frac{3}{2}}\right\|_{\gS_2}\le C(\mu)(1+\alpha\sqrt{\log\Lambda})^4\left(1+\|\omega\|_{L^2\cap\cC}^4\right)\|(Q,\rho')\|_\cX\left\|B|D^0|^{\frac{3}{2}}\right\|_{\gS_2}.
 \end{equation}
Finally, we have
 \begin{equation}\label{eq:estD0A165}
  \left\||D^0|A_1(\eta)\right\|_{\gS_{6/5}}\le\frac{C(\mu)(1+\alpha\sqrt{\log\Lambda})^5\|\omega\|_{L^2\cap\cC}^5\theta(\mu,\omega)}{E(\eta)^{5/2}},
 \end{equation}
and for any $B$ such that $|D^0|B\in\gS_{6/5}$, we have
 \begin{equation}\label{eq:estD0B}
 \left\||D^0|A_0(\eta)XB\right\|_{\gS_{6/5}}\le C(\mu)(1+\alpha\sqrt{\log\Lambda})^4(1+\|\omega\|_{L^2\cap\cC}^4)\|(Q,\rho')\|_\cX\Big\||D^0|B\Big\|_{\gS_{6/5}}, 
 \end{equation}
\end{lemma}

\begin{proof}[Proof of Lemma \ref{lemma:estA0A1}]
 To prove \eqref{eq:maxmax}, just use $\||D^0|(D^0_\Lambda-\mu+i\eta)^{-1}\|_{L^2\to L^2}\le C(\mu)$, $\|Y(\omega,\alpha)|D^0|^{-1/2}\|_{L^2\to L^2}\le C(1+\alpha\sqrt{\log\Lambda})\|\omega\|_{L^2\cap\cC}$ and $\||D^0|(\Dref(\omega)-\mu+i\eta)^{-1}\|_{L^2\to L^2}\le\theta(\mu,\omega)$. To prove \eqref{eq:estD032+sigmaA1}, we write $|D^0|^{\frac{3}{2}+\sigma}A_1(\eta)=B_1(\eta)+B_2(\eta)$ with 
 \begin{multline*}
B_1(\eta):=\frac{|D^0|}{D^0_\Lambda-\mu+i\eta}Y(\omega,\alpha)\frac{|D^0|^{\frac{1}{2}+\sigma}}{D^0_\Lambda-\mu+i\eta}Y(\omega,\alpha)\times\\
\times\left(\frac{1}{D^0_\Lambda-\mu+i\eta}Y(\omega,\alpha)\right)^3\frac{1}{\Dref(\omega)-\mu+i\eta},  
 \end{multline*}
 $$B_2(\eta):=\frac{|D^0|}{D^0_\Lambda-\mu+i\eta}\left[|D^0|^{\frac{1}{2}+\sigma},Y(\omega,\alpha)\right]\left(\frac{1}{D^0_\Lambda-\mu+i\eta}Y(\omega,\alpha)\right)^4\frac{1}{\Dref(\omega)-\mu+i\eta}.$$
Now $B_1(\eta)=B_{11}(\eta)+B_{12}(\eta)$ with 
\begin{multline*}
 B_{11}(\eta)=\frac{|D^0|}{D^0_\Lambda-\mu+i\eta}Y(\omega,\alpha)|D^0|^{-\frac{1}{2}}\frac{|D^0|}{D^0_\Lambda-\mu+i\eta}Y(\omega,\alpha)|D^0|^{-\frac{1}{2}}\frac{|D^0|^{\frac{1}{2}+\sigma}}{D^0_\Lambda-\mu+i\eta}\times\\
\times\left(Y(\omega,\alpha)\frac{1}{D^0_\Lambda-\mu+i\eta}\right)^3(D^0_\Lambda-\mu+i\eta)\frac{1}{\Dref(\omega)-\mu+i\eta},
\end{multline*}
\begin{multline*}
 B_{12}(\eta)=\frac{|D^0|}{D^0_\Lambda-\mu+i\eta}Y(\omega,\alpha)|D^0|^{-\frac{1}{2}}\frac{|D^0|}{D^0_\Lambda-\mu+i\eta}\left[|D^0|^\sigma,Y(\omega,\alpha)\right]|D^0|^{-t}\times\\
\times\frac{|D^0|^t}{D^0_\Lambda-\mu+i\eta}\left(Y(\omega,\alpha)\frac{1}{D^0_\Lambda-\mu+i\eta}\right)^3(D^0_\Lambda-\mu+i\eta)\frac{1}{\Dref(\omega)-\mu+i\eta},
\end{multline*}
for some $\frac{1}{2}+\sigma<t<1$. Using $|D^0||D^0-\mu+i\eta|^{-1}\le C(\mu)$, $\|Y(\omega,\alpha)|D^0|^{-\frac{1}{2}}\|_{L^2\to L^2}\le C(\mu)(1+\alpha\sqrt{\log\Lambda})\|\omega\|_{L^2\cap\cC}$ by \eqref{eq:ineq-VLii} and \eqref{eq:ineq-RQD0-12L2L2}, $\|(D^0_\Lambda-\mu+i\eta)(\Dref(\omega)-\mu+i\eta)^{-1}\|_{L^2\to L^2}\le C(\mu)\theta(\mu,\omega)$, $\frac{1}{2}+\sigma<1$, and the two inequalities
 \begin{equation}\label{eq:estVnuD0-1S6}
\left\|V_\omega\frac{1}{D^0-\mu+i\eta}\right\|_{\gS_6}\le \frac{C(\mu)\|\omega\|_\cC}{E(\eta)^{1/2}},  
 \end{equation}
obtained by \eqref{eq:ineq-VD0-1S6},
\begin{equation}\label{eq:RD0+ieta}
 \left\|R_{1,0}(\omega)\frac{1}{D^0-\mu+i\eta}\right\|_{\gS_2}\le\frac{C(\mu)\sqrt{\log\Lambda}\|\omega\|_{L^2\cap\cC}}{E(\eta)^{1/2}},
\end{equation}
obtained by \eqref{eq:ineq-RQD0-1S2} and \eqref{eq:ineq-G10Q}, we get
\begin{equation}
\left\|Y(\omega,\alpha)\frac{1}{D^0-\mu+i\eta}\right\|_{\gS_6}\le \frac{C(\mu)(1+\alpha\sqrt{\log\Lambda})\|\omega\|_{L^2\cap\cC}}{E(\eta)^{1/2}},   
\end{equation}
hence
 $$\|B_{11}(\eta)\|_{\gS_2}\le \frac{C(\mu)(1+\alpha\sqrt{\log\Lambda})^5\|\omega\|_{L^2\cap\cC}^5\theta(\mu,\omega)}{E(\eta)^{3/2}}.$$
The term $B_{12}(\eta)$ is treated the same way, except that we use the commutator estimates of Lemma \ref{lemm:comm}, to get
$$\|\left[|D^0|^\sigma,Y(\omega,\alpha)\right]|D^0|^{-t}\|_{\gS_2}\le C(\mu)(1+\alpha\sqrt{\log\Lambda})\|\omega\|_{L^2\cap\cC}.$$
 We thus find
$$\|B_{12}(\eta)\|_{\gS_2}\le \frac{C(\mu)(1+\alpha\sqrt{\log\Lambda})^5\|\omega\|_{L^2\cap\cC}^5\theta(\mu,\omega)}{E(\eta)^{3/2}}.$$
 We next set $B_2(\eta)=B_{21}(\eta)+B_{22}(\eta)$ with
 \begin{multline*}
 B_{21}(\eta)=\frac{|D^0|}{D^0_\Lambda-\mu+i\eta}\left[|D^0|^{\frac{1}{2}+\sigma},Y(\omega,\alpha)\right]|D^0|^{-s}\frac{|D^0|}{D^0_\Lambda-\mu+i\eta}Y(\omega,\alpha)\frac{|D^0|^{s-1}}{D^0_\Lambda-\mu+i\eta}\times\\
 \times\left(Y(\omega,\alpha)\frac{1}{D^0_\Lambda-\mu+i\eta}\right)^3(D^0_\Lambda-\mu+i\eta)\frac{1}{\Dref(\omega)-\mu+i\eta}
\end{multline*}
and
\begin{multline*}
 B_{22}(\eta) =  \frac{|D^0|}{D^0_\Lambda-\mu+i\eta}\left[|D^0|^{\frac{1}{2}+\sigma},Y(\omega,\alpha)\right]|D^0|^{-s}\frac{|D^0|}{D^0_\Lambda-\mu+i\eta}\left[|D^0|^{s-1},Y(\omega,\alpha)\right]|D^0|^{-u}\times\\
 \times\frac{|D^0|^u}{D^0_\Lambda-\mu+i\eta}\left(Y(\omega,\alpha)\frac{1}{D^0_\Lambda-\mu+i\eta}\right)^3(D^0_\Lambda-\mu+i\eta)\frac{1}{\Dref(\omega)-\mu+i\eta},
\end{multline*}
for some $1+\sigma<s<\frac{3}{2}$ and $s-\frac{1}{2}<u<1$. Such a couple $(s,u)$ exists since $\sigma<\frac{1}{2}$. Using the estimates as before, we obtain 
$$ \|B_2(\eta)\|_{\gS_2}\le\frac{C(\mu)(1+\alpha\sqrt{\log\Lambda})^5\|\omega\|^5_{L^2\cap\cC}\theta(\mu,\omega)}{E(\eta)^{3/2}}.
$$
We thus have proved \eqref{eq:estD032+sigmaA1}. Let us now prove \eqref{eq:D032B}. Hence, let $B\in\gS_2$ such that $|D^0|^{\frac{3}{2}+\sigma}B\in\gS_2$. Recall that $A_0(\eta)$ is the sum of five terms. We will treat the first two independently. We thus write 
 $$|D^0|^{\frac{3}{2}+\sigma}A_0(\eta)XB=H_1+H_2+H_3.$$
 The first term to estimate is 
 $$H_1:=\frac{|D^0|^{\frac{3}{2}+\sigma}}{D^0_\Lambda-\mu+i\eta}XB=\frac{|D^0|}{D^0_\Lambda-\mu+i\eta}X|D^0|^{\frac{1}{2}+\sigma}B+\frac{|D^0|}{D^0_\Lambda-\mu+i\eta}\left[|D^0|^{\frac{1}{2}+\sigma},X\right]B.$$
 By Lemma \ref{lemm:comm}, we know that 
 \begin{equation}\label{eq:XD0-1}
\|X|D^0|^{-1}\|_{L^2\to L^2}\le C\|(Q,\rho')\|_\cX,
 \end{equation}
 $$\left\|\left[|D^0|^{\frac{1}{2}+\sigma},X\right]|D^0|^{-\frac{3}{2}-\sigma}\right\|_{L^2\to L^2}\le C\|(Q,\rho')\|_\cX.$$
 Hence,
 $$\|H_1\|_{\gS_2}\le C(\mu)\|(Q,\rho')\|_\cX\left\||D^0|^{\frac{3}{2}+\sigma}B\right\|_{\gS_2}.$$
 Next, we treat
 \begin{eqnarray*}
H_2 & := & \frac{|D^0|^{\frac{3}{2}+\sigma}}{D^0_\Lambda-\mu+i\eta}Y(\omega,\alpha)\frac{1}{D^0_\Lambda-\mu+i\eta}XB\\
 & = & \frac{|D^0|}{D^0_\Lambda-\mu+i\eta}Y(\omega,\alpha)\frac{|D^0|^{\frac{1}{2}+\sigma}}{D^0_\Lambda-\mu+i\eta}XB\\
 & & +\frac{|D^0|}{D^0_\Lambda-\mu+i\eta}\left[|D^0|^{\frac{1}{2}+\sigma},Y(\omega,\alpha)\right]\frac{1}{D^0_\Lambda-\mu+i\eta}XB\\
 & = & H_{21}+H_{22}.
 \end{eqnarray*}
The first term is easily estimated by
 \begin{multline*}
\|H_{21}\|_{\gS_2}\le C(\mu)\|Y(\omega,\alpha)|D^0|^{-1}\|_{L^2\to L^2}\|H_1\|_{\gS_2}\\
\le C(\mu)(1+\alpha\sqrt{\log\Lambda})\|\omega\|_{L^2\cap\cC}\|(Q,\rho')\|_\cX\left\||D^0|^{\frac{3}{2}+\sigma}B\right\|_{\gS_2},  
 \end{multline*}
while we notice that the second term can be estimated by
\begin{eqnarray*}
\|H_{22}\|_{\gS_2} & \le & C(\mu)\left\|\left[|D^0|^{\frac{1}{2}+\sigma},Y(\omega,\alpha)\right]|D^0|^{-\frac{3}{2}+\sigma}\right\|_{L^2\to L^2} \|H_1\|_{\gS_2}\\
 & \le &  C(\mu)(1+\alpha\sqrt{\log\Lambda})\|\omega\|_{L^2\cap\cC}\|(Q,\rho')\|_\cX\left\||D^0|^{\frac{3}{2}+\sigma}B\right\|_{\gS_2}. 
\end{eqnarray*}
We conclude that 
$$\|H_2\|_{\gS_2}\le C(\mu)(1+\alpha\sqrt{\log\Lambda})\|\omega\|_{L^2\cap\cC}\|(Q,\rho')\|_\cX\left\||D^0|^{\frac{3}{2}+\sigma}B\right\|_{\gS_2}.$$
The last term to control is 
$$H_3:=\frac{|D^0|^{\frac{3}{2}+\sigma}}{D^0_\Lambda-\mu+i\eta}\left(Y(\omega,\alpha)\frac{1}{D^0_\Lambda-\mu+i\eta}\right)^2\sum_{k=0}^2\left(Y(\omega,\alpha)\frac{1}{D^0_\Lambda-\mu+i\eta}\right)^k XB.$$
We use that $\|H_3\|_{\gS_2}\le\|H_{31}\|_{L^2\to L^2}\|H_{32}\|_{\gS_2}$, with
$$H_{31}:=\frac{|D^0|^{\frac{3}{2}+\sigma}}{D^0_\Lambda-\mu+i\eta}\left(Y(\omega,\alpha)\frac{1}{D^0_\Lambda-\mu+i\eta}\right)^2,$$
$$H_{32}:=\sum_{k=0}^2\left(Y(\omega,\alpha)\frac{1}{D^0_\Lambda-\mu+i\eta}\right)^k XB.$$
On the one hand we have 
$$\|H_{32}\|_{\gS_2}\le C(\mu)(1+\alpha\sqrt{\log\Lambda})^2\left(1+\|\omega\|_{L^2\cap\cC}^2\right)\|(Q,\rho')\|_\cX\left\||D^0|^{\frac{3}{2}+\sigma}B\right\|_{\gS_2},$$
while on the other hand 
\begin{multline*}
 H_{31}=\frac{|D^0|}{D^0_\Lambda-\mu+i\eta}Y(\omega,\alpha)\frac{|D^0|^{\frac{1}{2}+\sigma}}{D^0_\Lambda-\mu+i\eta}Y(\omega,\alpha)\frac{1}{D^0_\Lambda-\mu+i\eta}\\
 +\frac{|D^0|}{D^0_\Lambda-\mu+i\eta}\left[|D^0|^{\frac{1}{2}+\sigma},Y(\omega,\alpha)\right]|D^0|^{-t}\frac{|D^0|}{D^0_\Lambda-\mu+i\eta}Y(\omega,\alpha)|D^0|^{-\frac{1}{2}}\frac{|D^0|^{t-\frac{1}{2}}}{D^0_\Lambda-\mu+i\eta}\\
 +\frac{|D^0|}{D^0_\Lambda-\mu+i\eta}\left[|D^0|^{\frac{1}{2}+\sigma},Y(\omega,\alpha)\right]|D^0|^{-t}\frac{|D^0|}{D^0_\Lambda-\mu+i\eta}\left[|D^0|^{t-1},Y(\omega,\alpha)\right]|D^0|^{-s}\frac{|D^0|^{s}}{D^0_\Lambda-\mu+i\eta},
\end{multline*}
for some $1+\sigma<t<\frac{3}{2}$ and $t-\frac{1}{2}<s<1$, so that writing
\begin{multline*}
\frac{|D^0|}{D^0_\Lambda-\mu+i\eta}Y(\omega,\alpha)\frac{|D^0|^{\frac{1}{2}+\sigma}}{D^0_\Lambda-\mu+i\eta}Y(\omega,\alpha)\frac{1}{D^0_\Lambda-\mu+i\eta}\\
=\frac{|D^0|}{D^0_\Lambda-\mu+i\eta}Y(\omega,\alpha)|D^0|^{-\frac{1}{2}}\frac{|D^0|}{D^0_\Lambda-\mu+i\eta}Y(\omega,\alpha)|D^0|^{-\frac{1}{2}}\frac{|D^0|^{\frac{1}{2}+\sigma}}{D^0_\Lambda-\mu+i\eta}\\
+\frac{|D^0|}{D^0_\Lambda-\mu+i\eta}Y(\omega,\alpha)|D^0|^{-\frac{1}{2}}\frac{|D^0|}{D^0_\Lambda-\mu+i\eta}\left[|D^0|^\sigma,Y(\omega,\alpha)\right]|D^0|^{-u}\frac{|D^0|^u}{D^0_\Lambda-\mu+i\eta},
\end{multline*}
for some $\frac{1}{2}+\sigma<u<1$, one has
$$\|H_{31}\|_{L^2\to L^2}\le C(\mu)(1+\alpha\sqrt{\log\Lambda})^2\|\omega\|_{L^2\cap\cC}^2.$$
Hence,
$$\|H_3\|_{\gS_2}\le C(\mu)(1+\alpha\sqrt{\log\Lambda})^4\left(1+\|\omega\|_{L^2\cap\cC}^4\right)\|(Q,\rho')\|_\cX\left\||D^0|^{\frac{3}{2}+\sigma}B\right\|_{\gS_2},$$
which finishes the proof of \eqref{eq:D032B}. Let us thus pass to \eqref{eq:estA1D032}. The proof is the same as \eqref{eq:estD032+sigmaA1}, with a additional commutator between $|D^0|^{\frac{1}{2}}$ and $(\Dref(\omega)-\mu+i\eta)^{-1}$. Indeed, we have
\begin{multline*}
A_1(\eta)|D^0|^\frac{3}{2}=\left(\frac{1}{D^0_\Lambda-\mu+i\eta}Y(\omega,\alpha)\right)^5|D^0|^{\frac{1}{2}}\frac{1}{\Dref(\omega)-\mu+i\eta}|D^0|\\
+\left(\frac{1}{D^0_\Lambda-\mu+i\eta}Y(\omega,\alpha)\right)^5\left[\frac{1}{\Dref(\omega)-\mu+i\eta},|D^0|^{\frac{1}{2}}\right]|D^0|=T_1+T_2. 
\end{multline*}
We have by the same method as before
$$\|T_1\|_{\gS_2}\le\frac{C(\mu)(1+\alpha\sqrt{\log\Lambda})^5\|\omega\|_{L^2\cap\cC}^5\theta(\mu,\omega)}{E(\eta)^{3/2}}.$$
Now notice that for any $\zeta\in\R$,
$$\left[\frac{1}{\Dref(\omega)-\mu+i\eta},|D^0|^\zeta\right]=\frac{1}{\Dref(\omega)-\mu+i\eta}\left[|D^0|^\zeta,Y(\omega,\alpha)\right]\frac{1}{\Dref(\omega)-\mu+i\eta},$$
so that $T_2=T_{21}+T_{22}$ with
\begin{multline*}
 T_{21}=\left(\frac{1}{D^0_\Lambda-\mu+i\eta}Y(\omega,\alpha)\right)^5|D^0|^{t-1}\frac{1}{\Dref(\omega)-\mu+i\eta}|D^0|\times\\
\times|D^0|^{-t}\left[|D^0|^\frac{1}{2},Y(\omega,\alpha)\right]\frac{1}{\Dref(\omega)-\mu+i\eta}|D^0|,
\end{multline*}
\begin{multline*}
T_{22}=\left(\frac{1}{D^0_\Lambda-\mu+i\eta}Y(\omega,\alpha)\right)^5\frac{1}{\Dref(\omega)-\mu+i\eta}\left[|D^0|^{t-1},Y(\omega,\alpha)\right]\times\\
\times\frac{1}{\Dref(\omega)-\mu+i\eta}|D^0||D^0|^{-t}\left[|D^0|^\frac{1}{2},Y(\omega,\alpha)\right]\frac{1}{\Dref(\omega)-\mu+i\eta}|D^0|,
\end{multline*}
for some $1<t<\frac{3}{2}$. Hence by the methods introduced before, we deduce that
$$\|T_2\|_{\gS_2}\le\frac{C(\mu)(1+\alpha\sqrt{\log\Lambda})^7\|\omega\|_{L^2\cap\cC}^7\theta(\mu,\omega)^3}{E(\eta)^{3/2}},$$
leading to \eqref{eq:estA1D032}. The proof of \eqref{eq:BD032} is the same as the proof of \eqref{eq:D032B}. Next, we prove \eqref{eq:estD0A165}. It is because of this term that we have to expand $(\Dref(\omega)-\mu+i\eta)^{-1}$ to the 5th order, as we can see from the following  estimate:
\begin{eqnarray*}
 \||D^0|A_1(\eta)\|_{\gS_{6/5}} & \le & \left\|\frac{|D^0|}{D^0_\Lambda-\mu+i\eta}\right\|_{L^2\to L^2}\left\|Y(\omega,\alpha)\frac{1}{D^0_\Lambda-\mu+i\eta}\right\|_{\gS_6}^5\times\\
 & & \times\left\|(D^0_\Lambda-\mu+i\eta)\frac{1}{\Dref(\omega)-\mu+i\eta}\right\|_{L^2\to L^2}\\
  & \le & \frac{C(\mu)(1+\alpha\sqrt{\log\Lambda})^5\|\omega\|_{L^2\cap\cC}^5\theta(\mu,\omega)}{E(\eta)^{5/2}},
\end{eqnarray*}
which is exactly \eqref{eq:estD0A165}. Finally, to prove \eqref{eq:estD0B}, we simply notice that
\begin{eqnarray*}
\left\||D^0|A_0(\eta)XB\right\|_{\gS_{6/5}} & \le & \||D^0|A_0(\eta)\|\|X|D^0|^{-1}\|\||D^0|B\|_{\gS_{6/5}}\\
 & \le & C(\mu)(1+\alpha\sqrt{\log\Lambda})^4(1+\|\omega\|_{L^2\cap\cC}^4)\|(Q,\rho')\|_\cX\||D^0|B\|_{\gS_{6/5}}, 
\end{eqnarray*}
by inequalities \eqref{eq:maxmax} and \eqref{eq:XD0-1}. This ends the proof of Lemma \ref{lemma:estA0A1}.
\end{proof}

Let us now estimate $\||D^0|^{\frac{3}{2}}G_{n,x}\|_{\gS_2}$. First, assume that $x_0=1$. We have 
 \begin{equation}\label{eq:D032Gnx}
  \||D^0|^{\frac{3}{2}}G_{n,x}\|_{\gS_2}\le\int_\R\||D^0|^{\frac{3}{2}}A_1(\eta)\|_{\gS_2}\prod_{j=1}^n\|X A_{x_j}(\eta)\|_{L^2\to L^2}\d{\eta}.
 \end{equation}
 We deduce from \eqref{eq:maxmax} that for all $1\le j\le n$,
 \begin{eqnarray*}
  \|XA_{x_j}(\eta)\|_{L^2\to L^2} & \le & \|V_{\rho'}A_{x_j}(\eta)\|_{L^2\to L^2}+\|R_{Q}A_{x_j}(\eta)\|_{L^2\to L^2} \\
 & \le & \|V_{\rho'}\|_{L^2\to L^2} \|A_{x_j}(\eta)\|_{L^2\to L^2}+\|R_{Q}|D^0|^{-1}\|_{\gS_2}\||D^0|A_{x_j}(\eta)\|_{L^2\to L^2}\\
 & \le & C(\mu)(1+\alpha\sqrt{\log\Lambda})^5(1+\|\omega\|_{L^2\cap\cC}^5)\theta(\mu,\omega)\|(Q,\rho')\|_\cX.
 \end{eqnarray*}	
 Hence, for all $\eta\in\R$, 
 \begin{equation}\label{eq:prodXAj}
  \prod_{j=1}^n\|X A_{x_j}(\eta)\|_{L^2\to L^2}\le\left(C(\mu)(1+\alpha\sqrt{\log\Lambda})^5(1+\|\omega\|_{L^2\cap\cC}^5)\theta(\mu,\omega)\right)^n\|(Q,\rho')\|_\cX^n.
 \end{equation}
  Now by \eqref{eq:estD032+sigmaA1} applied with $\sigma=0$, we have
$$\int_\R \left\||D^0|^{\frac{3}{2}}A_1(\eta)\right\|_{\gS_2}\d{\eta} \le C(\mu)(1+\alpha\sqrt{\log\Lambda})^5\|\omega\|^5_{L^2\cap\cC}\theta(\mu,\omega).$$
 Hence, for all $n\ge1$ and for all $x\in\{0,1\}^{n+1}$ such that $x_0=1$, we have 
\begin{equation}\label{eq:x0=1}
\||D^0|^{\frac{3}{2}}G_{n,x}\|_{\gS_2}\le \left(C(\mu)(1+\alpha\sqrt{\log\Lambda})^5(1+\|\omega\|_{L^2\cap\cC}^5)\theta(\mu,\omega)\right)^{n+1}\|(Q,\rho')\|_\cX^n.
\end{equation}
 It remains to treat the case $x\neq(0,\ldots,0)$ with $x_0\neq1$. Let $n\ge1$, $x\in\{0,1\}^{n+1}$ such that $x\neq(0,\ldots,0)$ and $x_0=0$. Then there exists a smallest index $j_0\ge1$ such that $x_{j_0}=1$. We can thus write
$$G_{n,x}=\int_\R \left(A_0(\eta)X\right)^{j_0}A_1(\eta)\prod_{j=j_0+1}^n\left(XA_{x_j}(\eta)\right)\d{\eta}.$$
Using \eqref{eq:D032B} $j_0$ times with $\sigma=0$ we get
\begin{multline*}
\left\||D^0|^{\frac{3}{2}}G_{n,x}\right\|_{\gS_2}\le\left(C(\mu)(1+\alpha\sqrt{\log\Lambda})^4(1+\|\omega\|_{L^2\cap\cC}^4)\right)^{j_0}\|(Q,\rho')\|_\cX^{j_0}\times\\
\times\int_\R\||D^0|^{\frac{3}{2}}A_1(\eta)\prod_{j=j_0+1}^n\left(X A_{x_j}(\eta)\right)\|_{\gS_2}\d{\eta}. 
\end{multline*}
We estimate the right side using the estimate \eqref{eq:x0=1} in the case $x_0=1$, and we obtain 
 \begin{multline*}
\left\||D^0|^{\frac{3}{2}}G_{n,x}\right\|_{\gS_2}  \le  \left(C(\mu)(1+\alpha\sqrt{\log\Lambda})^4(1+\|\omega\|_{L^2\cap\cC}^4)\right)^{j_0}\|(Q,\rho')\|_\cX^{j_0}\times\\
\times\left(C(\mu)(1+\alpha\sqrt{\log\Lambda})^5(1+\|\omega\|_{L^2\cap\cC}^5)\theta(\mu,\omega)\right)^{n+1-j_0}\|(Q,\rho')\|_\cX^{n-j_0}.
 \end{multline*}
 Summing up, we have for all $n\ge1$ and all $x$,
 \begin{equation}
  \left\||D^0|^{\frac{3}{2}}G_{n,x}\right\|_{\gS_2}\le\left(C(\mu)(1+\alpha\sqrt{\log\Lambda})^5(1+\|\omega\|_{L^2\cap\cC}^5)\theta(\mu,\omega)\right)^{n+1}\|(Q,\rho')\|_\cX^n.
 \end{equation}
 We now estimate $\left\||D^0|G_{n,x}|D^0|^\frac{1}{2}\right\|_{\gS_2}$ and $\left\||D^0|^\frac{1}{2}G_{n,x}|D^0|\right\|_{\gS_2}$ for all $n\ge1$ and $x\neq(0,\ldots,0)$. We treat both cases at the same time by estimating
 $$\left\||D^0|^sG_{n,x}|D^0|^t\right\|_{\gS_2},$$
 for any $s,t\in[0,1]$. For $x\neq(0,\ldots,0)$, there exists a smallest index $0\le j_0\le n$ such that $x_{j_0}=1$. We distinguish three cases: $j_0=0$, $j_0=n$, and $1\le j_0\le n-1$. First, if $j_0=0$, then
 $$\left\||D^0|^sG_{n,x}|D^0|^t\right\|_{\gS_2}\le\int_\R \left\||D^0|^sA_1(\eta)X\right\|_{\gS_2}\prod_{j=1}^{n-1}\left\|A_{x_j}(\eta)X\right\|_{L^2\to L^2}\|A_{x_n}(\eta)|D^0|^t\|_{L^2\to L^2}\d{\eta}.$$  
Writing
\begin{multline}\label{eq:decomp-D0sA1X}
 |D^0|^sA_1(\eta)X=\frac{|D^0|^s}{D^0_\Lambda-\mu+i\eta}Y(\omega,\alpha)|D^0|^{-\frac{1}{2}}\frac{|D^0|}{D^0_\Lambda-\mu+i\eta}|D^0|^{-\frac{1}{2}}Y(\omega,\alpha)\times\\
\times\left(\frac{1}{D^0_\Lambda-\mu+i\eta}Y(\omega,\alpha)\right)^3\frac{1}{\Dref(\omega)-\mu+i\eta}|D^0||D^0|^{-1}X,
\end{multline}
 We obtain from the estimates above that
 $$\left\||D^0|^sA_1(\eta)X\right\|_{\gS_2}\le\frac{C(\mu)(1+\alpha\sqrt{\log\Lambda})^5\|\omega\|_{L^2\cap\cC}^5\theta(\mu,\omega)}{E(\eta)^{3/2}}\|(Q,\rho')\|_\cX,$$
 and one concludes that
 $$\left\||D^0|^sG_{n,x}|D^0|^t\right\|_{\gS_2}\le\left(C(\mu)(1+\alpha\sqrt{\log\Lambda})^5(1+\|\omega\|_{L^2\cap\cC}^5)\theta(\mu,\omega)\right)^{n+1}\|(Q,\rho')\|_\cX^n.$$
Then, if $j_0=n$, one has 
 $$\left\||D^0|^sG_{n,x}|D^0|^t\right\|_{\gS_2}\le\int_\R\||D^0|^sA_0(\eta)\|_{L^2\to L^2}\|XA_0(\eta)\|_{L^2\to L^2}^{n-1}\|XA_1(\eta)|D^0|^t\|_{\gS_2}\d{\eta}.$$
As before we have
$$\|XA_1(\eta)|D^0|^t\|_{\gS_2}\le\frac{C(\mu)(1+\alpha\sqrt{\log\Lambda})^5\|\omega\|_{L^2\cap\cC}^5\theta(\mu,\omega)}{E(\eta)^{3/2}}\|(Q,\rho')\|_\cX,$$
so that 
$$\left\||D^0|^sG_{n,x}|D^0|^t\right\|_{\gS_2}\le\left(C(\mu)(1+\alpha\sqrt{\log\Lambda})^5(1+\|\omega\|_{L^2\cap\cC}^5)\theta(\mu,\omega)\right)^{n+1}\|(Q,\rho')\|_\cX^n.$$
Finally, if $1\le j_0\le n-1$, then we write
\begin{multline*}
\left\||D^0|^sG_{n,x}|D^0|^t\right\|_{\gS_2}\le\int_\R\||D^0|^sA_0(\eta)\|_{L^2\to L^2}\|XA_0(\eta)\|_{L^2\to L^2}^{j_0-1}\|XA_1(\eta)X\|_{\gS_2}\times\\
\times\prod_{j=j_0+1}^{n-1}\|A_{x_j}(\eta)X\|_{L^2\to L^2}\|A_{x_n}(\eta)|D^0|^t\|_{L^2\to L^2}\d{\eta}, 
\end{multline*}
and using that
$$\|XA_1(\eta)X\|_{\gS_2}\le\frac{C(\mu)(1+\alpha\sqrt{\log\Lambda})^5\|\omega\|_{L^2\cap\cC}^5\theta(\mu,\omega)}{E(\eta)^{3/2}}\|(Q,\rho')\|_\cX^2,$$
obtained by the same decomposition as in \eqref{eq:decomp-D0sA1X}, we find
$$\left\||D^0|^sG_{n,x}|D^0|^t\right\|_{\gS_2}\le\left(C(\mu)(1+\alpha\sqrt{\log\Lambda})^5(1+\|\omega\|_{L^2\cap\cC}^5)\theta(\mu,\omega)\right)^{n+1}\|(Q,\rho')\|_\cX^n.$$
Hence, for all $s,t\in[0,1]$, for all $x\neq(0,\ldots,0)$ and all $n\ge1$ we have
$$\left\||D^0|^sG_{n,x}|D^0|^t\right\|_{\gS_2}\le\left(C(\mu)(1+\alpha\sqrt{\log\Lambda})^5(1+\|\omega\|_{L^2\cap\cC}^5)\theta(\mu,\omega)\right)^{n+1}\|(Q,\rho')\|_\cX^n.$$
It remains to estimate $\left\|G_{n,x}|D^0|^\frac{3}{2}\right\|_{\gS_2}$. The strategy is the same as the one for the estimation of $\left\||D^0|^\frac{3}{2}G_{n,x}\right\|_{\gS_2}$, except that we first consider the case $x_n=1$. For any $x\in\{0,1\}^{n+1}$ such that $x_n=1$, we have
\begin{equation}\label{eq:GnxD032}
 \left\|G_{n,x}|D^0|^\frac{3}{2}\right\|_{\gS_2}\le\int_\R\left(\prod_{j=0}^{n-1}\|A_{x_j}(\eta)X\|_{L^2\to L^2}\right)\left\|A_1(\eta)|D^0|^\frac{3}{2}\right\|_{\gS_2}\d{\eta}.
\end{equation}
In a similar fashion as the inequality \eqref{eq:prodXAj}, we have
$$\prod_{j=0}^{n-1}\|A_{x_j}(\eta)X\|_{L^2\to L^2}\le\left(C(\mu)(1+\alpha\sqrt{\log\Lambda})^5(1+\|\omega\|_{L^2\cap\cC}^5)\theta(\mu,\omega)^2\right)^n\|(Q,\rho')\|_\cX^n,$$
with now the factor $\theta(\mu,\omega)^2$ instead of $\theta(\mu,\omega)$ as in \eqref{eq:prodXAj} because of the term $\|(\Dref(\omega)-\mu+i\eta)^{-1}|D^0|\|_{L^2\to L^2}$. Using additionally \eqref{eq:estA1D032}, we get
$$\left\|G_{n,x}|D^0|^\frac{3}{2}\right\|_{\gS_2}\le\left(C(\mu)(1+\alpha\sqrt{\log\Lambda})^7\left(1+\|\omega\|_{L^2\cap\cC}^7\right)\theta(\mu,\omega)^3\right)^{n+1}\|(Q,\rho')\|_\cX^n,$$
for all $x\neq(0,\ldots,0)$ such that $x_n=1$. We treat the general case $x\neq(0,\ldots,0)$ in the same fashion as the estimate of $\||D^0|^{\frac{3}{2}}G_{n,x}\|_{\gS_2}$, using \eqref{eq:BD032}, and we get the same estimate as the $x_n=1$ case. Finally, by \eqref{eq:ineqnormQ} we conclude that for any $n\ge1$ and any $x\neq(0,\ldots,0)$,
\begin{equation}\label{eq:estGnx}
 \left\|G_{n,x}\right\|_\cQ\le\left(C(\mu)(1+\alpha\sqrt{\log\Lambda})^7\left(1+\|\omega\|_{L^2\cap\cC}^7\right)\theta(\mu,\omega)^3\right)^{n+1}\|(Q,\rho')\|_\cX^n.
\end{equation}
Combining this with \eqref{eq:estCMP}, we can write 
\begin{equation}\label{eq:est-Gn}
 \forall n\ge2,\quad \left\|G_n^\omega(Q,\rho')\right\|_\cQ\le\left(C(\mu)(1+\alpha\sqrt{\log\Lambda})^7\left(1+\|\omega\|_{L^2\cap\cC}^7\right)\theta(\mu,\omega)^3\right)^{n+1}\|(Q,\rho')\|_\cX^n,
\end{equation}
while 
\begin{equation}\label{eq:est-G1}
 \left\|G_1^\omega(Q,\rho')\right\|_\cQ\le\left(C(\mu)(1+\alpha\sqrt{\log\Lambda})^7\left(1+\|\omega\|_{L^2\cap\cC}^7\right)\theta(\mu,\omega)^3\right)^2\sqrt{\log\Lambda}\|(Q,\rho')\|_\cX.
\end{equation}
 We sum it up as
 \begin{equation}
  \|\Phi_1^\omega(Q,\rho')\|\le\alpha\sqrt{\log\Lambda}\Xi\|(Q,\rho')\|_\cX+\sum_{n\ge 2}\Big(\alpha\Xi\|(Q,\rho')\|_\cX\Big)^n,
 \end{equation}
 with 
 $$\Xi:=C(\mu)\left((1+\alpha\sqrt{\log\Lambda})^7\left(1+\|\omega\|_{L^2\cap\cC}^7\right)\theta(\mu,\omega)^3\right)^2.$$
 This is of course a rough estimate, but it has a simple form. We now turn to the estimates for the density terms, that is $\Phi_2^\omega$. We start by estimating $J_n^\omega=\rho_{G_n^\omega}$ for $n\ge 2$. As for $G_n^\omega$, we write
 $$J_n^\omega=\frac{(-1)^{n+1}}{2\pi}\sum_{x\in\{0,1\}^{n+1}}J_{n,x},$$
 with $J_{n,x}:=\rho_{G_{n,x}}$. It was proved in \cite[Proposition 10]{HaiLewSer-05a} that 
 \begin{equation}\label{eq:estrhoCMP}
  \forall n\ge2,\quad \|J_{n,(0,\ldots,0)}\|_{L^2\cap\cC}\le \left(C(\mu)(1+\alpha\sqrt{\log\Lambda})^4(1+\|\omega\|_{L^2\cap\cC}^4)\right)^{n+1}\|(Q,\rho')\|_\cX^n,
 \end{equation}
  To estimate $\|J_{n,x}\|_{L^2\cap\cC}$ for $x\neq(0,\ldots,0)$, we use the following lemma.
  \begin{lemma}
  For any operator $Q$ on $\gH_\Lambda$ such that $|D^0|^{\frac{3}{2}+\sigma}Q\in\gS_2$ for some $\sigma>0$ and such that $|D^0|Q\in\gS_{6/5}$, then $\rho_Q\in L^2\cap\cC$ and there exists a universal constant $C$ such that
   \begin{equation}\label{eq:ineqnormrho}
    \|\rho_Q\|_{L^2\cap\cC}\le C\left(\||D^0|^{\frac{3}{2}+\sigma}Q\|_{\gS_2}+\||D^0|Q\|_{\gS_{6/5}}\right).
   \end{equation}
 \end{lemma}
 \begin{proof}
  We prove it by duality. Let $V\in\gH_\Lambda$ be a smooth function. Then, for any $Q$ as in the statement of the lemma, we have $QV\in\gS_1$ with two different estimates on $\|QV\|_{\gS_1}$. First, we have
  $$\|QV\|_{\gS_1}\le\||D^0|^{\frac{3}{2}+\sigma}Q\|_{\gS_2}\||D^0|^{-\frac{3}{2}-\sigma}V\|_{\gS_2}\le C\||D^0|^{\frac{3}{2}+\sigma}Q\|_{\gS_2}\|V\|_{L^2},$$
  by the Kato-Seiler-Simon inequality. This proves that $\rho_Q\in L^2$ and that 
  $$\|\rho_Q\|_{L^2}\le C\||D^0|^{\frac{3}{2}+\sigma}Q\|_{\gS_2}.$$
  Secondly, we have 
  \begin{multline*}
\|QV\|_{\gS_1}\le\||D^0|Q\|_{\gS_{6/5}}\||D^0|^{-1}V\|_{\gS_6}\\
\le C\||D^0|Q\|_{\gS_{6/5}}\|V\|_{L^6}\le C\||D^0|Q\|_{\gS_{6/5}}\|\nabla V\|_{L^2},   
  \end{multline*}
  again by the Kato-Seiler-Simon inequality and the Sobolev inequality. This proves that $\rho_Q\in\cC$ and that
  $$\|\rho_Q\|_\cC\le C\||D^0|Q\|_{\gS_{6/5}}.$$
 \end{proof}

 Let us first treat the case $x_0=1$. Let $n\ge1$ and $x\neq(0,\ldots,0)$ with $x_0=1$. From inequalities \eqref{eq:D032Gnx}, \eqref{eq:estD032+sigmaA1} with any $0<\sigma<\frac{1}{2}$, and $\eqref{eq:estD0A165}$, we find
 $$\left\||D^0|^{\frac{3}{2}+\sigma}G_{n,x}\right\|_{\gS_2}\le\left(C(\mu)(1+\alpha\sqrt{\log\Lambda})^5(1+\|\omega\|_{L^2\cap\cC}^5)\theta(\mu,\omega)\right)^{n+1}\|(Q,\rho')\|_\cX^n,$$
 $$\left\||D^0|G_{n,x}\right\|_{\gS_{6/5}}\le\left(C(\mu)(1+\alpha\sqrt{\log\Lambda})^5(1+\|\omega\|_{L^2\cap\cC}^5)\theta(\mu,\omega)\right)^{n+1}\|(Q,\rho')\|_\cX^n.$$
 To deduce the general $x$ case from the $x_0=1$ case we use as usual \eqref{eq:D032B} and \eqref{eq:estD0B}. We thus have for all $n\ge2$,
 $$\|J_n^\omega(Q,\rho')\|_{L^2\cap\cC}\le \left(C(\mu)(1+\alpha\sqrt{\log\Lambda})^5(1+\|\omega\|_{L^2\cap\cC}^5)\theta(\mu,\omega)\right)^{n+1}\|(Q,\rho')\|_\cX^n.$$
 It remains to estimate $\|J_1^\omega(Q,\rho')\|_{L^2\cap\cC}$. We have proved that for $x\neq(0,0)$,
 $$\|J_{1,x}(Q,\rho')\|_{L^2\cap\cC}\le C(\mu)\left((1+\alpha\sqrt{\log\Lambda})^5(1+\|\omega\|_{L^2\cap\cC}^5)\theta(\mu,\omega)\right)^2\|(Q,\rho')\|_\cX.$$
 Let us write 
 $$A_0(\eta)=\frac{1}{D^0_\Lambda-\mu+i\eta}+A_{01}(\eta),$$
 so that 
 \begin{multline*}
 J^\omega_1(Q,\rho')=-\rho\left[\frac{1}{2\pi}\int_\R\frac{1}{D^0_\Lambda-\mu+i\eta}R_{Q}\frac{1}{D^0_\Lambda-\mu+i\eta}\d{\eta}\right]\\
 +\rho\left[\frac{1}{2\pi}\int_\R\frac{1}{D^0_\Lambda-\mu+i\eta}XA_{01}(\eta)\d{\eta}\right]+\rho\left[\frac{1}{2\pi}\int_\R A_{01}(\eta)X\frac{1}{D^0_\Lambda-\mu+i\eta}\d{\eta}\right]\\
 +\rho\left[\frac{1}{2\pi}\int_\R A_{01}(\eta)XA_{01}(\eta)\d{\eta}\right]+\frac{1}{2\pi}\sum_{\substack{x\in\{0,1\}^2 \\ x\neq(0,0)}}J_{1,x}.
 \end{multline*}
 By \cite[Proposition 10]{HaiLewSer-05a}, we deduce that 
 $$\|J_1^\omega(Q,\rho')\|_{L^2\cap\cC}\le\left(C(\mu)(1+\alpha\sqrt{\log\Lambda})^5(1+\|\omega\|_{L^2\cap\cC}^5)\theta(\mu,\omega)\right)^2\sqrt{\log\Lambda}\|(Q,\rho')\|_\cX.$$
 This ends the proof of Proposition \ref{prop:main}.
 \end{proof}

\begin{proposition}\label{prop:est-Qref-rhoref}
 Let $\nu\in L^2(\R^3)\cap\cC$ and $\mu\in(-1,1)$. Let $\epsilon\ge0$ such that for all $s\in[1-\epsilon,1+\epsilon]$, $\mu\notin\sigma_\gH(D^{s\nu})$. Assume furthermore that 
 \begin{equation}
  \int_{\R^3}\log(2+|k|)^2\left|\hat{\nu}(k)\right|^2\d{k}<+\ii.
 \end{equation}
There exists $\alpha_1=\alpha_1(\mu,\nu,\epsilon)>0$, $\Lambda_1=\Lambda_1(\mu,\nu,\epsilon)\ge1$, and $L_1=L_1(\mu,\nu,\epsilon)>0$ such that for any $0\le\alpha\le\alpha_1$, for any $\Lambda\ge\Lambda_1$ satisfying $\alpha\log\Lambda\le L_1$, and for any $\kappa\in[1-\epsilon,1+\epsilon]$ we have
 \begin{equation}\label{eq:est-Qtilde-rhotilde}
\|(\Qref,\rhoref)\|_\cX\le C(\mu,\nu,\epsilon).
 \end{equation}
\end{proposition}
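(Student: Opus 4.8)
The plan is to compute $\Pref-\Pm=\chi_{(-\ii,\mu]}(\Dref)-\chi_{(-\ii,\mu]}(D^0_\Lambda)$ by the Cauchy formula and to perform a resolvent expansion, to a \emph{fixed} finite order, in the perturbation
$$W:=\Dref-D^0_\Lambda=-\Pi_\Lambda V_{\kappa\nur}\Pi_\Lambda+\alpha\Pi_\Lambda R_{1,0}(\kappa\nur)\Pi_\Lambda,$$
cancelling the divergent first-order terms against $G_{1,0}(\kappa\nur)$, respectively $B_\Lambda\kappa\nur$. First I would record the uniform bounds that make everything run: since $B_\Lambda\ge0$ for $\Lambda$ large and $\alpha B_\Lambda\le C\alpha\log\Lambda\le CL_1$, the factor $1+\alpha B_\Lambda$ stays in a fixed compact subset of $(0,+\ii)$, so $\|\kappa\nur\|_{L^2\cap\cC}\le(1+\epsilon)\|\nu\|_{L^2\cap\cC}$; by Lemma \ref{lemma:gap-Dref} applied with $\omega=\kappa\nur$ we get $d(\mu,\sigma(\Dref))\ge\xi_0>0$, whence $\theta(\mu,\kappa\nur)\le C(\mu,\nu,\epsilon)$ and the condition $8\alpha\|\kappa\nur\|_\cC<1-|\mu|$ of Lemmas \ref{lemma:D0Dref-1L2L2}--\ref{lemma:gap-DQ} holds for $\alpha$ small. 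Using also $\chi_{(-\ii,\mu]}(D^0_\Lambda)=\Pm$ (as $\mu\in(-1,1)$), the Cauchy formula and four iterations of the resolvent identity give
$$\Pref-\Pm=\sum_{j=1}^{4}(-1)^{j+1}\mathcal{G}_j+\mathcal{R}_5,\qquad \mathcal{G}_j:=\frac{1}{2\pi}\int_\R\Bigl(\tfrac{1}{D^0_\Lambda-\mu+i\eta}W\Bigr)^j\tfrac{1}{D^0_\Lambda-\mu+i\eta}\d{\eta},$$
with $\mathcal{R}_5:=\frac{1}{2\pi}\int_\R\bigl(\tfrac{1}{D^0_\Lambda-\mu+i\eta}W\bigr)^5\tfrac{1}{\Dref-\mu+i\eta}\d{\eta}$.

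Two structural facts drive the argument. First, since $\mu$ lies in the spectral gap $(-1,1)$ of $D^0_\Lambda$, the symmetric double-resolvent integral $\frac{1}{2\pi}\int_\R\tfrac{1}{D^0_\Lambda-\mu+i\eta}A\tfrac{1}{D^0_\Lambda-\mu+i\eta}\d{\eta}$ only sees the decomposition of $D^0_\Lambda$ relative to the point $\mu$, namely the $\mu$-independent pair $\Pp,\Pm$; hence it equals its value at $\mu=0$ and, for $A=\Pi_\Lambda V_{\kappa\nur}\Pi_\Lambda$, it is exactly $G_{1,0}(\kappa\nur)$ of \eqref{eq:def-G10}. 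Thus $\mathcal{G}_1=-G_{1,0}(\kappa\nur)+\alpha T$ with $T:=\frac{1}{2\pi}\int_\R\tfrac{1}{D^0_\Lambda-\mu+i\eta}\Pi_\Lambda R_{1,0}(\kappa\nur)\Pi_\Lambda\tfrac{1}{D^0_\Lambda-\mu+i\eta}\d{\eta}$, so that
$$\Qref=\Pref-\Pm+G_{1,0}(\kappa\nur)=\alpha T+\sum_{j=2}^{4}(-1)^{j+1}\mathcal{G}_j+\mathcal{R}_5.$$
Second, writing $W=\Pi_\Lambda(V_{\rho^\sharp}-R_{Q^\sharp})\Pi_\Lambda$ with $\rho^\sharp:=-\kappa\nur$ and $Q^\sharp:=-\alpha G_{1,0}(\kappa\nur)$, Lemma \ref{lemma:prop-G10} gives $\|(Q^\sharp,\rho^\sharp)\|_\cX=\alpha\|G_{1,0}(\kappa\nur)\|_\cQ+\|\kappa\nur\|_{L^2\cap\cC}\le\alpha C\sqrt{\log\Lambda}\,\|\nu\|_{L^2\cap\cC}+(1+\epsilon)\|\nu\|_{L^2\cap\cC}\le C(\mu,\nu,\epsilon)$ --- the (possibly large) datum $\kappa\nur$ is \emph{not} divided by $\alpha$. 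Moreover, with $\omega=\kappa\nur$ one has $Y(\omega,\alpha)=-W$, so $\mathcal{R}_5=-\frac{1}{2\pi}\int_\R A_1(\eta)\d{\eta}$ with $A_1(\eta)$ as in \eqref{eq:def-A1eta}. Now bound each term in $\cQ$: by \eqref{eq:ineq-RQD0-1S2}, \eqref{eq:ineq-G10Q}, \eqref{eq:ineqnormQ} and Lemma \ref{lemm:comm}, $\|\alpha T\|_\cQ\le C(\mu)\,\alpha\log\Lambda\,\|\nu\|_{L^2\cap\cC}\le CL_1\|\nu\|_{L^2\cap\cC}$; for $2\le j\le4$, $\mathcal{G}_j$ is, up to a harmless constant, the term $G_{j,(0,\dots,0)}$ of $G_j^{0}(Q^\sharp,\rho^\sharp)$, so by \eqref{eq:estCMP} --- in which the only $\sqrt{\log\Lambda}$ factor sits at order one and has been removed --- $\|\mathcal{G}_j\|_\cQ\le C(\mu,\nu,\epsilon)$; finally, applying \eqref{eq:ineqnormQ} to $\mathcal{R}_5$ together with \eqref{eq:estD032+sigmaA1} (at $\sigma=0$), \eqref{eq:estA1D032}, and the mixed-weight bounds $\||D^0|^sA_1(\eta)|D^0|^t\|_{\gS_2}\lesssim E(\eta)^{-3/2}$ for $s,t\in[0,1]$ obtained exactly as in the proof of Proposition \ref{prop:main}, using $\int_\R E(\eta)^{-3/2}\d{\eta}<+\ii$ and the uniform bounds on $\|\kappa\nur\|_{L^2\cap\cC}$ and $\theta(\mu,\kappa\nur)$, gives $\|\mathcal{R}_5\|_\cQ\le C(\mu,\nu,\epsilon)$. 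Hence $\|\Qref\|_\cQ\le C(\mu,\nu,\epsilon)$.

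For $\rhoref=\rho_{\Pref-\Pm}-B_\Lambda\kappa\nur$ I would take densities in the same expansion. Using $\cF[\rho[G_{1,0}(\rho')]](k)=-B_\Lambda(k)\widehat{\rho'}(k)$ and $B_\Lambda=B_\Lambda(0)$, one has $\cF[\rho[\mathcal{G}_1]](k)=B_\Lambda(k)\widehat{\kappa\nur}(k)+\alpha\,\cF[\rho[T]](k)$, hence
$$\cF[\rhoref](k)=\bigl(B_\Lambda(k)-B_\Lambda(0)\bigr)\widehat{\kappa\nur}(k)+\alpha\,\cF[\rho[T]](k)+\sum_{j=2}^{4}(-1)^{j+1}\cF[\rho[\mathcal{G}_j]](k)+\cF[\rho[\mathcal{R}_5]](k).$$
The last three groups of terms are bounded in $L^2\cap\cC$ uniformly in $\Lambda$: $\alpha\rho[T]$ and $\rho[\mathcal{G}_j]$ ($2\le j\le4$) by \eqref{eq:estrhoCMP} (with datum $(Q^\sharp,\rho^\sharp)$, reference $\omega=0$) plus the smallness of $\alpha\log\Lambda$, and $\rho[\mathcal{R}_5]$ by \eqref{eq:ineqnormrho} applied to $\mathcal{R}_5=-\frac{1}{2\pi}\int A_1(\eta)\d{\eta}$ together with \eqref{eq:estD032+sigmaA1} (for some $0<\sigma<\tfrac12$) and \eqref{eq:estD0A165} --- which is exactly why the expansion is pushed to the fifth resolvent. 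The first term is the one requiring the hypothesis on $\nu$: from the explicit formula for $B_\Lambda(k)$ one checks that $\sup_{\Lambda\ge1}\bigl|B_\Lambda(k)-B_\Lambda(0)\bigr|\le C\log(2+|k|)$ (the logarithmic divergences in $\Lambda$ cancel in the difference, and $B_\Lambda(k)-B_\Lambda(0)=O(|k|^2)$ near $k=0$), so that
$$\Bigl\|\bigl(B_\Lambda(\cdot)-B_\Lambda(0)\bigr)\widehat{\kappa\nur}\Bigr\|_{L^2\cap\cC}^2\le C\int_{\R^3}\bigl(1+|k|^{-2}\bigr)\log(2+|k|)^2|\hat{\nu}(k)|^2\d{k}\le C(\nu),$$
which is finite because $\nu\in\cC$ (near $k=0$) and $\int_{\R^3}\log(2+|k|)^2|\hat{\nu}(k)|^2\d{k}<+\ii$ (for $|k|$ large, where $|k|^{-2}\le1$). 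Combining the two estimates gives \eqref{eq:est-Qtilde-rhotilde}.

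The main obstacle is the uniform-in-$\Lambda$ control of the terms of order $\ge2$ in $W$ and of the remainder $\mathcal{R}_5$: since $V_{\kappa\nur}$ is not small one cannot sum the full Neumann series, and one must instead check term by term that only the first-order term carries a $\log\Lambda$ divergence --- the content of \eqref{eq:estCMP}--\eqref{eq:estrhoCMP} from \cite{HaiLewSer-05a} for the vacuum-polarization part --- and then carry the $|D^0|$-weights defining $\|\cdot\|_\cQ$ and $\|\cdot\|_{L^2\cap\cC}$ across the five resolvents and potentials in $\mathcal{R}_5$, which is the (lengthy but by now routine) commutator computation already performed in Lemma \ref{lemma:estA0A1}. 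A secondary, easier point is the cancellation of the $\log\Lambda$ divergence in $B_\Lambda(k)-B_\Lambda(0)$, which is precisely what makes $\int\log(2+|k|)^2|\hat\nu(k)|^2\,\mathrm{d}k<\ii$ the natural hypothesis.
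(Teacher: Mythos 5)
Your argument is correct in its essentials but organizes the computation differently from the paper. The paper first splits
$\Qref=I+II$ with $I=\projneg{D^{\kappa\nur}_\Lambda}-\Pm+G_{1,0}(\kappa\nur)$ carrying only the $V_{\kappa\nur}$ perturbation, and $II=\projneg{\Dref}-\projneg{D^{\kappa\nur}_\Lambda}$ carrying the extra exchange perturbation $\alpha R_{1,0}(\kappa\nur)$; each of $I$ and $II$ is then treated by its own fifth-order resolvent expansion. You instead expand $\Pref-\Pm$ once, in the combined perturbation $W=-\Pi_\Lambda V_{\kappa\nur}\Pi_\Lambda+\alpha\Pi_\Lambda R_{1,0}(\kappa\nur)\Pi_\Lambda$, and use the $\mu$-invariance of the symmetric double-resolvent integral to identify the first-order piece as $-G_{1,0}(\kappa\nur)+\alpha T$. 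Both routes rest on the same machinery (the $A_0/A_1$ decomposition and Lemma~\ref{lemma:estA0A1}, the all-$D^0$ bounds \eqref{eq:estCMP}--\eqref{eq:estrhoCMP} inherited from \cite{HaiLewSer-05a}, and the estimate on $U_\Lambda$ from \cite{GraLewSer-11}); your version is slightly more streamlined, since it handles the $V$- and $R$-contributions in one pass, at the cost of carrying mixed $V$/$R$ cross terms through the expansion --- which you correctly absorb into $\|(Q^\sharp,\rho^\sharp)\|_\cX$.

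One step that needs to be made precise: you invoke Lemma~\ref{lemma:gap-Dref} with $\omega=\kappa\nur=(\kappa/(1+\alpha B_\Lambda))\nu$, but that lemma only guarantees a uniform gap for $\Dref(s\nu)$ with $s\in[1-\epsilon,1+\epsilon]$. Since $\alpha B_\Lambda>0$ for $\Lambda$ large, $\kappa/(1+\alpha B_\Lambda)$ may fall strictly below $1-\epsilon$, and then the lemma does not directly apply. Saying that $1+\alpha B_\Lambda$ ``stays in a fixed compact subset of $(0,+\infty)$'' is not enough; what is needed is that $1+\alpha B_\Lambda$ stays close to $1$. The paper's fix is to enlarge the interval: by continuity of the eigenvalues of $D^{s\nu}$ one may pick $\epsilon'>\epsilon$ with the spectral assumption still valid on $[1-\epsilon',1+\epsilon']$, and then choose $L_1$ small enough that $\kappa/(1+\alpha B_\Lambda)\in[1-\epsilon',1+\epsilon']$ for all $\kappa\in[1-\epsilon,1+\epsilon]$. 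You should add this. A cosmetic point: $\|\alpha T\|_\cQ$ is controlled as $\cO(\alpha\sqrt{\log\Lambda})$ via $\|G_{0,1}(Q)\|_\cQ\lesssim\|Q\|_\cQ$ and \eqref{eq:ineq-G10Q}, not $\cO(\alpha\log\Lambda)$; either way the bound is $\cO(1)$ under $\alpha\log\Lambda\le L_1$. Finally, the asserted $O(|k|^2)$ behaviour of $B_\Lambda(k)-B_\Lambda(0)$ near $k=0$ is neither used nor needed --- the Coulomb norm of $\cU_\Lambda\kappa\nur$ near the origin only requires the boundedness of $U_\Lambda$ there together with $\nu\in\cC$, which is exactly what the estimate \eqref{eq:est-ULambda} from \cite{GraLewSer-11} provides.
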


\begin{proof}
 We split $\Qref$ into two terms, $\Qref=I+II$, where
 \begin{eqnarray*}
  I & = & \projneg{D^{\kappa\nur}_\Lambda}-P^0_{\Lambda,-}+G_{1,0}(\kappa\nur),\\
 II & = & \projneg{\Dref}-\projneg{D^{\kappa\nur}_\Lambda}.
\end{eqnarray*}
 Since $\nu\in L^2(\R^3)\cap\cC$, we have $V_\nu\in L^\ii$ and the eigenvalues of $D^{s\nu}$ are continuous with respect to $s$. Thus, there exists $\epsilon'=\epsilon'(\mu,\nu,\epsilon)>\epsilon$ such that $\mu\notin\sigma_\gH(D^{s\nu})$ for all $s\in[1-\epsilon',1+\epsilon']$. Hence, there exists $L_0=L_0(\mu,\nu,\epsilon)>0$ such that for all $\kappa\in[1-\epsilon,1+\epsilon]$, for all $\alpha\ge0$ and for all $\Lambda\ge1$ satisfying $\alpha\log\Lambda\le L_0$ we have
$$\frac{\kappa}{1+\alpha B_\Lambda}\in[1-\epsilon',1+\epsilon'].$$
Using Lemma \ref{lemma:gap-Dref}, this implies $d(\mu,\sigma_{\gH_\Lambda}(D^{\kappa\nur}))\ge\xi_0(\mu,\nu,\epsilon,L_0(\mu,\nu,\epsilon))>0$, for all $\kappa\in[1-\epsilon,1+\epsilon]$, for all $\Lambda\ge\Lambda_0(\mu,\nu,\epsilon,L_0(\mu,\nu,\epsilon))$, and for all $\alpha\ge0$ such that $\alpha\log\Lambda\le L_0$. In particular, $\mu\notin\sigma_{\gH_\Lambda}(D^{\kappa\nur})$ and we can write
$$I=\frac{1}{2\pi}\int_\R\frac{1}{D^0_\Lambda-\mu+i\eta}\Pi_\Lambda V_{\kappa\nur}\Pi_\Lambda\frac{1}{D^0_\Lambda-\mu+i\eta}\Pi_\Lambda V_{\kappa\nur}\Pi_\Lambda\frac{1}{D^{\kappa\nur}_\Lambda-\mu+i\eta}.$$
As in the proof of Proposition \ref{prop:main}, we estimate this term by decomposing it into $I=I_1+I_2$, with
\begin{multline*}
I_1=\frac{1}{2\pi}\int_\R\left(\frac{1}{D^0_\Lambda-\mu+i\eta}\Pi_\Lambda V_{\kappa\nur}\Pi_\Lambda\right)^2\times\\
\times\sum_{j=0}^2\frac{1}{D^0_\Lambda-\mu+i\eta}\left(\Pi_\Lambda V_{\kappa\nur}\Pi_\Lambda\frac{1}{D^0_\Lambda-\mu+i\eta}\right)^j\d{\eta}, 
\end{multline*}
$$I_2=\frac{1}{2\pi}\int_\R\left(\frac{1}{D^0_\Lambda-\mu+i\eta}\Pi_\Lambda V_{\kappa\nur}\Pi_\Lambda\right)^5\frac{1}{D^{\kappa\nur}_\Lambda-\mu+i\eta}\d{\eta}.$$
By \cite[Lemma 13, Lemma 15]{HaiLewSer-05a}, we have
$$\|I_1\|_\cQ\le C(1+\|\kappa\nur\|_{L^2\cap\cC}^4).$$
Notice that
$$I_2=\frac{1}{2\pi}\int_\R A_1(\eta)_{|\alpha=0, \omega=\kappa\nur}\d{\eta},$$
where $A_1(\eta)$ was defined in \eqref{eq:def-A1eta}. By \eqref{eq:ineqnormQ}, \eqref{eq:estD032+sigmaA1}, \eqref{eq:estA1D032}, and the trivial bound
\begin{multline*}
 2\||D^0|A_1(\eta)_{|\alpha=0, \omega=\kappa\nur}|D^0|^{1/2}\|_{\gS_2}+2\||D^0|^{1/2}A_1(\eta)_{|\alpha=0, \omega=\kappa\nur}|D^0|\|_{\gS_2} \\
 \le \frac{C(\mu)\|\kappa\nur\|_{L^2\cap\cC}^5\theta(\mu,\kappa\nur)_{|\alpha=0}}{E(\eta)^{3/2}},
\end{multline*}
we infer that
$$\|I_2\|_\cQ\le C(\mu)\|\kappa\nur\|_{L^2\cap\cC}^7\theta(\mu,\nu)_{|\alpha=0}^3.$$
Hence, for any $\kappa\in[1-\epsilon,1+\epsilon]$,
$$\|I\|_\cQ\le C(\mu)(1+(1+\epsilon)^7\|\nu\|_{L^2\cap\cC}^7)\theta(\mu,\kappa\nur)_{|\alpha=0}^3.$$
Since $d(\mu,\sigma_{\gH_\Lambda}(D^{\kappa\nur}))\ge\xi_0$, we have $\theta(\mu,\kappa\nur)_{|\alpha=0}\le C(\mu,\nu,\epsilon)$, uniformly in $\alpha\ge0$ (the dependence on $\alpha$ only being in $\nur$, which is not obvious from the notation), $\Lambda\ge\Lambda_0$ satisfying $\alpha\log\Lambda\le L_0$. This leads to
$$\|I\|_\cQ\le C(\mu,\nu,\epsilon).$$
By \eqref{eq:ineq-G10Q}, this implies in particular that for $\Lambda\ge3$ so that $\log\Lambda\ge1$, 
\begin{equation}\label{eq:est-projlibre}
 \|\projneg{D^{\kappa\nur}_\Lambda}-P^0_{\Lambda,-}\|_\cQ\le C(\mu,\nu,\epsilon)\sqrt{\log\Lambda}.
\end{equation}
Using again Lemma \ref{lemma:gap-Dref}, we know that for all $0\le\alpha\le\alpha_0$ and for all $\Lambda\ge\Lambda_0$ satisfying $\alpha\log\Lambda\le L_0$, we have $d(\mu,\sigma_{\gH_\Lambda}(\Dref))\ge\xi_0$ for all $\kappa\in[1-\epsilon,1+\epsilon]$. Hence, we can expand $II$ by Kato's formula, as for $I$. To estimate this expansion by the same method as in the proof of Proposition \ref{prop:main}, we need the additional condition
$$8\alpha\|\kappa\nur\|_\cC<1-|\mu|,$$
which ensures the finitess of $\theta(\mu,\kappa\nur)$, as seen by Lemma \ref{lemma:D0Dref-1L2L2}. This condition is satisfied for all $0\le\alpha\le\alpha_1$, for all $\Lambda\ge\Lambda_0$ satisfying $\alpha\log\Lambda\le L_0$, and for all $\kappa\in[1-\epsilon,1+\epsilon]$, for a certain $0\le\alpha_1=\alpha_1(\mu,\nu,\epsilon)\le\alpha_0(\mu,\nu,\epsilon)$. Using the same method used to estimate $\|I\|_\cQ$ and the proof of Proposition \ref{prop:main}, we find
\begin{equation}\label{eq:est-Qref-Qlin}
 \|II\|_\cQ\le C(\mu)\alpha\sqrt{\log\Lambda}(1+\alpha\sqrt{\log\Lambda})^7(1+(1+\epsilon)^7\|\nu\|_{L^2\cap\cC}^7)\theta(\mu,\kappa\nur)^3.
\end{equation}
We still have $\theta(\mu,\kappa\nur)\le C(\mu,\nu,\epsilon)$ for all $\kappa\in[1-\epsilon,1+\epsilon]$, and we also have $\alpha\sqrt{\log\Lambda}\le\alpha\log\Lambda\le L_0$ if $\Lambda\ge3$. Hence,
$$\|II\|_\cQ\le C(\mu,\nu,\epsilon)$$
for all $0\le\alpha\le\alpha_1$, for all $\Lambda\ge\max(3,\Lambda_0)$ with $\alpha\log\Lambda\le L_0$ and for all $\kappa\in[1-\epsilon,1+\epsilon]$. The estimate on $\rhoref$ is more subtle. We decompose it as
 $$\rhoref=-\cU_\Lambda\kappa\nur+\rho\left[I\right]+\rho\left[II\right].$$
  The operator $\cU_\Lambda$ is the multiplication operator in Fourier space by the function $U_\Lambda(|k|):=B_\Lambda-B_\Lambda(k)$. It has been defined and studied in \cite{GraLewSer-11}. In fact, combining \cite[Equation (3.3)]{GraLewSer-11} and \cite[Equation (A.3)]{GraLewSer-11}, one gets the estimate 
 \begin{equation}\label{eq:est-ULambda}
\forall\Lambda\ge1,\,\forall r\ge0,\qquad 0\le U_\Lambda(r)\le\kappa_1\left(1+\frac{1}{3\pi}\log(1+r^2)\right),  
 \end{equation}
with $\kappa_1=258/\pi$. We deduce
$$\left\|\cU_\Lambda \kappa\nur\right\|_\cC\le C(1+\epsilon)\|\nu\|_{L^2\cap\cC},$$
$$\left\|\cU_\Lambda \kappa\nur\right\|_{L^2}^2\le C(1+\epsilon)^2\int_{\R^3}\log(2+|k|)^2|\hat{\nu}(k)|^2\d{k}.$$
Next, using the same ideas as before, one gets 
$$\|\rho[I]\|_{L^2\cap\cC}\le C(\mu)(1+(1+\epsilon)^7\|\nu\|_{L^2\cap\cC}^7)\theta(\mu,\kappa\nur)_{|\alpha=0}^3\le C(\mu,\nu,\epsilon),$$
$$\|\rho[II]\|_{L^2\cap\cC}\le  C(\mu)\alpha\log\Lambda(1+\alpha\sqrt{\log\Lambda})^7(1+(1+\epsilon)^7\|\nu\|_{L^2\cap\cC}^7)\theta(\mu,\kappa\nur)^3\le C(\mu,\nu,\epsilon),$$
which ends the proof.
\end{proof}

\subsubsection{End of the proof of Theorem 1}

We apply the fixed point theorem of Banach-Picard to the function $\Phi$ defined earlier by 
$$\Phi(Q,\rho')=(\Phi_1^{\kappa\nur}(Q+\Qref,\rho'),\Phi_2^{\kappa\nur}(Q+\Qref,\rho')),$$
for any $(Q,\rho')\in B_\cX(R)\subset\cX$, where
$$B_\cX(R)=\{(Q,\rho')\in\cX,\quad\|(Q+\Qref,\rho')\|_\cX\le R\},$$
for a $R>0$ to be determined later on. For $\Phi(Q,\rho')$ to be correctly defined, we have to check the assumptions of Proposition \ref{prop:main}. First, as in the proof of Proposition \ref{prop:est-Qref-rhoref}, using Lemma \ref{lemma:gap-Dref}, there exists $L_0=L_0(\mu,\nu,\epsilon)>0$ such that for all $0\le\alpha\le\alpha_0(\mu,\nu,\epsilon,L_0)$, for all $\Lambda\ge\Lambda_0(\mu,\nu,\epsilon,L_0)$ with $\alpha\log\Lambda\le L_0$, we have $\mu\notin\sigma_{\gH_\Lambda}(\Dref)$ for all $\kappa\in[1-\epsilon,1+\epsilon]$. Next, we assume additionally that
$$8\alpha\|\kappa\nur\|_\cC<1-|\mu|,\qquad \alpha b \|(Q+\Qref,\rho')\|_\cX<1.$$
This condition is verified if we assume furthermore that $0\le\alpha\le\alpha_2$ for a certain $\alpha_2=\alpha_2(\mu,\nu,\epsilon,R)>0$. Using Lemma \ref{lemma:gap-DQ}, it implies that $\mu\notin\sigma_{\gH_\Lambda}(D_Q)$, where we recall that 
$$D_Q=\projneg{\Dref+\alpha\Pi_\Lambda(V_{\rho'}-R_{Q+\Qref})\Pi_\Lambda}.$$
We can now use Proposition \ref{prop:main} to infer that for all $(Q,\rho')\in B_\cX(R)$,
\begin{equation}\label{eq:est-phi'}
\|\Phi'(Q,\rho')\|_{\cX\to\cX}\le \alpha\sqrt{\log\Lambda}\Xi+\alpha \sum_{n\ge0}n(\alpha\Xi R)^{n-1}. 
\end{equation}
Assuming $\Lambda\ge3$, we have $\alpha\sqrt{\log\Lambda}\le\alpha\log\Lambda\le L_0$. Moreover, as in the proof of Proposition \ref{prop:est-Qref-rhoref}, we can show, using Lemma \ref{lemma:gap-Dref}, that there exists $C=C(\mu,\nu,\epsilon)>0$ such that for all $0\le\alpha\le\alpha_0$, for all $\Lambda\ge\Lambda_0$ with $\alpha\log\Lambda\le L_0$, and for all $\kappa\in[1-\epsilon,1+\epsilon]$ we have $\Xi\le C$. As a consequence, the right side of \eqref{eq:est-phi'} is finite as soon as $\alpha\le(CR)^{-1}/2=\alpha_3=\alpha_3(\mu,\nu,\epsilon,R)$. It thus satisfies
$$\forall (Q,\rho')\in B_\cX(R),\qquad \|\Phi'(Q,\rho')\|_\cX\le C_1\sqrt{\alpha},$$
for some $C_1=C_1(\mu,\nu,\epsilon,R)>0$. Hence, $\Phi$ is a contraction on $B_\cX(R)$ if $\alpha\le(4C_1)^{-2}=\alpha_4=\alpha_4(\mu,\nu,\epsilon,R)>0$. It remains to prove that $B_\cX(R)$ is invariant under $\Phi$. For any $(Q,\rho')\in B_\cX(R)$, we have
\begin{eqnarray*}
 \|\Phi(Q,\rho')+(\Qref,0)\|_\cX & \le & \|\Phi(Q,\rho')-\Phi(-\Qref,0)\|_\cX+\|\Phi(-\Qref,0)+(\Qref,0)\|_\cX \\
  & \le & C_1\sqrt{\alpha}R+\|(\Qref,\rhoref)\|_\cX.
\end{eqnarray*}
By Proposition \ref{prop:est-Qref-rhoref}, if $0\le\alpha\le\alpha_1$, $\Lambda\ge\Lambda_1$, and $\alpha\log\Lambda\le L_1$, we have
$$\|(\Qref,\rhoref)\|_\cX\le C=C(\mu,\nu,\epsilon).$$
Since we have chosen $\alpha$ such that $C_1\sqrt{\alpha}\le 1/2$, $\Phi$ stabilizes $B_\cX(R)$ if
$$R=2C.$$
To recap, if $0\le\alpha\le\min(\alpha_0,\alpha_1,\alpha_2,\alpha_3,\alpha_4)$, $\Lambda\ge\max(3,\Lambda_0,\Lambda_1)$ with $\alpha\log\Lambda\le\min(L_0,L_1)$, then for any $\kappa\in[1-\epsilon,1+\epsilon]$, $\Phi$ is a contraction on $B_\cX(R)$, with $R=R(\mu,\nu,\epsilon)>0$. Hence, there exists a unique $Q\in\cX$ satisfying
$$Q=\chi_{(-\ii,\mu]}\left(\Dref+\alpha\Pi_\Lambda\left[V_{\rho_Q+\rhoref}-R_{Q+\Qref}\right]\Pi_\Lambda\right)-\Pref,$$
$$\|(Q+\Qref,\rho_Q+\rhoref)\|_\cX\le R.$$
Defining $\tilde{Q}:=Q+\Qref-G_{1,0}(\kappa\nur)$, we have $\rho_Q+\rhoref=\rho_{\tilde{Q}}-B_\Lambda\kappa\nur$ and we have proved that there is a unique solution to
$$\tilde{Q}=\chi_{(-\ii,\mu]}\left(D^{\kappa\nu}_\Lambda+\alpha\Pi_\Lambda\left[V_{\rho_{\tilde{Q}}}-R_{\tilde{Q}}\right]\Pi_\Lambda\right)-P^0_{\Lambda,-},$$
$$\|(\tilde{Q}+G_{1,0}(\kappa\nur),\rho_{\tilde{Q}}-B_\Lambda\kappa\nur)\|_\cX\le R,$$
which ends the proof of Theorem \ref{thm:thm1}.

\subsection{Proof of Theorem \ref{thm:thm2}}

Let $\nu\in L^2(\R^3)\cap\cC$, $\mu_\pm\in(-1,1)$, and $\epsilon>0$ satisfying the assumptions given in Section \ref{sec:setting}. Since the eigenvalues of $D^{s\nu}$ are continuous with respect to $s$, there exists $\epsilon'>\epsilon$ such that $(\mu_\pm,\nu,\epsilon')$ also satisfy the assumptions given in Section \ref{sec:setting}. By the same method as in the proofs of Lemma \ref{lemma:gap-Dref} and Proposition \ref{prop:est-Qref-rhoref}, there exists $\alpha'=\alpha'(\mu,\nu,\epsilon)>0$, $\Lambda'=\Lambda'(\mu,\nu,\epsilon)\ge1$, $L'=L'(\mu,\nu,\epsilon)>0$ and $\xi'=\xi'(\mu,\nu,\epsilon)>0$ such that for all $0\le\alpha\le\alpha'$, for all $\Lambda\ge\Lambda'$ satisfying $\alpha\log\Lambda\le L'$, and for all $\kappa\in[1-\epsilon,1+\epsilon]$, we have $\kappa/(1+\alpha B_\Lambda)\in[1-\epsilon',1+\epsilon']$ and $d(\mu_\pm,\sigma_{\gH}(D^{\kappa\nur}_\Lambda))\ge\xi'$. The following result ensures that the operator $D^{\kappa\nur}_\Lambda$ also has an eigenvalue in $(\mu_-,\mu_+)$, for $\Lambda$ large enough. 

\begin{lemma}
 Let $\nu\in L^2\cap\cC$, $\mu_\pm\in(-1,1)$, and $\epsilon>0$ satisfying the assumptions given in Section \ref{sec:setting}. Then, there exists $\Lambda''=\Lambda''(\mu_\pm,\nu,\epsilon)\ge\Lambda'$ and $C=C(\mu_\pm,\nu,\epsilon)>0$ such that for any $\Lambda\ge\Lambda''$, for any $0\le\alpha\le\alpha'$ with $\alpha\log\Lambda\le L'$, and for any $\kappa\in[1-\epsilon,1+\epsilon]$, the operator $D^{\kappa\nur}_\Lambda$ on $\gH_\Lambda$ possesses a simple eigenvalue $\lambda(\kappa\nur,\Lambda)$ in $(\mu_-,\mu_+)$, which verifies the estimate
 \begin{equation}
  |\lambda(\kappa\nur)-\lambda(\kappa\nur,\Lambda)|\le C E(\Lambda)^{-1}.
 \end{equation}
 Furthermore, if we denote by $\phi(\kappa\nur,\Lambda)$ the unique normalized eigenvector of $D^{\kappa\nur}_\Lambda$ for the eigenvalue $\lambda(\kappa\nur,\Lambda)$ such that $\langle\phi(\kappa\nur),\phi(\kappa\nur,\Lambda)\rangle_{L^2}>0$, then we have
 \begin{equation}
  \|\phi(\kappa\nur)-\phi(\kappa\nur,\Lambda)\|_{H^1} \le C E(\Lambda)^{-1},
 \end{equation}
\end{lemma}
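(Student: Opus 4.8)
The plan is to transfer the (known) simple isolated eigenvalue $\lambda(\kappa\nur)$ of $D^{\kappa\nur}$ on $\gH$ to the cut-off operator $D^{\kappa\nur}_\Lambda=\Pi_\Lambda(D^0-\kappa V_{\nur})\Pi_\Lambda$ on $\gH_\Lambda$ by a self-adjoint quasimode argument, the key point being that all the eigenfunctions involved are concentrated at low momenta, uniformly in $\Lambda$. Write $\lambda:=\lambda(\kappa\nur)\in(\mu_-,\mu_+)\subset(-1,1)$; by the uniform bounds already established (together with the continuity of the spectrum and a compactness argument over $s=\kappa/(1+\alpha B_\Lambda)\in[1-\epsilon',1+\epsilon']$), one has $d(\lambda,\{\mu_\pm\})\ge\xi''$ and $d(\lambda,\sigma_\gH(D^{\kappa\nur})\setminus\{\lambda\})\ge\xi''$ for a common $\xi''=\xi''(\mu_\pm,\nu,\epsilon)>0$, uniformly in $\kappa\in[1-\epsilon,1+\epsilon]$ and in the admissible $(\alpha,\Lambda)$.

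\emph{Step 1: uniform regularity of eigenfunctions.} Let $\phi:=\phi(\kappa\nur)$ be the normalized eigenfunction; then $(D^0-\lambda)\phi=\kappa V_{\nur}\phi$ and, since $\lambda\notin\sigma(D^0)$, $\phi=\kappa(D^0-\lambda)^{-1}V_{\nur}\phi$. Because $V_{\nur}\in L^\ii$ by \eqref{eq:ineq-VLii}, one first gets $\phi\in H^1$ with $\|\phi\|_{H^1}\le C(\mu,\nu)$. Next, $V_{\nur}\phi\in H^1$: indeed $\nabla(V_{\nur}\phi)=(\nabla V_{\nur})\phi+V_{\nur}\nabla\phi$, the second term lies in $L^2$; for the first, the identity $\widehat{V_{\nur}}(k)=c\,\hat{\nur}(k)/|k|^2$ gives $\|V_{\nur}\|_{\dot H^2}=c\|\nur\|_{L^2}$, hence $\nabla V_{\nur}\in\dot H^1\hookrightarrow L^6$, and $\|(\nabla V_{\nur})\phi\|_{L^2}\le\|\nabla V_{\nur}\|_{L^6}\|\phi\|_{L^3}\le C\|\nur\|_{L^2}\|\phi\|_{H^1}$ using $H^1(\R^3)\hookrightarrow L^3$. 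Since $(D^0-\lambda)^{-1}$ is a Fourier multiplier of order $-1$ (its symbol $(\alp\cdot p+\beta-\lambda)^{-1}$ has norm $\le((1-|\lambda|)E(p))^{-1}$), we conclude $\phi\in H^2$ with $\|\phi\|_{H^2}\le C(\mu,\nu,\epsilon)$, whence $\|(1-\Pi_\Lambda)\phi\|_{L^2}\le C E(\Lambda)^{-2}$ and $\|(1-\Pi_\Lambda)\phi\|_{H^1}\le C E(\Lambda)^{-1}$. The same bootstrap applied to any normalized $\psi\in\gH_\Lambda$ with $D^{\kappa\nur}_\Lambda\psi=\lambda_\psi\psi$, $\lambda_\psi\in(-1,1)$ — starting from $D^0\psi=\lambda_\psi\psi+\kappa\Pi_\Lambda V_{\nur}\psi$ — gives $\|\psi\|_{H^1}\le C(\mu,\nu)$ and $\|V_{\nur}\psi\|_{H^1}\le C(\mu,\nu,\epsilon)$ uniformly in $\Lambda$, hence $\|(1-\Pi_\Lambda)V_{\nur}\psi\|_{L^2}\le C E(\Lambda)^{-1}$.

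\emph{Step 2: existence and simplicity.} Set $\phi_\Lambda^{(0)}:=\Pi_\Lambda\phi$. Since $\Pi_\Lambda D^0(1-\Pi_\Lambda)=0$, one computes $(D^{\kappa\nur}_\Lambda-\lambda)\phi_\Lambda^{(0)}=\kappa\Pi_\Lambda V_{\nur}(1-\Pi_\Lambda)\phi$, so $\|(D^{\kappa\nur}_\Lambda-\lambda)\phi_\Lambda^{(0)}\|_{L^2}\le C E(\Lambda)^{-2}$ while $\|\phi_\Lambda^{(0)}\|_{L^2}\ge 1/2$ for $\Lambda$ large; by self-adjointness $d(\lambda,\sigma_{\gH_\Lambda}(D^{\kappa\nur}_\Lambda))\le C E(\Lambda)^{-2}$. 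As $\sigma_{\text{ess}}(D^{\kappa\nur}_\Lambda)=[-E(\Lambda),-1]\cup[1,E(\Lambda)]$ and $d(\lambda,\{\mu_\pm\})\ge\xi''$, for $\Lambda$ large the closest spectral point is a discrete eigenvalue $\lambda(\kappa\nur,\Lambda)\in(\mu_-,\mu_+)$ with $|\lambda(\kappa\nur)-\lambda(\kappa\nur,\Lambda)|\le C E(\Lambda)^{-2}\le C E(\Lambda)^{-1}$. Conversely, any normalized eigenfunction $\psi$ of $D^{\kappa\nur}_\Lambda$ with eigenvalue in $(\mu_-,\mu_+)$ satisfies $(D^{\kappa\nur}-\lambda_\psi)\psi=-\kappa(1-\Pi_\Lambda)V_{\nur}\psi$, of $L^2$-norm $\le C E(\Lambda)^{-1}$ by Step 1; since $\sigma_\gH(D^{\kappa\nur})\cap[\mu_--\xi''/2,\mu_++\xi''/2]=\{\lambda\}$ with the rest of the spectrum at distance $\ge\xi''/2$, the rank-one projector $P=\proj{\phi}$ obeys $\|(1-P)\psi\|_{L^2}\le C E(\Lambda)^{-1}$. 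If there were two orthonormal such eigenfunctions $\psi_1,\psi_2$, then $P\psi_1$ and $P\psi_2$ are parallel and $0=|\langle\psi_1,\psi_2\rangle|\ge\|P\psi_1\|\,\|P\psi_2\|-\|(1-P)\psi_1\|\,\|(1-P)\psi_2\|\ge 1-C E(\Lambda)^{-1}$, impossible for $\Lambda$ large; hence $\lambda(\kappa\nur,\Lambda)$ is the unique and simple eigenvalue of $D^{\kappa\nur}_\Lambda$ in $(\mu_-,\mu_+)$. For the eigenvector, take $\phi_\Lambda:=\phi(\kappa\nur,\Lambda)$ normalized with $\langle\phi,\phi_\Lambda\rangle>0$; then $\|(1-P)\phi_\Lambda\|_{L^2}\le C E(\Lambda)^{-1}$ forces $0\le1-\langle\phi,\phi_\Lambda\rangle\le C E(\Lambda)^{-2}$ and $\|\phi-\phi_\Lambda\|_{L^2}\le C E(\Lambda)^{-1}$. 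Finally, using $\|u\|_{H^1}\sim\|u\|_{L^2}+\|D^0 u\|_{L^2}$ together with $D^0\phi=\lambda\phi+\kappa V_{\nur}\phi$ and $D^0\phi_\Lambda=\lambda(\kappa\nur,\Lambda)\phi_\Lambda+\kappa\Pi_\Lambda V_{\nur}\phi_\Lambda$, one gets $\|D^0(\phi-\phi_\Lambda)\|_{L^2}\le(|\lambda|+\kappa\|V_{\nur}\|_{L^\ii})\|\phi-\phi_\Lambda\|_{L^2}+|\lambda-\lambda(\kappa\nur,\Lambda)|+\kappa\|(1-\Pi_\Lambda)V_{\nur}\phi_\Lambda\|_{L^2}\le C E(\Lambda)^{-1}$, which completes the proof.

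The only genuinely non-routine ingredient is the uniform-in-$\Lambda$ momentum localization in Step 1: it relies on the fact that $V_{\nur}=\nur\star|\cdot|^{-1}$ with $\nur\in L^2\cap\cC$ has $\nabla V_{\nur}\in L^6$, so that multiplication by $V_{\nur}$ maps $H^1$ to $H^1$ with norm controlled by $\|\nur\|_{L^2\cap\cC}$ (this is what propagates the regularity of eigenfunctions of $D^{\kappa\nur}$ and $D^{\kappa\nur}_\Lambda$ and is also what kills the error term $(1-\Pi_\Lambda)V_{\nur}\psi$). Once this is in place, the rest is the standard self-adjoint quasimode estimate and a spectral-projector comparison. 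Note that the hypothesis $\int\log(2+|k|)^2|\hat{\nu}(k)|^2\d{k}<\ii$ is not needed in this lemma; it enters only through Proposition \ref{prop:est-Qref-rhoref}.
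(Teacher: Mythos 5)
Your proof is correct, and it takes a genuinely different route from the paper. The paper identifies $D^{\kappa\nur}_\Lambda$ with an operator on the full space $\gH$, writes the difference $\chi_{(\mu_-,\mu_+)}(D^{\kappa\nur})-\Pi_\Lambda\chi_{(\mu_-,\mu_+)}(D^{\kappa\nur}_\Lambda)\Pi_\Lambda$ as a contour integral over a loop $\CJ$ through $\mu_\pm$, and bounds that difference in operator norm by $C E(\Lambda)^{-1}$ using $\sup_{z\in\CJ}\||D^0|(D^{\kappa\nur}-z)^{-1}\|$ together with $\|(1-\Pi_\Lambda)|D^0|^{-1}\|\le E(\Lambda)^{-1}$; existence, simplicity, the eigenvalue estimate and the $L^2$ eigenvector estimate all follow from this single projector bound, and the $H^1$ upgrade is done at the end exactly as you do it, via the two eigenvalue equations. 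Your argument instead makes the momentum localization of the eigenfunctions explicit: you bootstrap $\phi\in H^2$ uniformly (the key input being that multiplication by $V_{\nur}$ preserves $H^1$, with $\nabla V_{\nur}\in L^6$ controlled by $\|\nur\|_{L^2}$), so that $\Pi_\Lambda\phi$ is an $\cO(E(\Lambda)^{-2})$ quasimode for $D^{\kappa\nur}_\Lambda$, which gives existence, and conversely every normalized eigenvector $\psi$ of $D^{\kappa\nur}_\Lambda$ with eigenvalue in $(\mu_-,\mu_+)$ is an $\cO(E(\Lambda)^{-1})$ quasimode for $D^{\kappa\nur}$, which gives uniqueness/simplicity and the $L^2$ comparison. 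Both approaches rely on the same uniform spectral gap near $\mu_\pm$, but yours is somewhat more elementary (no contour integrals, only Sobolev estimates and the abstract self-adjoint quasimode inequality) and makes transparent why the cut-off error is governed by the $H^2$-regularity of $\phi$; the paper's projector-comparison approach packages everything into a single operator-norm estimate, which is tidier if one later wants to track the projector difference itself. Your observation that the extra hypothesis $\int\log(2+|k|)^2|\hat\nu(k)|^2\d{k}<\ii$ is not used here is also correct.

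One small point worth double-checking in your Step 2: the bound $\|(1-P)\psi\|_{L^2}\le CE(\Lambda)^{-1}$ for an arbitrary eigenvalue $\lambda_\psi\in(\mu_-,\mu_+)$ of the cut-off operator requires that the rest of $\sigma_\gH(D^{\kappa\nur})$ stays a uniform distance $\xi''$ from the \emph{entire} interval $[\mu_-,\mu_+]$, not just from the single point $\lambda(\kappa\nur)$; you assert this implicitly in the opening sentence ("$d(\lambda,\{\mu_\pm\})\ge\xi''$") and again in Step 2 ("the rest of the spectrum at distance $\ge\xi''/2$"), and it does follow from the hypothesis $\mu_\pm\notin\sigma(D^{s\nu})$ plus compactness over $s\in[1-\epsilon',1+\epsilon']$ (the argument is of the same nature as the paper's Lemma \ref{lemma:gap-Dref}), but since this is the crux of the uniqueness argument it would be worth spelling out that one-line compactness step explicitly.
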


\begin{remark}
 Since $\kappa/(1+\alpha B_\Lambda)\in[1-\epsilon',1+\epsilon']$, and $(\mu_\pm,\nu,\epsilon')$ satisfy the assumptions of Section \ref{sec:setting}, $\lambda(\kappa\nur)=\lambda(\kappa/(1+\alpha B_\Lambda)\nu)$ is well-defined. 
\end{remark}

 \begin{proof}
  In order to compare $D^{\kappa\nur}$ with $D^{\kappa\nur}_\Lambda$, we identify $D^{\kappa\nur}_\Lambda$ with the operator on $\gH$ which is $D^{\kappa\nur}_\Lambda$ on $\gH_\Lambda$ and $0$ on $(1-\Pi_\Lambda)\gH_\Lambda$. Let $\CJ$ be a complex, compact contour around $(\mu_-,\mu_+)$, of index 1, crossing $\R$ only through $\mu_-$ and $\mu_+$. This way $\CJ$ remains uniformly far from $\sigma_\gH(D^{\kappa\nur})$ and $\sigma_\gH(D^{\kappa\nur}_\Lambda)$, for $\Lambda\ge\Lambda'$. We then have
  \begin{multline*}
\chi_{(\mu_-,\mu_+)}(D^{\kappa\nur})-\Pi_\Lambda\chi_{(\mu_-,\mu_+)}(D^{\kappa\nur}_\Lambda)\Pi_\Lambda=\\
-\frac{1}{2i\pi}\int_{\CJ}\Pi_\Lambda\left(\frac{1}{D^{\kappa\nur}-z}-\frac{1}{D^{\kappa\nur}_\Lambda-z}\right)\Pi_\Lambda\d{z}\\
-\frac{1}{2i\pi}\int_{\CJ}\left((1-\Pi_\Lambda)\frac{1}{D^{\kappa\nur}-z}\Pi_\Lambda+\Pi_\Lambda\frac{1}{D^{\kappa\nur}-z}(1-\Pi_\Lambda)\right.\\
\left.+(1-\Pi_\Lambda)\frac{1}{D^{\kappa\nur}-z}(1-\Pi_\Lambda)\right)\d{z}=I+II,   
  \end{multline*}
where $I$ is only the term in the second line. For the second term, we write
\begin{eqnarray*}
 \|II\|_{L^2\to L^2} & \le & 3\|(1-\Pi_\Lambda)|D^0|^{-1}\|_{L^2\to L^2}\sup_{z\in\CJ}\left\|\frac{|D^0|}{D^{\kappa\nur}-z}\right\|_{L^2\to L^2}\\
  & \le & C(\mu_\pm,\nu,\epsilon)E(\Lambda)^{-1}.
\end{eqnarray*}
The first term is computed by the resolvent formula 
$$I=\frac{1}{2i\pi}\int_{\CJ}\Pi_\Lambda\frac{1}{D^{\kappa\nur}-z}(1-\Pi_\Lambda)V_{\kappa\nur}\Pi_\Lambda\frac{1}{D^{\kappa\nur}_\Lambda-z}\d{z},$$
so that we have the estimation
\begin{eqnarray*}
\|I\|_{L^2\to L^2} & \le & \sup_{z\in\CJ}\left\|\frac{|D^0|}{D^{\kappa\nur}-z}\right\|_{L^2\to L^2}\||D^0|^{-1}(1-\Pi_\Lambda)\|_{L^2\to L^2}\sup_{z\in\CJ}\left\|\frac{1}{D^{\kappa\nur}_\Lambda-z}\right\|_{\gH_\Lambda\to\gH_\Lambda}\\
 & \le & C(\mu_\pm,\nu,\epsilon) E(\Lambda)^{-1}.
\end{eqnarray*}
As a consequence, for $\Lambda$ large enough such that for instance
$$\|\chi_{(\mu_-,\mu_+)}(D^{\kappa\nur})-\Pi_\Lambda\chi_{(\mu_-,\mu_+)}(D^{\kappa\nur}_\Lambda)\Pi_\Lambda\|_{L^2\to L^2}<1,$$
the operator $D^{\kappa\nur}_\Lambda$ on $\gH_\Lambda$ has a simple eigenvalue $\lambda(\kappa\nur,\Lambda)\in(\mu_-,\mu_+)$. To estimate the difference between these eigenvalues, notice that we have proved that for any $z\in[\mu_\pm-\delta_0,\mu_\pm+\delta_0]$, for some $\delta_0=\delta_0(\mu_\pm,\nu,\epsilon)>0$ small enough
$$\left\|\frac{1}{D^{\kappa\nur}-z}-\Pi_\Lambda\frac{1}{D^{\kappa\nur}_\Lambda-z}\Pi_\Lambda\right\|_{L^2\to L^2}\le C(\mu_\pm,\nu,\epsilon) E(\Lambda)^{-1}.$$
It implies that, for $\Lambda$ large enough,
$$\left|\frac{1}{\lambda(\kappa\nur)-z}-\frac{1}{\lambda(\kappa\nur,\Lambda)-z}\right|\le C(\mu_\pm,\nu,\epsilon) E(\Lambda)^{-1}$$
and hence
$$|\lambda(\kappa\nur)-\lambda(\kappa\nur,\Lambda)|\le C(\mu_\pm,\nu,\epsilon) E(\Lambda)^{-1}.$$
Let us now consider the unique eigenvector $\phi(\kappa\nur,\Lambda)$ as in the statement of the lemma. It is well-defined for $\Lambda$ large enough since the operator 
$$A:=|\phi(\kappa\nur)\rangle\langle\phi(\kappa\nur)|-|\phi(\kappa\nur,\Lambda)\rangle\langle\phi(\kappa\nur,\Lambda)|.$$
verifies
\begin{multline*}
0\le\langle A\phi(\kappa\nur),\phi(\kappa\nur)\rangle_{L^2}=1-\langle\phi(\kappa\nur,\Lambda),\phi(\kappa\nur)\rangle_{L^2}^2\\
\le\|A\|_{L^2\to L^2}\le C(\mu_\pm,\nu,\epsilon) E(\Lambda)^{-1}<1.
\end{multline*}
Hence, $\langle\phi(\kappa\nur,\Lambda),\phi(\kappa\nur)\rangle_{L^2}\neq0$ and
\begin{multline*}
\|\phi(\kappa\nur)-\phi(\kappa\nur,\Lambda)\|_{L^2}^2=2(1-\langle\phi(\kappa\nur),\phi(\kappa\nur,\Lambda)\rangle)\\
\le2(1-\langle\phi(\kappa\nur),\phi(\kappa\nur,\Lambda)\rangle^2)\le C(\mu_\pm,\nu,\epsilon) E(\Lambda)^{-1}. 
\end{multline*}
To obtain a control in $H^1$, we use the eigenvalue equation, which implies
\begin{multline*}
 D^0(\phi(\kappa\nur)-\phi(\kappa\nur,\Lambda))=\lambda(\kappa\nur)(\phi(\kappa\nur)-\phi(\kappa\nur,\Lambda))\\
 +(\lambda(\kappa\nur)-\lambda(\kappa\nur,\Lambda))\phi(\kappa\nur,\Lambda)\\
+ V_{\kappa\nur}(\phi(\kappa\nur)-\phi(\kappa\nur,\Lambda))+(1-\Pi_\Lambda)V_{\kappa\nur}\phi(\kappa\nur,\Lambda).
\end{multline*}
Using the fact that $V_{\kappa\nur}\phi(\kappa\nur,\Lambda)$ is bounded in $H^1$ uniformly in $\Lambda$, we get the estimate in $H^1$. 
 \end{proof}

 
 We now begin the proof of Theorem \ref{thm:thm2}. Since we work with $\kappa$ fixed we will omit the dependence on $\kappa$ in our notations, meaning that we write $Q_\pm(\alpha)$ instead of $Q_\pm(\kappa,\alpha)$. We assume $0\le\alpha\le\min(\alpha_0,\alpha')$, $\Lambda\ge\max(\Lambda_0,\Lambda'')$ and $\alpha\log\Lambda\le\min(L_0,L')$, where $(\alpha_0,\Lambda_0,L_0)$ is given by Theorem \ref{thm:thm1} and $(\alpha',\Lambda'',L')$ were given just above. In the sequel, we will use notations such as $\cO_{\alpha\to0}(\alpha)$ (resp. $\cO_{\alpha\to0}^X(\alpha)$) to denote a function of $\alpha$ which is bounded in absolute value (resp. in norm ${\|\cdot\|_X}$ for some space $X$) by $C\alpha$, where $C$ only depends on $\mu_\pm,\nu,\epsilon$. In our regime of parameters, notice that we have for instance $\cO_{\Lambda\to\ii}(\sqrt{\log\Lambda})=\cO_{\alpha\to0}(\sqrt{\alpha})$. Likewise, in the following we will use the notation $C$ for a constant only depending on $\mu_\pm$, $\nu$ and $\epsilon$, whose value may change from line to line. 
 
 We use the a priori bound given by Theorem \ref{thm:thm1}
\begin{equation}\label{eq:apriori}
\|(Q_\pm(\alpha)+G_{1,0}(\kappa\nur),\rho_\pm(\alpha)-B_\Lambda\kappa\nur)\|_\cX=\cO_{\alpha\to0}(1),
\end{equation}
 where we used the shortcut notation $\rho_\pm(\alpha):=\rho_{Q_\pm(\alpha)}$. The self consistent equation for $Q_\pm(\alpha)$ is 
 \begin{eqnarray*}
  Q_\pm(\alpha) & = & \projnegpm{\Pi_\Lambda\left(D^{\kappa\nu}_\Lambda+\alpha\left(V_{\rho_\pm(\alpha)}-R_{Q_\pm(\alpha)}\right)\right)\Pi_\Lambda}-P^0_{\Lambda,-}\\
 & = & \projnegpm{\Dref+\alpha\Pi_\Lambda\left(V_{\rho_\pm(\alpha)-B_\Lambda\kappa\nur}-R_{Q_\pm(\alpha)+G_{1,0}(\kappa\nur)}\right)\Pi_\Lambda}-P^0_{\Lambda,-}\\
 & = & \sum_{n\ge1}\alpha^n G_{n,\pm}\left(Q_\pm(\alpha)+G_{1,0}(\kappa\nur),\rho_\pm(\alpha)-B_\Lambda\kappa\nur\right)+\Qrefpm,
 \end{eqnarray*}
 where we used a notation similar to Section \ref{sec:proof-thm1} (with a slight modification of the definition of $\Qrefpm$):
 $$\Qrefpm:=\projnegpm{\Dref}-P^0_{\Lambda,-},$$
 $$\Dref:=\Pi_\Lambda(D^{\kappa\nur}+\alpha R_{1,0}(\kappa\nur))\Pi_\Lambda,$$
 $$G_{n,\pm}(Q,\rho):=\frac{(-1)^{n+1}}{2\pi}\int_\R\frac{1}{\Dref-\mu_\pm+i\eta}\left((V_\rho-R_Q)\frac{1}{\Dref-\mu_\pm+i\eta}\right)^n\d{\eta}.$$
 Notice that $G_{1,0}(\kappa\nur)$ is independent of the index $\pm$. By \eqref{eq:est-Gn}, we have for all $n\ge2$,
 $$\|G_{n,\pm}(Q,\rho)\|_\cQ\le C\|(Q,\rho)\|^n,$$
so that 
 $$\|G_{n,\pm}(Q_\pm(\alpha)+G_{1,0}(\kappa\nur),\rho_\pm(\alpha)-B_\Lambda\kappa\nur)\|_\cQ\le C^n.$$
 On the other hand, 
 $$G_{1,\pm}(Q,\rho)=G_{1,0,\pm}(\rho)-G_{0,1,\pm}(Q),$$
with 
 $$G_{1,0,\pm}(\rho):=\frac{1}{2\pi}\int_\R\frac{1}{\Dref-\mu_\pm+i\eta}\Pi_\Lambda V_\rho\Pi_\Lambda\frac{1}{\Dref-\mu_\pm+i\eta}\d{\eta},$$
$$G_{0,1,\pm}(Q):=\frac{1}{2\pi}\int_\R\frac{1}{\Dref-\mu_\pm+i\eta}\Pi_\Lambda R_Q\Pi_\Lambda\frac{1}{\Dref-\mu_\pm+i\eta}\d{\eta}.$$
Combining \eqref{eq:estGnx} and  \cite[Lemma 11]{HaiLewSer-05a}, we get
$$\|G_{0,1,\pm}(Q)\|_{\cQ}\le C\|Q\|_\cQ,\qquad \|G_{1,0,\pm}(\rho)\|_\cQ\le C\sqrt{\log\Lambda}\|\rho\|_{L^2\cap\cC}.$$
Hence, using \eqref{eq:apriori} we get  
$$\left\|Q_\pm(\alpha)-\Qrefpm \right\|_\cQ=\cO_{\alpha\to0}(\sqrt{\alpha}).$$
 Now using \eqref{eq:est-Qref-Qlin}, we deduce
\begin{equation}\label{eq:QOsqrtalpha}
\left\|Q_\pm(\alpha)-\Qlinpm(\kappa\nur)\right\|_\cQ=\cO_{\alpha\to0}(\sqrt{\alpha}),
\end{equation}
where
$$\Qlinpm(\omega)=\projnegpm{D^\omega_\Lambda}-P^0_{\Lambda,-},$$
which in particular implies Corollary \ref{coro}. The self-consistent equation satisfied by $\rho_\pm(\alpha)$ is
\begin{multline*}
\rho_\pm(\alpha)=\rhorefpm-\alpha B_\Lambda(\rho_\pm(\alpha)-B_\Lambda\kappa\nur)+\alpha \cU_\Lambda(\rho_\pm(\alpha)-B_\Lambda\kappa\nur)\\
+\alpha\rho_{1,0,\pm,r}(\rho_\pm(\alpha)-B_\Lambda\kappa\nur)-\alpha\rho_{0,1,\pm}(Q_\pm(\alpha)+G_{1,0}(\kappa\nur))\\
+\sum_{n\ge2}\alpha^n\rho_{n,\pm}(Q_\pm(\alpha)+G_{1,0}(\kappa\nur),\rho_\pm(\alpha)-B_\Lambda\kappa\nur), 
\end{multline*}
where $\rho_{n,\pm}(Q,\rho):=\rho_{G_{n,\pm}(Q,\rho)}$ for all $n\ge2$, $\rhorefpm:=\rho_\Qrefpm$ and for any $\rho$, $Q$,
\begin{multline*}
\rho_{1,0,\pm,r}(\rho):=\rho\left[\frac{1}{2\pi}\int_\R\frac{1}{\Dref-\mu_\pm+i\eta}\Pi_\Lambda V_\rho\Pi_\Lambda\frac{1}{\Dref-\mu_\pm+i\eta}\d{\eta}\right.\\
\left.-\frac{1}{2\pi}\int_\R\frac{1}{D^0+i\eta}\Pi_\Lambda V_\rho\Pi_\Lambda\frac{1}{D^0+i\eta}\d{\eta}\right], 
\end{multline*}
$$\rho_{0,1,\pm}(Q)=\rho\left[\frac{1}{2\pi}\int_\R\frac{1}{\Dref-\mu_\pm+i\eta}\Pi_\Lambda R_Q\Pi_\Lambda\frac{1}{\Dref-\mu_\pm+i\eta}\d{\eta}\right].$$
We rewrite the self-consistent equation for $\rho_\pm(\alpha)$ as
\begin{multline}\label{eq:dev-rho}
\rho_\pm(\alpha)-B_\Lambda\kappa\nur=\frac{1}{1+\alpha B_\Lambda}(\rhorefpm-B_\Lambda\kappa\nur)+\frac{\alpha}{1+\alpha B_\Lambda}\cU_\Lambda(\rho_\pm(\alpha)-B_\Lambda\kappa\nur)\\
+\frac{\alpha}{1+\alpha B_\Lambda}\rho_{1,0,\pm,r}(\rho_\pm(\alpha)-B_\Lambda\kappa\nur)-\frac{\alpha}{1+\alpha B_\Lambda}\rho_{0,1,\pm}(Q_\pm(\alpha)+G_{1,0}(\kappa\nur))\\
+\frac{1}{1+\alpha B_\Lambda}\sum_{n\ge2}\alpha^n\rho_{n,\pm}(Q_\pm(\alpha)+G_{1,0}(\kappa\nur),\rho_\pm(\alpha)-B_\Lambda\kappa\nur).
\end{multline}
We now have the necessary estimates to compute the limit of $F(\kappa,\alpha)$ as $\alpha\to0$.

\subsubsection{Kinetic energy terms}

By \eqref{eq:QOsqrtalpha}, we have 
\begin{equation}\label{eq:diffQpm}
Q_+(\alpha)=Q_-(\alpha)+|\phi(\kappa\nur,\Lambda)\rangle\langle\phi(\kappa\nur,\Lambda)|+\cO^\cQ_{\alpha\to0}(\sqrt{\alpha}).
\end{equation}
For any projector $P$ such that $Q=P-P^0\in\cK_\Lambda$ we have
\begin{equation}\label{eq:estkineticnormQ}
\tr_0(D^0Q)=\||D^0|^\frac{1}{2}Q\|_{\gS_2}^2\le\|Q\|_\cQ^2. 
\end{equation}
The subtlety of this estimate is that it only controls the kinetic energy by the $\cQ$-norm only for $Q$'s that are of the form $P-P^0$. Indeed, it would be wrong to deduce from \eqref{eq:diffQpm} that
$$\tr_0(D^0Q_+(\alpha))-\tr_0(D^0Q_-(\alpha))=\langle\phi(\kappa\nur,\Lambda),D^0\phi(\kappa\nur,\Lambda)\rangle+\cO_{\alpha\to0}(\sqrt{\alpha})$$
since the operator $Q_+(\alpha)-Q_-(\alpha)-|\phi(\kappa\nur,\Lambda)\rangle\langle\phi(\kappa\nur,\Lambda)|$ is not of the form $P-P^0$ for some projector $P$. Instead, one should see $\tr_0(D^0Q_\pm(\alpha))$ as a quadratic term by the first equality in \eqref{eq:estkineticnormQ}. Writing $Q_\pm(\alpha)=\Qlinpm(\kappa\nur)+Q_{\pm,r}(\alpha)$, we thus have
\begin{multline}\label{eq:kineticpart}
\tr(|D^0|Q_\pm(\alpha)^2)=\tr(|D^0|\Qlinpm(\kappa\nur)^2)+\tr(|D^0|Q_{\pm,r}(\alpha)^2)\\
+\tr(|D^0|\{\Qlinpm(\kappa\nur),Q_{\pm,r}(\alpha)\}),
\end{multline}
where $\{A,B\}:=AB+BA$. The first term has a good behaviour since $\Qlinpm(\kappa\nur)$ is of the form $P-P^0$, hence the first term is linear in $\Qlinpm(\kappa\nur)$ and we have
\begin{multline}
\tr(|D^0|\Qlinp(\kappa\nur)^2)-\tr(|D^0|\Qlinm(\kappa\nur)^2)\\
=\tr_0(D^0(\Qlinp(\kappa\nur)-\Qlinm(\kappa\nur)))\\
=\langle \phi(\kappa\nur,\Lambda),D^0\phi(\kappa\nur,\Lambda)\rangle. 
\end{multline}
The second term in \eqref{eq:kineticpart} is controlled using that $\|Q_{\pm,r}(\alpha)\|_\cQ=\cO_{\alpha\to0}(\sqrt{\alpha})$, hence $\tr(|D^0|Q_{\pm,r}(\alpha)^2)\le\|Q_{\pm,r}(\alpha)\|_\cQ^2=\cO_{\alpha\to0}(\alpha).$ Finally, the term with the Poisson bracket in \eqref{eq:kineticpart} cannot be shown to go to zero as $\alpha\to0$. Indeed, it is just a $\cO_{\alpha\to0}(1)$ since $\|\Qlinpm(\kappa\nur)\|_\cQ=\cO_{\Lambda\to\ii}(\sqrt{\log\Lambda})$ by \eqref{eq:est-projlibre} and $\|Q_{\pm,r}(\alpha)\|_\cQ=\cO_{\alpha\to0}(\sqrt{\alpha})$. We will compute explicitly the non-vanishing term in this expression. We have
 \begin{multline}\label{eq:dev-kinetic-energy}
\tr_0(D^0Q_+(\alpha))-\tr_0(D^0Q_-(\alpha))=\left\langle\phi(\kappa\nur,\Lambda),D^0\phi(\kappa\nur,\Lambda)\right\rangle\\
+\tr(|D^0|\{\Qlinp(\kappa\nur),(Q_{+,r}(\alpha)-Q_{-,r}(\alpha))\})\\
+\langle\phi(\kappa\nur,\Lambda),\{|D^0|,Q_{-,r}(\alpha)\}\phi(\kappa\nur,\Lambda)\rangle+\cO_{\alpha\to0}(\alpha).  
 \end{multline}
 The computations are lengthy and done in Appendix \ref{appendix:kinetic}. We obtain:
\begin{multline}\label{eq:part1}
 \tr_0(D^0Q_+(\alpha))-\tr_0(D^0Q_-(\alpha))=\left\langle\phi(\kappa\nur,\Lambda),D^0\phi(\kappa\nur,\Lambda)\right\rangle\\
-\frac{\alpha B_\Lambda}{1+\alpha B_\Lambda}\langle\phi(\kappa\nur,\Lambda),V_{\kappa\nur}\phi(\kappa\nur,\Lambda)\rangle+\cO_{\alpha\to0}(\alpha).  
\end{multline}
In particular, notice that 
$$\lim_{\alpha\to0}\tr_0(D^0Q_+(\alpha))-\tr_0(D^0Q_-(\alpha))\neq\left\langle\phi(\kappa\nur),D^0\phi(\kappa\nur)\right\rangle,$$
meaning that the difference of kinetic energy of the interacting vacuums does \emph{not} converge to the difference of kinetic energy of the non-interacting vacuum. This shows the subtlety of the limit $\alpha\to0$. Fortunately, the extra term in \eqref{eq:part1} will be compensated by another one coming from the direct term.

\subsubsection{Potential energy term}

The second term to compute in $F$ is the only ``true'' linear term: that is $-D(\kappa\nu,\cdot)$. By \eqref{eq:dev-rho+rho-Oalpha}, we have
\begin{equation}
\rho_+(\alpha)-\rho_-(\alpha)=\frac{1}{1+\alpha B_\Lambda}|\phi(\kappa\nur,\Lambda)|^2+\cO_{\alpha\to0}^{L^2\cap\cC}(\alpha), 
\end{equation}
which implies that 
\begin{eqnarray}
\nonumber -D(\kappa\nu,\rho_+(\alpha)-\rho_-(\alpha)) & = & -\frac{1}{1+\alpha B_\Lambda}D(\kappa\nu,|\phi(\kappa\nur,\Lambda)|^2)+\cO_{\alpha\to0}(\alpha)\\
 & = &  -\langle\phi(\kappa\nur,\Lambda),V_{\kappa\nur}\phi(\kappa\nur,\Lambda)\rangle+\cO_{\alpha\to0}(\alpha). \label{eq:part2}
\end{eqnarray}

\subsubsection{Direct term}

The next term we have to treat is 
$$\frac{\alpha}{2}\Big(D\left(\rho_+(\alpha)\right)-D(\rho_-(\alpha))\Big)=\frac{\alpha}{2}\Big(D(\rho_+(\alpha)-\rho_-(\alpha))+2D(\rho_+(\alpha)-\rho_-(\alpha),\rho_-(\alpha))\Big),$$
where $D(\rho)$ is a shortcut notation for $D(\rho,\rho)$. By \eqref{eq:dev-rho+rho-Oalpha} and \eqref{eq:apriori}, we have
$$\alpha D(\rho_+(\alpha)-\rho_-(\alpha))=\cO_{\alpha\to0}(\alpha),$$
$$\alpha D(\rho_+(\alpha)-\rho_-(\alpha),\rho_-(\alpha))=\frac{\alpha B_\Lambda}{1+\alpha B_\Lambda}\langle\phi(\kappa\nur,\Lambda),V_{\kappa\nur}\phi(\kappa\nur,\Lambda)\rangle+\cO_{\alpha\to0}(\alpha).$$
Hence,
\begin{equation}\label{eq:part3}
\frac{\alpha}{2}\Big(D\left(\rho_+(\alpha)\right)-D(\rho_-(\alpha))\Big)=\frac{\alpha B_\Lambda}{1+\alpha B_\Lambda}\langle\phi(\kappa\nur,\Lambda),V_{\kappa\nur}\phi(\kappa\nur,\Lambda)\rangle+\cO_{\alpha\to0}(\alpha),
\end{equation}
which exactly compensates the spurious term in \eqref{eq:part1}. 

\subsubsection{Exchange term}

Finally, we treat the last term of the energy, the exchange term. For any operator $Q$, we have
\begin{eqnarray*}
\int_{\R^3\times\R^3}\frac{|Q(x,y)|^2}{|x-y|}\d{x}\d{y} & = & \int_{\R^3\times\R^3}\tr_{\C^4}\left(E(p+q)^{\frac{1}{2}}\hat{Q}(p,q)^*\frac{\hat{R_Q}(p,q)}{E(p+q)^\frac{1}{2}}\right)\d{p}\d{q}\\
 & \le & \|Q\|_\cQ\left(\int_{\R^3\times\R^3}\frac{E(p-q)^2}{E(p+q)}|\hat{R_Q}(p,q)|^2\d{p}\d{q}\right)^\frac{1}{2}\\
 & \le  & C\|Q\|_\cQ^2, 
\end{eqnarray*}
where in the last inequality we used \cite[Lemma 8]{HaiLewSer-05a}. The term we have to estimate is 
\begin{multline*}
\frac{\alpha}{2}\left(\int_{\R^6}\frac{|[Q_+(\alpha)](x,y)|^2}{|x-y|}\d{x}\d{y}-\int_{\R^6}\frac{|[Q_-(\alpha)](x,y)|^2}{|x-y|}\d{x}\d{y}\right)\\
=\frac{\alpha}{2}\int_{\R^6}\frac{|[Q_+(\alpha)-Q_-(\alpha)](x,y)|^2}{|x-y|}\d{x}\d{y}\\
+\alpha\int_{\R^6}\frac{\tr_{\C^4}\left([Q_+(\alpha)-Q_-(\alpha)](x,y)[Q_-(\alpha)](x,y)^*\right)}{|x-y|}\d{x}\d{y}.
\end{multline*}
By \eqref{eq:diffQpm} together with $\|Q_-(\alpha)\|_\cQ=\cO(\sqrt{\log\Lambda})$, we have
$$\frac{\alpha}{2}\int_{\R^6}\frac{|[Q_+(\alpha)-Q_-(\alpha)](x,y)|^2}{|x-y|}\d{x}\d{y}=\cO_{\alpha\to0}(\alpha),$$
\begin{multline}\label{eq:part4}
\alpha\int_{\R^6}\frac{\tr_{\C^4}\left([Q_+(\alpha)-Q_-(\alpha)](x,y)[Q_-(\alpha)](x,y)^*\right)}{|x-y|}\d{x}\d{y}\\
=\alpha\langle\phi(\kappa\nur,\Lambda),R_{Q_-(\alpha)}\phi(\kappa\nur,\Lambda)\rangle+\cO_{\alpha\to0}(\alpha). 
\end{multline}
Using that
$$|\langle\phi(\kappa\nur,\Lambda),R_{Q_-(\alpha)}\phi(\kappa\nur,\Lambda)\rangle|\le\|R_{Q_-(\alpha)}\|_{H^1\to L^2}\|\phi(\kappa\nur,\Lambda)\|_{H^1}^2,$$
and the fact that $\|R_{Q_-(\alpha)}\|_{H^1\to L^2}=\|R_{Q_-(\alpha)}|D^0|^{-1}\|_{L^2\to L^2}\le C\|Q_-(\alpha)\|_{\gS_2}=\cO_{\alpha\to0}(1)$, we obtain
$$\frac{\alpha}{2}\left(\int_{\R^6}\frac{|[Q_+(\alpha)](x,y)|^2}{|x-y|}\d{x}\d{y}-\int_{\R^6}\frac{|[Q_-(\alpha)](x,y)|^2}{|x-y|}\d{x}\d{y}\right)=\cO_{\alpha\to0}(\alpha).$$
Regrouping \eqref{eq:part1}, \eqref{eq:part2}, \eqref{eq:part3}, and \eqref{eq:part4} together, we get
$$F(\kappa,\alpha)=\lambda(\kappa\nur,\Lambda)+\cO_{\alpha\to0}(\alpha).$$
Since $\lambda(\kappa\nur)=\lambda(\kappa\nur,\Lambda)+\cO(E(\Lambda)^{-1})=\lambda(\kappa\nur,\Lambda)+\cO(e^{-\frac{1}{\alpha}})=\lambda(\kappa\nur,\Lambda)+\cO(\alpha)$, we deduce the theorem. \qed

\appendix

\section{Kinetic energy estimates}\label{appendix:kinetic}

We expand $Q_{+,r}(\alpha)-Q_{-,r}(\alpha)$ up to $\cO_{\alpha\to0}(\alpha^{3/2})$. Recall that we have
\begin{multline*}
Q_\pm(\alpha)=\Qrefpm+\alpha G_{1,0,\pm}(\rho_\pm(\alpha)-B_\Lambda\kappa\nur)\\
-\alpha G_{0,1,\pm}(Q_\pm(\alpha)+G_{1,0}(\kappa\nur))+\cO^\cQ_{\alpha\to0}(\alpha^2). 
\end{multline*}
Then, using similar methods as in the proof of Propositions \ref{prop:main} and \ref{prop:est-Qref-rhoref}, we obtain 
$$G_{1,0,\pm}(\rho_\pm(\alpha)-B_\Lambda\kappa\nur)=G_{1,0,\pm}^{(\kappa)}(\rho_\pm(\alpha)-B_\Lambda\kappa\nur)+\cO^\cQ_{\alpha\to0}(\sqrt{\alpha}),$$
$$G_{0,1,\pm}(Q_\pm(\alpha)+G_{1,0}(\kappa\nur))=G_{0,1,\pm}^{(\kappa)}(Q_\pm(\alpha)+G_{1,0}(\kappa\nur))+\cO^\cQ_{\alpha\to0}(\sqrt{\alpha}),$$
with the notation
$$G_{1,0,\pm}^{(\kappa)}(\rho):=\frac{1}{2\pi}\int_\R\frac{1}{D^{\kappa\nur}_\Lambda-\mu_\pm+i\eta}\Pi_\Lambda V_\rho\Pi_\Lambda\frac{1}{D^{\kappa\nur}_\Lambda-\mu_\pm+i\eta}\d{\eta},$$
$$G_{0,1,\pm}^{(\kappa)}(Q):=\frac{1}{2\pi}\int_\R\frac{1}{D^{\kappa\nur}_\Lambda-\mu_\pm+i\eta}\Pi_\Lambda R_Q\Pi_\Lambda\frac{1}{D^{\kappa\nur}_\Lambda-\mu_\pm+i\eta}\d{\eta}.$$
Hence, 
\begin{multline*}
Q_\pm(\alpha)=\Qrefpm+\alpha G_{1,0,\pm}^{(\kappa)}(\rho_\pm(\alpha)-B_\Lambda\kappa\nur)\\
-\alpha G_{0,1,\pm}^{(\kappa)}(Q_\pm(\alpha)+G_{1,0}(\kappa\nur))+\cO^\cQ_{\alpha\to0}(\alpha^{3/2}). 
\end{multline*}
This leads to the decomposition
$$Q_{+,r}(\alpha)-Q_{-,r}(\alpha)=Q_1+\alpha Q_2-\alpha Q_3+\cO^\cQ_{\alpha\to0}(\alpha^{3/2}),$$
with
$$Q_1:=\Qrefp-\Qlinp(\kappa\nur)-(\Qrefm-\Qlinm(\kappa\nur)),$$
$$Q_2:=G_{1,0,+}^{(\kappa)}(\rho_+(\alpha)-B_\Lambda\kappa\nur)-G_{1,0,-}^{(\kappa)}(\rho_-(\alpha)-B_\Lambda\kappa\nur),$$
$$Q_3:=G_{0,1,+}^{(\kappa)}(Q_+(\alpha)+G_{1,0}(\kappa\nur))-G_{0,1,-}^{(\kappa)}(Q_-(\alpha)+G_{1,0}(\kappa\nur)).$$
We will treat each of these three terms separately. 

\subsection{Estimate on $Q_1$.}

To expand $Q_1$, we write
$$\Qrefpm-\Qlinpm(\kappa\nur)=\alpha G_{0,1,\pm}^{(\kappa)}(G_{1,0}(\kappa\nur))-\alpha^2 G_{0,2,\pm}^{(\kappa)}(G_{1,0}(\kappa\nur))+\cO^\cQ_{\alpha\to0}(\alpha^{3/2}),$$
where for any $Q$ we have
$$G_{0,2,\pm}^{(\kappa)}(Q):=\frac{1}{2\pi}\int_\R\frac{1}{D^{\kappa\nur}_\Lambda-\mu_\pm+i\eta}\left(\Pi_\Lambda R_Q\Pi_\Lambda\frac{1}{D^{\kappa\nur}_\Lambda-\mu_\pm+i\eta}\right)^2\d{\eta}.$$
Hence,
$$Q_1=\alpha(G_{0,1,+}^{(\kappa)}-G_{0,1,-}^{(\kappa)})(G_{1,0}(\kappa\nur))-\alpha^2(G_{0,2,+}^{(\kappa)}-G_{0,2,-}^{(\kappa)})(G_{1,0}(\kappa\nur))+\cO^\cQ_{\alpha\to0}(\alpha^{3/2}).$$
For $p=1,2$ and $Q\in\cQ$, we notice that
$$(G_{0,p,+}^{(\kappa)}-G_{0,p,-}^{(\kappa)})(Q)=\frac{1}{2i\pi}\oint_{\CJ}\frac{1}{D^{\kappa\nur}_\Lambda-z}\left(\Pi_\Lambda R_Q\Pi_\Lambda\frac{1}{D^{\kappa\nur}_\Lambda-z}\right)^p\d{z},$$
where $\CJ$ is a contour around 0 in the complex plane which intersects the real axis only at $\mu_-$ and $\mu_+$. By the residuum formula, one computes that
$$(G_{0,1,+}^{(\kappa)}-G_{0,1,-}^{(\kappa)})(G_{1,0}(\kappa\nur))=|\phi(\kappa\nur,\Lambda)\rangle\langle S(\kappa\nur)\phi(\kappa\nur,\Lambda)|+c.c.,$$
\begin{multline*}
(G_{0,2,+}^{(\kappa)}-G_{0,2,-}^{(\kappa)})(G_{1,0}(\kappa\nur))=|\phi(\kappa\nur,\Lambda)\rangle\langle S(\kappa\nur)^2\phi(\kappa\nur,\Lambda)|+c.c.\\
+|S(\kappa\nur)\phi(\kappa\nur,\Lambda)\rangle\langle S(\kappa\nur)\phi(\kappa\nur,\Lambda)|, 
\end{multline*}
where $c.c.$ denotes the adjoint of the preceding operator and 
$$S(\kappa\nur):=\frac{P_\phi^\perp}{D^{\kappa\nur}_\Lambda-\lambda(\kappa\nur,\Lambda)}R_{1,0}(\kappa\nur),$$
the operator $P_\phi^\perp$ being the projection on the orthogonal to $\phi(\kappa\nur,\Lambda)$ in $\gH_\Lambda$. By \eqref{eq:ineq-RQD0-1L2L2}, we have
$$\|S(\kappa\nur)\|_{H^1\to H^1}=\cO_{\alpha\to0}(1).$$
Since the we have to estimate the kinetic energy of these quantities, we introduce a new norm corresponding exactly to the kinetic energy, for which the rank one terms above are easier to estimate. For any $Q$ such that the following expression is well-defined, we set $\|Q\|_{\text{kin}}:=\||D^0|^{1/2}Q\|_{\gS_2}$. We thus see that for any $\psi,\zeta\in H^1$, $\||\psi\rangle\langle\zeta|\|_{\text{kin}}\le\|\psi\|_{H^1}\|\zeta\|_{H^1}$. For this norm, we can write 
\begin{multline*}
Q_1=\alpha(|\phi(\kappa\nur,\Lambda)\rangle\langle S(\kappa\nur)\phi(\kappa\nur,\Lambda)|+|S(\kappa\nur)\phi(\kappa\nur,\Lambda)\rangle\langle\phi(\kappa\nur,\Lambda)|)\\
+\cO^{\text{kin}}_{\alpha\to0}(\alpha^{3/2})=\cO_{\alpha\to0}^{\text{kin}}(\alpha). 
\end{multline*}
Using now that $\Qlinp(\kappa\nur)=-G_{1,0}(\kappa\nur)+\cO_{\alpha\to0}^\cQ(1)$, we deduce
\begin{multline*}
\tr(|D^0|\{\Qlinp(\kappa\nur),Q_1\})\\
=-2\alpha\text{Re}\langle\phi(\kappa\nur,\Lambda),\{|D^0|,G_{1,0}(\kappa\nur)\}S(\kappa\nur)\phi(\kappa\nur,\Lambda)\rangle+\cO_{\alpha\to0}(\alpha). 
\end{multline*}
Next, we have
\begin{multline*}
\left|\langle\phi(\kappa\nur,\Lambda),\{|D^0|,G_{1,0}(\kappa\nur)\}S(\kappa\nur)\phi(\kappa\nur,\Lambda)\rangle\right|\\
\le2\|G_{1,0}(\kappa\nur)\|_{\gS_2}\|S(\kappa\nur)\|_{H^1\to L^2}\|\phi(\kappa\nur,\Lambda)\|_{H^1}^2=\cO_{\alpha\to0}(1), 
\end{multline*}
so that 
$$\tr(|D^0|\{\Qlinp(\kappa\nur),Q_1\})=\cO_{\alpha\to0}(\alpha).$$

\subsection{Estimate on $Q_2$.}

We split $Q_2$ as
$$Q_2=(G_{1,0,+}^{(\kappa)}-G_{1,0,-}^{(\kappa)})(\rho_+(\alpha)-B_\Lambda\kappa\nur)+G_{1,0,-}^{(\kappa)}(\rho_+(\alpha)-\rho_-(\alpha)).$$
In the same fashion as before, the residuum formula leads to
\begin{multline*}
(G_{1,0,+}^{(\kappa)}-G_{1,0,-}^{(\kappa)})(\rho_+(\alpha)-B_\Lambda\kappa\nur)=\\
|\phi(\kappa\nur,\Lambda)\rangle\langle S(V_+)\phi(\kappa\nur,\Lambda)|+|S(V_+)\phi(\kappa\nur,\Lambda)\rangle\langle\phi(\kappa\nur,\Lambda)|, 
\end{multline*}
with $V_+:=V_{\rho_+(\alpha)-B_\Lambda\kappa\nur}$ and 
$$S(V_+):=\frac{P_\phi^\perp}{D^{\kappa\nur}_\Lambda-\lambda(\kappa\nur,\Lambda)}V_+.$$
Notice that we have $\|S(V_+)\|_{L^2\to H^1}=\cO_{\alpha\to0}(1)$. The second term in the expansion of $Q_2$ requires to develop $\rho_+(\alpha)-\rho_-(\alpha)$ up to $\cO(\alpha)$. Combining \eqref{eq:dev-rho} with the estimates valid  for any $\rho\in L^2\cap\cC$, any $Q\in\cQ$, and any $n\ge2$,
$$\|\rho_{n,\pm}(Q,\rho)\|_{L^2\cap\cC}\le C\|(Q,\rho)\|,\qquad\|\rho_{1,0,\pm,r}(\rho)\|_{L^2\cap\cC}\le C\|\rho\|_{L^2\cap\cC},$$
$$\rho_{0,1,\pm}(Q_\pm(\alpha)+G_{1,0}(\kappa\nur))=\rho_{0,1}(Q_\pm(\alpha)+G_{1,0}(\kappa\nur))+\cO^{L^2\cap\cC}_{\alpha\to0}(1),$$
where $\rho_{0,1}(Q):=\rho[G_{0,1}(Q)]$, we deduce that
\begin{multline*}
\rho_+(\alpha)-\rho_-(\alpha)=\frac{1}{1+\alpha B_\Lambda}(\rhorefp-\rhorefm)+\frac{\alpha}{1+\alpha B_\Lambda}\cU_\Lambda(\rho_+(\alpha)-\rho_-(\alpha))\\
-\frac{\alpha}{1+\alpha B_\Lambda}\rho_{0,1}(Q_+(\alpha)-Q_-(\alpha))+\cO^{L^2\cap\cC}_{\alpha\to0}(\alpha). 
\end{multline*}
Since $\rhorefp-\rhorefm=|\phi(\kappa\nur,\Lambda)|^2+\rho_{Q_1}$ and 
$$\rho_{Q_1}=2\alpha\text{Re}\left(\phi(\kappa\nur,\Lambda)\bar{S(\kappa\nur)\phi(\kappa\nur,\Lambda)}\right)+\cO^{L^2\cap\cC}_{\alpha\to0}(\alpha),$$
it remains to estimate $\|\text{Re}(\psi\bar{\zeta})\|_{L^2\cap\cC}$ for any $\psi,\zeta\in H^1$. First, by the Sobolev imbedding $H^1\hookrightarrow L^6$, we have $\|\text{Re}(\psi\bar{\zeta})\|_{L^2}\le\|\psi\|_{L^4}\|\zeta\|_{L^4}\lesssim\|\psi\|_{H^1}\|\zeta\|_{H^1}$. Secondly, by the Hardy-Littlewood-Sobolev inequality, we have
$$\|\text{Re}(\psi\bar{\zeta})\|_{\cC}^2\lesssim\||\psi|^2\|_{L^{6/5}}\||\zeta|^2\|_{L^{6/5}}\lesssim\|\psi\|_{H^1}^2\|\zeta\|_{H^1}^2.$$
Hence,
$$\left\|\text{Re}\left(\phi(\kappa\nur,\Lambda)\bar{S(\kappa\nur)\phi(\kappa\nur,\Lambda)}\right)\right\|_{L^2\cap\cC}=\cO_{\alpha\to0}(1),$$
and thus $\rhorefp-\rhorefm=|\phi(\kappa\nur,\Lambda)|^2+\cO^{L^2\cap\cC}_{\alpha\to0}(\alpha)$. For the third term in the expansion of $\rho_+(\alpha)-\rho_-(\alpha)$, we use $Q_+(\alpha)-Q_-(\alpha)=|\phi\rangle\langle\phi|+\cO^\cQ(\sqrt{\alpha})$ and the estimate $\|\rho_{0,1}(Q)\|_{L^2\cap\cC}\le C\sqrt{\log\Lambda}\|Q\|_\cQ$ to infer 
$$\rho_{0,1}(Q_+(\alpha)-Q_-(\alpha))=\rho_{0,1}(|\phi\rangle\langle\phi|)+\cO^{L^2\cap\cC}_{\alpha\to0}(1).$$

\begin{lemma}\label{lemma:est-rho01phiphi}
We have the estimate
 $$\|\rho_{0,1}(|\phi\rangle\langle\phi|)\|_{L^2\cap\cC}=\cO_{\alpha\to0}(1).$$
\end{lemma}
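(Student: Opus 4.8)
The plan is to estimate $\|\rho_{0,1}(|\phi\rangle\langle\phi|)\|_{L^2\cap\cC}$ by splitting the norm into its $L^2$ and $\cC$ components and using, for each, a duality argument paired with the Kato--Seiler--Simon and Sobolev inequalities, exactly in the spirit of the proof of the lemma stated after Proposition~\ref{prop:main} (the one giving \eqref{eq:ineqnormrho}). Recall
$$\rho_{0,1}(|\phi\rangle\langle\phi|)=\rho\left[\frac{1}{2\pi}\int_\R\frac{1}{D^{\kappa\nur}_\Lambda-\mu_\pm+i\eta}\Pi_\Lambda R_{|\phi\rangle\langle\phi|}\Pi_\Lambda\frac{1}{D^{\kappa\nur}_\Lambda-\mu_\pm+i\eta}\d{\eta}.$$
First I would observe that $R_{|\phi\rangle\langle\phi|}$ is the operator with kernel $\phi(x)\overline{\phi(y)}/|x-y|$; since $\phi=\phi(\kappa\nur,\Lambda)$ lies in $\gH_\Lambda$ and is bounded in $H^1$ uniformly in the parameters (this is exactly the content of the last displayed Lemma before Appendix~\ref{appendix:kinetic}), Hardy's inequality gives $\|R_{|\phi\rangle\langle\phi|}|D^0|^{-1}\|_{\gS_2}\le 2\|\,|\phi\rangle\langle\phi|\,\|_{\gS_2}=2\|\phi\|_{L^2}^2=\cO_{\alpha\to0}(1)$ by \eqref{eq:ineq-RQD0-1L2L2}, and more precisely $\|R_{|\phi\rangle\langle\phi|}|D^0|^{-1/2}\|_{L^2\to L^2}\le \frac\pi2\|\,|\phi\rangle\langle\phi|\,\|_\cQ$; since $|\phi\rangle\langle\phi|$ with $\phi\in H^1\cap\gH_\Lambda$ has finite $\cQ$-norm (bounded by $C\|\phi\|_{H^1}^2$), all these are $\cO_{\alpha\to0}(1)$.

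Next, for the $L^2$ bound I would test $\rho_{0,1}(|\phi\rangle\langle\phi|)$ against a smooth $W\in\gH_\Lambda$ and write, using the resolvent bound $\||D^0|(D^{\kappa\nur}_\Lambda-\mu_\pm+i\eta)^{-1}\|_{L^2\to L^2}\le C(\mu_\pm)/E(\eta)$ (which holds because $d(\mu_\pm,\sigma(D^{\kappa\nur}_\Lambda))\ge\xi'>0$ in our regime, cf.\ the estimates in Proposition~\ref{prop:est-Qref-rhoref}), that the operator inside the $\rho[\cdot]$ is trace-class against $W$ with
$$\left|\int_\R\tr\!\Big(W\frac{1}{D^{\kappa\nur}_\Lambda-\mu_\pm+i\eta}R_{|\phi\rangle\langle\phi|}\frac{1}{D^{\kappa\nur}_\Lambda-\mu_\pm+i\eta}\Big)\d{\eta}\right|\le C\int_\R\frac{\d{\eta}}{E(\eta)^{3/2}}\,\|W\|_{L^2}\,\||D^0|^{1/2}R_{|\phi\rangle\langle\phi|}\|_{\gS_2}.$$
Here one inserts $|D^0|^{1/2}|D^0|^{-1/2}$ and uses $\||D^0|^{-1/2}W\|_{\gS_6}\le C\|W\|_{L^2}$ from Kato--Seiler--Simon to absorb one resolvent; the factor $\||D^0|^{1/2}R_{|\phi\rangle\langle\phi|}|D^0|^{-1/2}\|_{\gS_2}$ is controlled using the commutator estimate \eqref{eq:Rcomm} of Lemma~\ref{lemm:comm} together with $\|R_{|\phi\rangle\langle\phi|}|D^0|^{-1/2}\|_{\gS_2}\le C\|\phi\|_{H^1}^2$. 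For the $\cC$ bound I would instead test against $V_\rho$ with $\rho\in\cC$, use $\|V_\rho\|_{L^6}\le C\|\rho\|_\cC$ and $\||D^0|^{-1}V_\rho\|_{\gS_6}\le C\|\rho\|_\cC$ from Kato--Seiler--Simon plus Sobolev, and bound the remaining factor $\||D^0|R_{|\phi\rangle\langle\phi|}\|_{\gS_{6/5}}$ --- or rather a suitably split version of it via the same commutator trick --- by $C\|\phi\|_{H^1}^2$; the $\eta$-integral then converges by the same $E(\eta)^{-5/2}$-type decay already exploited in \eqref{eq:estD0A165} and Proposition~\ref{prop:main}. Combining both gives $\|\rho_{0,1}(|\phi\rangle\langle\phi|)\|_{L^2\cap\cC}\le C\|\phi\|_{H^1}^2=\cO_{\alpha\to0}(1)$.

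The main obstacle I expect is purely technical bookkeeping: the resolvent $(D^{\kappa\nur}_\Lambda-\mu_\pm+i\eta)^{-1}$ has no explicit Fourier kernel, so the $\cQ$- and $\gS_{6/5}$-type norms of the integrand cannot be computed directly; instead one must commute powers of $|D^0|$ through $V$, through $R_{|\phi\rangle\langle\phi|}$, and through the resolvent (using that $[(D^{\kappa\nur}_\Lambda-\mu_\pm+i\eta)^{-1},|D^0|^\zeta]=(D^{\kappa\nur}_\Lambda-\mu_\pm+i\eta)^{-1}[|D^0|^\zeta,-V_{\kappa\nur}+\alpha R_{1,0}(\kappa\nur)](D^{\kappa\nur}_\Lambda-\mu_\pm+i\eta)^{-1}$ exactly as in the proof of Lemma~\ref{lemma:estA0A1}), keeping careful track that every commutator estimate from Lemma~\ref{lemm:comm} applies, that the resulting $\eta$-integrals are summable, and that all constants depend only on $\mu_\pm,\nu,\epsilon$ in the regime $\alpha\log\Lambda\le L'$ where $\theta(\mu_\pm,\kappa\nur)\le C(\mu_\pm,\nu,\epsilon)$. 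Once the scheme of splitting $|D^0|^s\cdots|D^0|^t$ with commutators is set up, each term is routine, so I would present the argument by invoking \eqref{eq:ineqnormrho} applied to $Q=G_{0,1}(|\phi\rangle\langle\phi|)$ and then checking the two hypotheses $\||D^0|^{3/2+\sigma}G_{0,1}(|\phi\rangle\langle\phi|)\|_{\gS_2}<\infty$ and $\||D^0|G_{0,1}(|\phi\rangle\langle\phi|)\|_{\gS_{6/5}}<\infty$ by the resolvent expansion already used for $G_1^\omega$ in Proposition~\ref{prop:main}, now with the simpler integrand since only a single $R_{|\phi\rangle\langle\phi|}$ appears.
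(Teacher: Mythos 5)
Your proposal takes a genuinely different route from the paper, and that route has real gaps. A first issue is definitional: the lemma concerns $\rho_{0,1}(Q)=\rho[G_{0,1}(Q)]$ built from the \emph{free} resolvent $(D^0_\Lambda+i\eta)^{-1}$, not from $(D^{\kappa\nur}_\Lambda-\mu_\pm+i\eta)^{-1}$; this matters because the paper's $L^2$ estimate rests on the explicit Fourier kernel that is available only for $D^0_\Lambda$. More seriously, your $L^2$ argument uses the bound $\||D^0|(D^{\kappa\nur}_\Lambda-\mu_\pm+i\eta)^{-1}\|_{L^2\to L^2}\le C/E(\eta)$, which is false: this operator norm does \emph{not} decay in $\eta$ uniformly in $\Lambda$ (it is bounded by a constant of $\theta(\mu,\omega)$ type, cf.\ Lemma~\ref{lemma:D0Dref-1L2L2}). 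With only the $E(\eta)^{-1/2}$ decay provided by $\|R_Q(D^0+i\eta)^{-1}\|_{\gS_2}$ in \eqref{eq:ineq-RQD0-1S2}, the $\eta$-integral you write down is divergent. The same obstacle defeats your fallback plan of invoking \eqref{eq:ineqnormrho} with $Q=G_{0,1}(|\phi\rangle\langle\phi|)$: the hypothesis $\||D^0|G_{0,1}(|\phi\rangle\langle\phi|)\|_{\gS_{6/5}}<\infty$ is not obtainable from the $A_0/A_1$ expansion. In Proposition~\ref{prop:main} that argument works for the terms $G_{n,x}$ with $x\neq(0,\dots,0)$ precisely because $A_1(\eta)$ contains five resolvent--$V_\omega$ factors, giving the $E(\eta)^{-5/2}$ decay of \eqref{eq:estD0A165}; a single $R_{|\phi\rangle\langle\phi|}$ sandwiched between two free resolvents carries no such decay mechanism.

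The paper uses two separate, more hands-on ideas. For the Coulomb norm it is a clean duality based on the algebraic identity $\tr\big(G_{0,1}(Q)\,\zeta\big)=\tr\big(R_Q\,G_{1,0}(\zeta)\big)$ (cyclicity of the trace, with the $\eta$-integral absorbed into $G_{1,0}(\zeta)$), followed by $\|R_{|\phi\rangle\langle\phi|}\|_{\gS_2}\,\|G_{1,0}(\zeta)\|_{\gS_2}\le C\|\phi\|_{H^1}^2\|\zeta\|_{\cC'}$ using Hardy's inequality and \eqref{eq:ineq-G10S2}; no $\eta$-integral has to converge on its own. For the $L^2$ norm the paper works with the explicit Fourier kernel of $G_{0,1}$: $\hat\rho_{0,1}(k)$ is an $\ell$-integral of $\hat R_{|\phi\rangle\langle\phi|}(\ell-\tfrac k2,\ell+\tfrac k2)$ against $M(\ell-\tfrac k2,\ell+\tfrac k2)$, and the key input is the pointwise bound $|M(p,q)|\lesssim|p-q|/E(p+q)^2$. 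That bound produces an extra factor $|k|$, turning the estimate into one on $\|[-i\nabla,R_{|\phi\rangle\langle\phi|}]\|_{\gS_2}$, which is then controlled by Hardy's inequality in terms of $\|\phi\|_{H^1}$. This vanishing of $\hat\rho_{0,1}(k)$ near $k=0$ is a cancellation your commutator-splitting never sees, and it is essential: without it, one cannot close the $L^2$ estimate uniformly in $\alpha$ and $\Lambda$.
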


\begin{proof}[Proof of Lemma \ref{lemma:est-rho01phiphi}]
We estimate the Coulomb norm by duality: for $\zeta\in\cC'$ a smooth function, we have
$$|\tr(G_{0,1}(|\phi\rangle\langle\phi|)\zeta)|=|\tr(R_{|\phi\rangle\langle\phi|}G_{1,0}(\zeta))|\le\|R_{|\phi\rangle\langle\phi|}\|_{\gS_2}\|G_{1,0}(\zeta)\|_{\gS_2}\le C\|\phi\|_{H^1}^2\|\zeta\|_{\cC'}.$$
Hence, $\|\rho_{0,1}(|\phi\rangle\langle\phi|)\|_{\cC}=\cO_{\alpha\to0}(1)$. We cannot use the same method to estimate the $L^2$-norm, for which we have to be more precise. Let us use momentarily the shortcut notation $\rho_{0,1}:=\rho_{0,1}(|\phi\rangle\langle\phi|)$ and recall that for any $k\in\R^3$ we have
\begin{eqnarray*}
 |\hat{\rho_{0,1}}(k)|^2 & = & \left|\int_{\substack{|\ell-k/2|\le\Lambda \\ |\ell+k/2|\le\Lambda}}\tr_{\C^4}\left(\hat{R_{|\phi\rangle\langle\phi|}}(\ell-k/2,\ell+k/2)M(\ell-k/2,\ell+k/2)\right)\d{\ell}\right|^2\\
 & \le & \int_{\R^3}\left|\hat{R_{|\phi\rangle\langle\phi|}}(\ell-k/2,\ell+k/2)\right|^2\d{\ell}\int_{\R^3}|M(\ell-k/2,\ell+k/2)|^2\d{\ell}\\
& \le  &  C|k|^2\int_{\R^3}\left|\hat{R_{|\phi\rangle\langle\phi|}}(\ell-k/2,\ell+k/2)\right|^2\d{\ell},
\end{eqnarray*}
where we used the estimate $|M(p,q)|^2\le C|p-q|^2E(p+q)^{-4}$ (recall that $M(p,q)$ was defined in \eqref{eq:def-M}). As a consequence, we have
$$\|\rho_{0,1}\|_{L^2}^2\le C\int_{\R^3}|p-q|^2\left|\hat{R_{|\phi\rangle\langle\phi|}}(p,q)\right|^2\d{p}\d{q}=C\|[-i\nabla,R_{|\phi\rangle\langle\phi|}]\|_{(\gS_2)^3}^2.$$
But since $[-i\nabla,R_{|\phi\rangle\langle\phi|}](x,y)=\frac{\nabla\phi(x)\phi(y)^*+\phi(x)\nabla\phi(y)^*}{|x-y|}$, we infer from the Hardy inequality that $\|\rho_{0,1}\|_{L^2}\le C\|\phi\|_{H^1}^2$, so that $\|\rho_{0,1}\|_{L^2}=\cO_{\alpha\to0}(1)$. 
\end{proof}

The last term to estimate in $\rho_+(\alpha)-\rho_-(\alpha)$ is $\cU_\Lambda(\rho_+(\alpha)-\rho_-(\alpha))$. Using \eqref{eq:dev-rho}, we have for all $k\in\R^3$,
\begin{multline*}
U_\Lambda(|k|)\cF[\rho_\pm(\alpha)-B_\Lambda\kappa\nur](k)=\frac{U_\Lambda(|k|)}{1+\alpha B_\Lambda(k)}\cF[\rhorefpm-B_\Lambda\kappa\nur](k)\\
+\frac{\alpha U_\Lambda(|k|)}{1+\alpha B_\Lambda(k)}\cF[\rho_{1,0,\pm,r}(\rho_\pm(\alpha)-B_\Lambda\kappa\nur)](k)-\frac{\alpha U_\Lambda(|k|)}{1+\alpha B_\Lambda(k)}\cF[\rho_{0,1,\pm}(Q_\pm(\alpha)+G_{1,0}(\kappa\nur))](k)\\
+\frac{\alpha U_\Lambda(|k|)}{1+\alpha B_\Lambda(k)}\sum_{n\ge2}\alpha^{n-1}\cF[\rho_{n,\pm}(Q_\pm(\alpha)+G_{1,0}(\kappa\nur),\rho_\pm(\alpha)-B_\Lambda\kappa\nur)](k).
\end{multline*}
Since 
$$\left|\frac{\alpha U_\Lambda(|k|)}{1+\alpha B_\Lambda(k)}\right|\le \alpha B_\Lambda +\left|\frac{\alpha B_\Lambda(k)}{1+\alpha B_\Lambda(k)}\right|\le C,$$
we have
$$\cU_\Lambda(\rho_\pm(\alpha)-B_\Lambda\kappa\nur)=\cV_\Lambda(\rhorefpm-B_\Lambda\kappa\nur)-\alpha\cV_\Lambda\rho_{0,1}(Q_\pm(\alpha)+G_{1,0}(\kappa\nur))+\cO_{\alpha\to0}^{L^2\cap\cC}(1),$$
where $\cV_\Lambda$ is the Fourier multiplier by $k\mapsto\frac{U_\Lambda(|k|)}{1+\alpha B_\Lambda(k)}$. Hence, for the difference of densities we obtain 
$$\cU_\Lambda(\rho_+(\alpha)-\rho_-(\alpha))=\cV_\Lambda(\rhorefp-\rhorefm)-\alpha\cV_\Lambda\rho_{0,1}(Q_+(\alpha)-Q_-(\alpha))+\cO_{\alpha\to0}^{L^2\cap\cC}(1).$$
We have already seen that $\rho_{0,1}(Q_+(\alpha)-Q_-(\alpha))=\cO_{\alpha\to0}^{L^2\cap\cC}(1)$, and that $\rhorefp-\rhorefm=|\phi(\kappa\nur,\Lambda)|^2+\cO^{L^2\cap\cC}_{\alpha\to0}(\alpha)$, which implies
$$\cU_\Lambda(\rho_+(\alpha)-\rho_-(\alpha))=\cV_\Lambda|\phi(\kappa\nur,\Lambda)|^2+\cO_{\alpha\to0}^{L^2\cap\cC}(1).$$
By \eqref{eq:est-ULambda} and the estimate $(1+\alpha B_\Lambda(k))^{-1}\le1$ following from $B_\Lambda(k)\ge0$, it is enough to know that $|\phi|^2$ is bounded in $H^1$ to prove that $\cU_\Lambda(\rho_+(\alpha)-\rho_-(\alpha))=\cO_{\alpha\to0}^{L^2\cap\cC}(1)$. Since $\||\phi|^2\|_{H^1}\le C\|\phi\|_{H^2}^2$ and $\phi$ is an eigenvector of $D^{\kappa\nur}_\Lambda$, it is sufficient to control $\|\nabla(V_{\kappa\nur}\phi)\|_{L^2}$. By the Hardy-Littlewood-Sobolev together with $|\nabla V_\nu|\le C|\nu|\star|\cdot|^{-2}$, $\nu\in L^2$ implies that $\nabla V_\nu\in L^6$, and $\phi\in H^1\hookrightarrow L^3$ implies that $\phi\nabla V_{\kappa\nur}\in L^2$. Hence, $\phi=\cO^{H^2}_{\alpha\to0}(1)$ and $\cU_\Lambda(\rho_+(\alpha)-\rho_-(\alpha))=\cO^{L^2\cap\cC}_{\alpha\to0}(1)$. After these painful estimates, we obtain 
\begin{equation}\label{eq:dev-rho+rho-Oalpha}
\rho_+(\alpha)-\rho_-(\alpha)=\frac{1}{1+\alpha B_\Lambda}|\phi(\kappa\nur,\Lambda)|^2+\cO_{\alpha\to0}^{L^2\cap\cC}(\alpha), 
\end{equation}
which in turn implies
\begin{multline*}
Q_2=|\phi(\kappa\nur,\Lambda)\rangle\langle S(V_+)\phi(\kappa\nur,\Lambda)|+|S(V_+)\phi(\kappa\nur,\Lambda)\rangle\langle\phi(\kappa\nur,\Lambda)|\\
+\frac{1}{1+\alpha B_\Lambda}G_{1,0,-}^{(\kappa)}(|\phi(\kappa\nur,\Lambda)|^2)+\cO^\cQ_{\alpha\to0}(\sqrt{\alpha}). 
\end{multline*}
This leads to the expansion 
\begin{multline*}
\tr(|D^0|\{\Qlinp(\kappa\nur),\alpha Q_2\})=-2\alpha\text{Re}\langle\phi(\kappa\nur,\Lambda),\{|D^0|,G_{1,0}(\kappa\nur)\}S(V_+)\phi(\kappa\nur,\Lambda)\rangle\\
+\frac{\alpha}{1+\alpha B_\Lambda}\tr(|D^0|\{\Qlinp(\kappa\nur),G_{1,0,-}^{(\kappa)}(|\phi(\kappa\nur,\Lambda)|^2)\})+\cO_{\alpha\to0}(\alpha) 
\end{multline*}
As before we have 
$$\left|\langle\phi(\kappa\nur,\Lambda),\{|D^0|,G_{1,0}(\kappa\nur)\}S(V_+)\phi(\kappa\nur,\Lambda)\rangle\right|=\cO_{\alpha\to0}(1).$$
Furthermore, we write $\Qlinp(\kappa\nur)=-G_{1,0}(\kappa\nur)+\Qlinpr(\kappa\nur)$ and $G_{1,0,-}^{(\kappa)}(|\phi|^2)=G_{1,0}(|\phi|^2)+G_{1,0,-,r}^{(\kappa)}(|\phi|^2)$, so that 
\begin{multline*}
\tr(|D^0|\{\Qlinp(\kappa\nur),G_{1,0,-}^{(\kappa)}(|\phi|^2)\})=-\tr(|D^0|\{G_{1,0}(\kappa\nur),G_{1,0}(|\phi|^2)\})\\
-\tr(|D^0|\{G_{1,0}(\kappa\nur),G_{1,0,-,r}^{(\kappa)}(|\phi|^2)\})+\tr(|D^0|\{\Qlinpr(\kappa\nur),G_{1,0}(|\phi|^2)\})\\
+\tr(|D^0|\{\Qlinpr(\kappa\nur),G_{1,0,-,r}^{(\kappa)}(|\phi|^2)\}).
\end{multline*}
We know that $\|G_{1,0,-,r}^{(\kappa)}(|\phi|^2)\|_\cQ\le C\||\phi|^2\|_{L^2\cap\cC}\le C\|\phi\|_{H^1}^2=\cO_{\alpha\to0}(1)$, and $\|\Qlinpr(\kappa\nur)\|_\cQ=\cO_{\alpha\to0}(1)$. Hence,
$$\tr(|D^0|\{\Qlinpr(\kappa\nur),G_{1,0,-,r}^{(\kappa)}(|\phi|^2)\})=\cO_{\alpha\to0}(1).$$
By \cite[Lemma 6]{GraLewSer-09} and the estimates of the first part of the paper, we have for any $0\le\tau<1/2$ and for any $\rho\in L^2\cap\cC$,
$$\||D^0|^\tau G_{1,0}(\rho)\|_{\gS_2}\le C\|\rho\|_\cC,$$
$$\||D^0|^{\tau+1/2}\Qlinpr(\rho)\|_{\gS_2}\le C\|\rho\|_{L^2\cap\cC},$$
$$\||D^0|^{\tau+1/2}G_{1,0,-,r}^{(\kappa)}(\rho)\|_{\gS_2}\le C\|\rho\|_{L^2\cap\cC}.$$
Splitting $|D^0|=|D^0|^\tau|D^0|^{1-\tau}$ for some $0<\tau<1/2$, we thus get
$$-\tr(|D^0|\{G_{1,0}(\kappa\nur),G_{1,0,-,r}^{(\kappa)}(|\phi|^2)\})+\tr(|D^0|\{\Qlinpr(\kappa\nur),G_{1,0}(|\phi|^2)\})=\cO_{\alpha\to0}(1).$$
Finally,
$$\tr(|D^0|\{\Qlinp(\kappa\nur),\alpha Q_2\})=-\frac{\alpha}{1+\alpha B_\Lambda}\tr(|D^0|\{G_{1,0}(\kappa\nur),G_{1,0}(|\phi|^2)\})+\cO_{\alpha\to0}(\alpha).$$
We simplify the last expression in a general fashion. Let $\rho_1,\rho_2\in L^2\cap\cC$ and $T(\rho_1,\rho_2):=\tr(|D^0|\{G_{1,0}(\rho_1),G_{1,0}(\rho_2)\})$. Then,
\begin{eqnarray*}
 T(\rho_1,\rho_2) & = & \frac{1}{32\pi^3}\int_{B(0,\Lambda)^2}(E(p)+E(q))\hat{V_{\rho_1}}(p-q)\bar{\hat{V_{\rho_2}}(p-q)}\tr_{\C^4}(M(p,q)M(q,p))\d{p}\d{q} \\
 & = & \frac{1}{4\pi^3}\int_{B(0,\Lambda)^2}\hat{V_{\rho_1}}(p-q)\bar{\hat{V_{\rho_2}}(p-q)}\frac{1}{E(p)+E(q)}\left(1-\frac{1+p\cdot q}{E(p)E(q)}\right)\d{p}\d{q}\\
& = & \frac{1}{4\pi}\int_{B(0,2\Lambda)}\hat{V_{\rho_1}}(k)\bar{\hat{V_{\rho_2}}(k)}|k|^2B_ \Lambda(k)\d{k}\\
& = & 4\pi\int_{B(0,2\Lambda)}B_\Lambda(k)\frac{\hat{\rho_1}(k)\bar{\hat{\rho_2}(k)}}{|k|^2}\d{k}\\
& = & B_\Lambda D(\rho_1,\rho_2)-4\pi\int_{B(0,2\Lambda)}U_\Lambda(|k|)\frac{\hat{\rho_1}(k)\bar{\hat{\rho_2}(k)}}{|k|^2}\d{k}.
\end{eqnarray*}
where we recall that 
$$B_\Lambda(k)=\frac{1}{\pi^2|k|^2}\int_{\substack{|\ell+k/2|\le\Lambda \\ |\ell-k/2|\le\Lambda}}\frac{1}{E(\ell+k/2)+E(\ell-k/2)}\left(1-\frac{1+(\ell+k/2)\cdot(\ell-k/2)}{E(\ell+k/2)E(\ell-k/2)}\right)\d{\ell}.$$
In our case $\rho_1=\kappa\nur$ and $\rho_2=|\phi|^2$, since $\cU_\Lambda\nu\in\cC$ because $\nu\in L^2\cap\cC$, we deduce
\begin{multline*}
\tr(|D^0|\{G_{1,0}(\kappa\nur),G_{1,0}(|\phi|^2)\})=B_\Lambda D(\kappa\nur,|\phi|^2)+\cO_{\alpha\to0}(1)\\
=B_\Lambda\langle\phi(\kappa\nur,\Lambda),V_{\kappa\nur}\phi(\kappa\nur,\Lambda)\rangle+\cO_{\alpha\to0}(1), 
\end{multline*}
thus
$$\tr(|D^0|\{\Qlinp(\kappa\nur),\alpha Q_2\})=-\frac{\alpha B_\Lambda}{1+\alpha B_\Lambda}\langle\phi(\kappa\nur,\Lambda),V_{\kappa\nur}\phi(\kappa\nur,\Lambda)\rangle+\cO_{\alpha\to0}(\alpha).$$

\subsection{Estimate on $Q_3$.}

As for $Q_2$, we first write
$$Q_3=(G_{0,1,+}^{(\kappa)}-G_{0,1,-}^{(\kappa)})(Q_+(\alpha)+G_{1,0}(\kappa\nur))+G_{0,1,-}^{(\kappa)}(Q_+(\alpha)-Q_-(\alpha)).$$
By the residuum formula, 
\begin{multline*}
(G_{0,1,+}^{(\kappa)}-G_{0,1,-}^{(\kappa)})(Q_+(\alpha)+G_{1,0}(\kappa\nur))\\
=|\phi(\kappa\nur,\Lambda)\rangle\langle S(R_+)\phi(\kappa\nur,\Lambda)|+|S(R_+)\phi(\kappa\nur,\Lambda)\rangle\langle\phi(\kappa\nur,\Lambda)|, 
\end{multline*}
where
$$S(R_+):=\frac{P_{\phi}^\perp}{D^{\kappa\nur}_\Lambda-\lambda(\kappa\nur,\Lambda)}R(Q_+(\alpha)+G_{1,0}(\kappa\nur)),$$
so that $\|S(R_+)\|_{H^1\to H^1}\le C\|Q_+(\alpha)+G_{1,0}(\kappa\nur)\|_\cQ=\cO_{\alpha\to0}(1)$. Since $\|G_{0,1,-}^{(\kappa)}(Q)\|_\cQ\le C\|Q\|_\cQ$ and $Q_+(\alpha)-Q_-(\alpha)=|\phi\rangle\langle\phi|+\cO^\cQ_{\alpha\to0}(\sqrt{\alpha})$, we have
$$G_{0,1,-}^{(\kappa)}(Q_+(\alpha)-Q_-(\alpha))=G_{0,1,-}^{(\kappa)}(|\phi\rangle\langle\phi|)+\cO^\cQ_{\alpha\to0}(\sqrt{\alpha}).$$
This leads to the expression
\begin{multline*}
 \tr(|D^0|\{\Qlinp(\kappa\nur),\alpha Q_3\})=2\alpha\text{Re}\langle\phi(\kappa\nur,\Lambda),\{|D^0|,\Qlinp(\kappa\nur)\}S(R_+)\phi(\kappa\nur,\Lambda)\rangle\\
+\alpha\tr(|D^0|\{\Qlinp(\kappa\nur),G_{0,1,-}^{(\kappa)}(|\phi\rangle\langle\phi|)\})+\cO_{\alpha\to0}(\alpha).
\end{multline*}
Again we have
$$|\langle\phi(\kappa\nur,\Lambda),\{|D^0|,\Qlinp(\kappa\nur)\}S(R_+)\phi(\kappa\nur,\Lambda)\rangle|=\cO_{\alpha\to0}(1),$$
and 
\begin{multline*}
 \tr(|D^0|\{\Qlinp(\kappa\nur),G_{0,1,-}^{(\kappa)}(|\phi\rangle\langle\phi|)\})=-\tr(|D^0|\{G_{1,0}(\kappa\nur),G_{0,1}(|\phi\rangle\langle\phi|)\})\\
-\tr(|D^0|\{G_{1,0}(\kappa\nur),G_{0,1,-,r}^{(\kappa)}(|\phi\rangle\langle\phi|)\})+\tr(|D^0|\{\Qlinpr(\kappa\nur),G_{0,1,-}^{(\kappa)}(|\phi\rangle\langle\phi|)\})\\
+\tr(|D^0|\{\Qlinpr(\kappa\nur),G_{0,1,-,r}^{(\kappa)}(|\phi\rangle\langle\phi|)\}).
\end{multline*}
Since $\|\Qlinpr(\kappa\nur)\|_\cQ=\cO_{\alpha\to0}(1)$, $\|G_{0,1,-,r}^{(\kappa)}(|\phi\rangle\langle\phi|)\|_\cQ=\cO_{\alpha\to0}(1)$, $\|G_{0,1,-}^{(\kappa)}(|\phi\rangle\langle\phi|)\|_\cQ=\cO_{\alpha\to0}(1)$, $\|G_{1,0}(\kappa\nur)\|_{\gS_2}=\cO_{\alpha\to0}(1)$, and $\||D^0|G_{0,1,-,r}^{(\kappa)}(|\phi\rangle\langle\phi|)\|_{\gS_2}=\cO_{\alpha\to0}(1)$, we have
\begin{multline*}
 -\tr(|D^0|\{G_{1,0}(\kappa\nur),G_{0,1,-,r}^{(\kappa)}(|\phi\rangle\langle\phi|)\})+\tr(|D^0|\{\Qlinpr(\kappa\nur),G_{0,1,-}^{(\kappa)}(|\phi\rangle\langle\phi|)\})\\
+\tr(|D^0|\{\Qlinpr(\kappa\nur),G_{0,1,-,r}^{(\kappa)}(|\phi\rangle\langle\phi|)\})=\cO_{\alpha\to0}(1).
\end{multline*}
A short computation shows that
$$\tr(|D^0|\{G_{1,0}(\kappa\nur),G_{0,1}(|\phi\rangle\langle\phi|)\})=-\tr(G_{1,0}(\kappa\nur)R_{|\phi\rangle\langle\phi|}),$$
which implies
\begin{multline*}
|\tr(|D^0|\{G_{1,0}(\kappa\nur),G_{0,1}(|\phi\rangle\langle\phi|)\})|\le\|G_{1,0}(\kappa\nur)\|_{\gS_2}\|R_{|\phi\rangle\langle\phi|}\|_{\gS_2}\\
\le C\|\kappa\nur\|_\cC\|\phi\|_{H^1}^2=\cO_{\alpha\to0}(1). 
\end{multline*}
Finally, 
$$\tr(|D^0|\{\Qlinp(\kappa\nur),\alpha Q_3\})=\cO_{\alpha\to0}(\alpha).$$
The last term to estimate in \eqref{eq:dev-kinetic-energy} is thus $\langle\phi(\kappa\nur,\Lambda),\{|D^0|,Q_{-,r}(\alpha)\}\phi(\kappa\nur,\Lambda)\rangle$. We do so by writing 
$$|\langle\phi(\kappa\nur,\Lambda),\{|D^0|,Q_{-,r}(\alpha)\}\phi(\kappa\nur,\Lambda)\rangle|\le2\|Q_{-,r}(\alpha)\|_{H^1\to L^2}\|\phi(\kappa\nur,\Lambda)\|_{H^1}^2,$$
and it remains to estimate $\|Q_{-,r}(\alpha)|D^0|^{-1}\|_{L^2\to L^2}$. Recall that $Q_{-,r}(\alpha)=Q_-(\alpha)-\Qlinm(\kappa\nur)$, which we develop up to $\cO(\alpha)$. We have
$$Q_{-,r}(\alpha)=\alpha G_{0,1,-}^{(\kappa)}(G_{1,0}(\kappa\nur))+\alpha G_{1,0}(\rho_-(\alpha)-B_\Lambda\kappa\nur)+\cO^\cQ_{\alpha\to0}(\alpha),$$
and writing for any $Q$ that
\begin{multline*}
  G_{0,1,-}^{(\kappa)}(Q)|D^0|^{-1}=\frac{1}{2\pi}\int_\R\frac{1}{D^{\kappa\nur}_\Lambda-\mu_-+i\eta}\Pi_\Lambda R_Q|D^0|^{-1}\Pi_\Lambda\frac{1}{D^0_\Lambda-\mu_-+i\eta}\d{\eta}\\
+\frac{1}{2\pi}\int_\R\frac{1}{D^{\kappa\nur}_\Lambda-\mu_-+i\eta}\Pi_\Lambda R_Q\Pi_\Lambda\frac{1}{D^0_\Lambda-\mu_-+i\eta}\Pi_\Lambda V_{\kappa\nur}\Pi_\Lambda\frac{1}{D^{\kappa\nur}_\Lambda-\mu_-+i\eta}|D^0|^{-1}\d{\eta},
\end{multline*}
one obtains
$$\|G_{0,1,-}^{(\kappa)}(Q)|D^0|^{-1}\|_{L^2\to L^2}\le C\|R_Q|D^0|^{-1}\|_{L^2\to L^2}=\|R_Q\|_{H^1\to L^2}\le2\|Q\|_{\gS_2}$$
by the Hardy inequality. Together with $\|G_{1,0}(\kappa\nur)\|_{\gS_2}=\cO_{\alpha\to0}(1)$, this implies 
$$\|G_{0,1,-}^{(\kappa)}(G_{1,0}(\kappa\nur))|D^0|^{-1}\|_{L^2\to L^2}=\cO_{\alpha\to0}(1).$$
We also have $\|G_{1,0}(\rho)\|_{L^2\to L^2}\le\|G_{1,0}(\rho)\|_{\gS_2}\le C\|\rho\|_\cC$ for any $\rho$, and thus 
\begin{multline*}
\|G_{1,0}(\rho_-(\alpha)-B_\Lambda\kappa\nur)|D^0|^{-1}\|_{L^2\to L^2}\le\|G_{1,0}(\rho_-(\alpha)-B_\Lambda\kappa\nur)\|_{L^2\to L^2}\\
\le C\|\rho_-(\alpha)-B_\Lambda\kappa\nur\|_\cC=\cO_{\alpha\to0}(1). 
\end{multline*}
As a consequence,
$$|\langle\phi(\kappa\nur,\Lambda),\{|D^0|,Q_{-,r}(\alpha)\}\phi(\kappa\nur,\Lambda)\rangle|=\cO_{\alpha\to0}(\alpha),$$
and hence
\begin{multline}
 \tr_0(D^0Q_+(\alpha))-\tr_0(D^0Q_-(\alpha))=\left\langle\phi(\kappa\nur,\Lambda),D^0\phi(\kappa\nur,\Lambda)\right\rangle\\
-\frac{\alpha B_\Lambda}{1+\alpha B_\Lambda}\langle\phi(\kappa\nur,\Lambda),V_{\kappa\nur}\phi(\kappa\nur,\Lambda)\rangle+\cO_{\alpha\to0}(\alpha).  
\end{multline}

\bibliographystyle{siam}

\begin{thebibliography}{}

\end{thebibliography}


\begin{thebibliography}{10}

\bibitem{BacBarHelSie-99}
{\sc V.~Bach, J.~M. Barbaroux, B.~Helffer, and H.~Siedentop}, {\em On the
  stability of the relativistic electron-positron field}, Commun. Math. Phys.,
  201 (1999), pp.~445--460.

\bibitem{ChaIra-89}
{\sc P.~Chaix and D.~Iracane}, {\em From quantum electrodynamics to mean field
  theory: {I}. {T}he {B}ogoliubov-{D}irac-{F}ock formalism}, J. Phys. B, 22
  (1989), pp.~3791--3814.

\bibitem{DolEstSer-00}
{\sc J.~Dolbeault, M.~J. Esteban, and {\'E}.~S{\'e}r{\'e}}, {\em On the
  eigenvalues of operators with gaps. {A}pplication to {D}irac operators}, J.
  Funct. Anal., 174 (2000), pp.~208--226.

\bibitem{Dunne-09}
{\sc G.~V. Dunne}, {\em New strong-field {QED} effects at extreme light
  infrastructure}, EPJ D, 55 (2009), pp.~327--340.
\newblock 10.1140/epjd/e2009-00022-0.

\bibitem{GraHaiLewSer-12}
{\sc P.~Gravejat, C.~Hainzl, M.~Lewin, and E.~S\'er\'e}, {\em Construction of
  the {P}auli-{V}illars-regulated {D}irac vacuum in electromagnetic fields},
  (2012).
\newblock arXiv:1204.2893v2.

\bibitem{GraLewSer-09}
{\sc P.~Gravejat, M.~Lewin, and {\'E}.~S{\'e}r{\'e}}, {\em Ground state and
  charge renormalization in a nonlinear model of relativistic atoms}, Commun.
  Math. Phys., 286 (2009), pp.~179--215.

\bibitem{GraLewSer-11}
{\sc P.~Gravejat, M.~Lewin, and {\'E}.~S{\'e}r{\'e}}, {\em Renormalization and
  asymptotic expansion of {D}irac's polarized vacuum}, Comm. Math. Phys., 306
  (2011), pp.~1--33.

\bibitem{GreRei-77}
{\sc W.~Greiner and J.~Reinhardt}, {\em {Quantum electrodynamics of Strong
  Fields}}, Rep. Prog. Phys., 40 (1977), pp.~219--295.

\bibitem{Hainzl-04}
{\sc C.~Hainzl}, {\em On the vacuum polarization density caused by an external
  field}, Ann. Henri Poincar{\'e}, 5 (2004), pp.~1137--1157.

\bibitem{HaiLewSer-05a}
{\sc C.~Hainzl, M.~Lewin, and {\'E}.~S{\'e}r{\'e}}, {\em Existence of a stable
  polarized vacuum in the {B}ogoliubov-{D}irac-{F}ock approximation}, Commun.
  Math. Phys., 257 (2005), pp.~515--562.

\bibitem{HaiLewSer-05b}
\leavevmode\vrule height 2pt depth -1.6pt width 23pt, {\em Self-consistent
  solution for the polarized vacuum in a no-photon {QED} model}, J. Phys. A, 38
  (2005), pp.~4483--4499.

\bibitem{HaiLewSer-08}
\leavevmode\vrule height 2pt depth -1.6pt width 23pt, {\em Existence of atoms
  and molecules in the mean-field approximation of no-photon quantum
  electrodynamics}, Arch. Rational Mech. Anal., 192 (2009), pp.~453--499.

\bibitem{HaiLewSol-07}
{\sc C.~Hainzl, M.~Lewin, and J.~P. Solovej}, {\em The mean-field approximation
  in quantum electrodynamics: the no-photon case}, Comm. Pure Appl. Math., 60
  (2007), pp.~546--596.

\bibitem{HaiSie-03}
{\sc C.~Hainzl and H.~Siedentop}, {\em Non-perturbative mass and charge
  renormalization in relativistic no-photon quantum electrodynamics}, Commun.
  Math. Phys., 243 (2003), pp.~241--260.

\bibitem{HeiEul-36}
{\sc W.~Heisenberg and H.~{Euler}}, {\em {Folgerungen aus der Diracschen
  Theorie des Positrons}}, Zeitschrift fur Physik, 98 (1936), pp.~714--732.

\bibitem{KlaSch-77b}
{\sc M.~Klaus and G.~Scharf}, {\em The regular external field problem in
  quantum electrodynamics}, Helv Phys. Acta, 50 (1977), pp.~779--802.

\bibitem{KlaSch-77a}
\leavevmode\vrule height 2pt depth -1.6pt width 23pt, {\em Vacuum polarization
  in {F}ock space}, Helv. Phys. Acta, 50 (1977), pp.~803--814.

\bibitem{LieSie-00}
{\sc E.~H. Lieb and H.~Siedentop}, {\em Renormalization of the regularized
  relativistic electron-positron field}, Commun. Math. Phys., 213 (2000),
  pp.~673--683.

\bibitem{MohPluSof-98}
{\sc P.~J. Mohr, G.~Plunien, and G.~Soff}, {\em {QED} corrections in heavy
  atoms}, Phys. Rep., 293 (1998), pp.~227--369.

\bibitem{PicDur-08}
{\sc P.~Pickl and D.~D{\"u}rr}, {\em On adiabatic pair creation}, Commun. Math.
  Phys., 282 (2008), pp.~161--198.

\bibitem{ReiGreAre-71}
{\sc P.-G. {Reinhard}, W.~{Greiner}, and H.~{Arenh{\"o}vel}}, {\em {Electrons
  in strong external fields}}, Nuclear Physics A, 166 (1971), pp.~173--197.

\bibitem{Sab-phd}
{\sc J.~Sabin}, PhD thesis.

\bibitem{Sabin-11}
\leavevmode\vrule height 2pt depth -1.6pt width 23pt, {\em Static
  electron-positron pair creation in strong fields for a nonlinear {D}irac
  model}, Ann. Henri Poincar\'e,  (2011).
\newblock To appear.

\bibitem{Sauter-31}
{\sc F.~Sauter}, {\em {\"U}ber das verhalten eines elektrons im homogenen
  elektrischen feld nach der relativistischen theorie diracs}, Zeitschrift
  f{\"u}r Physik A Hadrons and Nuclei, 69 (1931), pp.~742--764.

\bibitem{Schwinger-51a}
{\sc J.~Schwinger}, {\em On gauge invariance and vacuum polarization}, Phys.
  Rev. (2), 82 (1951), pp.~664--679.

\bibitem{Shabaev-02}
{\sc V.~Shabaev}, {\em Two-time {G}reen's function method in quantum
  electrodynamics of high-{Z} few-electron atoms}, Phys. Rep., 356 (2002),
  pp.~119--228.

\bibitem{Sok-12}
{\sc J.~Sok}, {\em Existence of ground state of an electron in the
  {B}ogoliubov-{D}irac-{F}ock approximation.}
\newblock hal-00727875.

\end{thebibliography}

\end{document}